\documentclass[a4paper,11pt]{article}
\usepackage[utf8x]{inputenc}

\usepackage{oldgerm}
\usepackage{color}

\newcommand{\Ibb}[1]{ {\rm I\ifmmode\mkern -3.6mu\else\kern -.2em\fi#1}}
\newcommand{\ibb}[1]{\leavevmode\hbox{\kern.3em\vrule
     height 1.2ex depth -.3ex width .2pt\kern-.3em\rm#1}}
\newcommand{\Cl}{{\ibb C}}
\newcommand{\Rl}{{\Ibb R}}
\newcommand{\Nl}{{\Ibb N}}



\definecolor{lightgray}{rgb}{0.8,0.8,0.8}

\makeatletter

\makeatother
\newcommand{\refitem}[1] {\textit{\ref{#1})}}

\newcommand{\Om}{\Omega}
\newcommand{\om}{\omega}
\newcommand{\omti}{\tilde{\omega}}

\newcommand{\te}{\theta}

\newcommand{\la}{\lambda}
\newcommand{\La}{\Lambda}
\newcommand{\eps}{\varepsilon}



\newcommand{\A}{\mathcal{A}}
\newcommand{\R}{\mathcal{R}}
\newcommand{\B}{\mathcal{B}}

\newcommand{\M}{\mathcal{M}}

\newcommand{\Hil}{\mathcal{H}}

\newcommand{\Q}{\mathcal{Q}}

\newcommand{\DD}{\mathcal{D}}
\newcommand{\W}{\mathcal{W}}



\newcommand{\Ss}{\mathscr{S}}   

\newcommand{\Le}{{\mathscr L}}

\newcommand{\BU}{\underline{\mathscr{S}}}

\newcommand{\pol}{\mathscr{P}}

\newcommand{\Nu}{\mathscr{N}}





\newcommand{\Cti}{\tilde{C}}

\newcommand{\Rti}{\tilde{R}}

\newcommand{\Mhat}{\widehat{\cal M}}

\newcommand{\fti}{\tilde{f}}

\newcommand{\hti}{\tilde{h}}
\newcommand{\lti}{\tilde{l}}
\newcommand{\uti}{\tilde{u}}

\newcommand{\rti}{\tilde{r}}
\newcommand{\Wti}{\tilde{W}}

\newcommand{\lhat}{\hat{l}}
\newcommand{\rhat}{\hat{r}}

\newcommand{\lcheck}{\check{l}}
\newcommand{\rcheck}{\check{r}}

\newcommand{\gti}{\tilde{g}}
\newcommand{\kti}{\widetilde{k}}


\newcommand{\LGpo}{\mathcal{L}_+^\uparrow}

\newcommand{\LG}{\mathcal{L}}

\newcommand{\PGpo}{\mathcal{P}_+^\uparrow}   

\newcommand{\PG}{\mathcal{P}}




\def\bp{{\boldsymbol{p}}}
\def\bx{{\boldsymbol{x}}}
\def\by{{\boldsymbol{y}}}
\def\bz{{\boldsymbol{z}}}

\def\bk{{\mbox{\boldmath{$k$}}}}
\def\bl{{\boldsymbol{l}}}
\def\br{{\boldsymbol{r}}}


\def\bmu{{\boldsymbol{\mu}}}





\newcommand{\OO}{O}

\newcommand{\supp}{\text{supp}\,}

\newcommand{\dom}{\mathrm{dom}\,}

\newcommand{\im}{\mathrm{Im}}


\newcommand{\ot}{\otimes}

\newcommand{\oth}{\,\hat{\otimes}\,}







\newcommand{\zd}{z^{\dagger}}

\newcommand{\ad}{a^{\dagger}}

\usepackage{amsmath}
\usepackage{amsfonts}
\usepackage{amssymb}
\usepackage{mathrsfs}
\usepackage[pdfborder={0 0 0}]{hyperref}
\usepackage{color}
\usepackage[T1]{fontenc}
\usepackage{paralist}


\newtheorem{theorem}{Theorem}[section]
\newtheorem{proposition}[theorem]{Proposition}
\newtheorem{lemma}[theorem]{Lemma}

\newtheorem{definition}[theorem]{Definition}

\newenvironment{proof}{\medskip \noindent {\em Proof:}}{\hfill $\square$ \\[2mm] \indent}

\newenvironment{propositionlist}{\begin{compactenum}[\itshape i.)]}{\end{compactenum}}

\numberwithin{equation}{section}



\setlength{\unitlength}{1mm}

\newlength{\dinwidth}
\newlength{\dinmargin}

\setlength{\dinwidth}{21.0cm}
\setlength{\textwidth}{15.7cm} 
\setlength{\textheight}{23.0cm} 

\setlength{\dinmargin}{\dinwidth}
\addtolength{\dinmargin}{-\textwidth}
\setlength{\dinmargin}{0.5\dinmargin}
\setlength{\oddsidemargin}{-1.0in}
\addtolength{\oddsidemargin}{\dinmargin}
\setlength{\evensidemargin}{\oddsidemargin}
\setlength{\marginparwidth}{0.9\dinmargin}
\setlength{\marginparsep}{8pt}
\setlength{\marginparpush}{5pt}

\setlength{\columnseprule}{0mm}
\setlength{\columnsep}{7mm}

\setlength{\topmargin}{-0.5in}
\setlength{\headheight}{30pt}
\setlength{\headsep}{10pt}

\setlength{\footskip}{20pt}

\title{Deformations of quantum field theories and integrable models}
\author{Gandalf Lechner\footnote{Supported by FWF project P22929--N16 "Deformations of Quantum Field Theories".}
\\
\vspace*{-1mm}
\\
\footnotesize Faculty of Physics, University of Vienna, \\
\footnotesize Boltzmanngasse 5, A-1090 Vienna, Austria\\
\footnotesize \tt gandalf.lechner@univie.ac.at
}

\begin{document}

\maketitle

\begin{abstract}
	Deformations of quantum field theories which preserve Poincar\'e covariance and localization in wedges are a novel tool in the analysis and construction of model theories. Here a general scenario for such deformations is discussed, and an infinite class of explicit examples is constructed on the Borchers-Uhlmann algebra underlying Wightman quantum field theory. These deformations exist independently of the space-time dimension, and contain the recently studied warped convolution deformation as a special case. In the special case of two-dimensional Minkowski space, they can be used to deform free field theories to integrable models with non-trivial S-matrix.
\end{abstract}

\section{Introduction}\label{section:introduction}

In the last years, many new quantum field theoretic models have been constructed with non-standard methods \cite{Schroer:1997-1,SchroerWiesbrock:2000-1,BrunettiGuidoLongo:2002,Lechner:2003,LongoRehren:2004,Lechner:2005,MundSchroerYngvason:2006,GrosseLechner:2007,BuchholzSummers:2007,BuchholzSummers:2008,GrosseLechner:2008,BuchholzLechnerSummers:2010,LongoWitten:2010,DybalskiTanimoto:2010}. Among the different approaches used for constructing these models, a recurring theme is to start with a well-understood model (like a free field theory), and then apply some kind of deformation to change it to a model with non-trivial interaction. As is well known, it is extremely complicated to carry out such a procedure on a non-perturbative level when requiring that it should keep the full covariance, spectral and locality properties of quantum field theory intact. However, interesting manageable examples do exist when the locality requirements are somewhat weakened.

More precisely, there exist many models of quantum fields which are not point-like localized, but rather localized in certain unbounded, wedge-shaped regions (wedges) in Minkowski space \cite{SchroerWiesbrock:2000-1,BrunettiGuidoLongo:2002,Lechner:2003,LongoRehren:2004,BuchholzSummers:2007,GrosseLechner:2007,BuchholzLechnerSummers:2010}. These models are still fully Poincar\'e covariant and comply with Einstein causality inasmuch that observables with spacelike separated wedges commute. Using the algebraic framework of quantum field theory \cite{Haag:1996}, it is also in principle possible \cite{BuchholzLechner:2004} to extract all observables localized in bounded spacetime regions. Moreover, the localization in wedges is sharp enough to consistently compute the two-particle scattering matrix \cite{BorchersBuchholzSchroer:2001}, and decide if the constructed model exhibits non-trivial interaction.

In view of these facts, wedge-local quantum field theories have many of the characteristic features of fully local quantum field theories, and understanding their structure is an important intermediate step in the rigorous construction of interacting models. It is therefore interesting to note that it is possible to construct wedge-local quantum field theories non-perturbatively, and introduce non-trivial interaction by deformation techniques.

A particular deformation of this kind, based on actions of the translation group, is by now well understood. After its first appearance in the context of deformed free field theories on non-commutative Minkowski space \cite{GrosseLechner:2007}, it was generalized to an operator-algebraic setting in \cite{BuchholzSummers:2008}, where it is known as warped convolution. In the framework of Wightman field theories, this deformation manifests itself as a deformation of the tensor product of the testfunction algebra \cite{GrosseLechner:2008}, and later on, the connection to Rieffel's strict deformation quantization \cite{Rieffel:1992} was explored \cite{BuchholzLechnerSummers:2010}. By now, the warped convolution technique has also successfully been applied to the deformation of conformal field theories \cite{DybalskiTanimoto:2010} and quantum field theories on curved spacetimes \cite{DappiaggiLechnerMorfaMorales:2010}.

In this paper we start to explore more general deformations of wedge-local quantum field theories. As a first scenario for such deformations, we focus here on Wightman quantum field theories \cite{StreaterWightman:1964,Jost:1965}. Any Wightman quantum field theory is given by a specific representation of the tensor algebra $\BU$ over Schwartz' function space $\Ss(\Rl^d)$. The deformations studied here are based on linear homeomorphisms $\rho:\BU\to\BU$ commuting with the natural Poincar\'e automorphisms $\alpha_{x,\La}$ on $\BU$, for $(x,\La)$ in a subgroup of the Poincar\'e group which models the geometry of a reference wedge. We then equip $\BU$ with a family of new products, namely $f,g\mapsto \rho^{-1}(\rho(f)\ot\rho(g))$, and Lorentz transforms thereof. Every single of these products provides only a trivial deformation of the tensor product $\ot$, but their interplay with the local structure of $\BU$ gives rise to non-trivial deformations of a net of algebras localized in wedges. If a compatibility condition between $\rho$ and a state $\om$ on $\BU$ is met, one can pass to suitable GNS representations, where all twisted product structures are represented on the same Hilbert space. Here we obtain new quantum field theoretic models, which are wedge-local under further conditions on the deformation map $\rho$ and the state $\om$.

In Section \ref{section:generaldeformations}, we explain these deformations in a general setting. The main task of finding interesting examples of deformation maps is taken up in Section \ref{section:examples}. Here we consider a simple class of such mappings $\rho$, given by sequences of $n$-point functions, and their compatible states. We show that by carefully adjusting these $n$-point functions, one arrives at an infinite class of deformations, leading to new Poincar\'e covariant and wedge-local model theories in any number of space-time dimensions. These models are investigated in more detail in Section \ref{section:FockSpace}, where Hilbert space representations of deformed quantum fields are presented, and it is shown that they describe non-trivial interaction. The two-particle S-matrix can be calculated explicitly, and depends on the deformation parameter.

The representing quantum fields are typically unbounded operators. In Section \ref{section:modular} we show how to pass from these fields to associated von Neumann algebras, and analyze their Tomita-Takesaki modular structure.

In Section \ref{section:IntegrableModels}, we consider the special case of two-dimensional Minkowski space. Here our construction yields a known family of completely integrable quantum field theories. It is shown that the structure of the deformation maps implies characteristic features of the S-matrix, such as its analyticity and crossing properties. Section \ref{section:Conclusions} contains our conclusions.

\section{Deformation maps on the Borchers-Uhlmann algebra}\label{section:generaldeformations}

In this section, we formulate a general deformation scenario for Wightman quantum field theories, based on the tensor algebra $\BU$ over the Schwartz space $\Ss(\Rl^d)$. We will assume that the space-time dimension $d\geq1+1$ is even, as this slightly simplifies our discussion in some places. Most of the following can also be formulated in a vastly more general setting of quite general topological ${}^*$-algebras, but since the examples to be discussed later make use of the specific structure of $\BU$, we restrict our considerations to this particular algebra also in this section.

Let us first recall the structure of the Borchers-Uhlmann algebra $\BU$ \cite{Borchers:1962, Uhlmann:1962}: As a topological vector space, $\BU=\bigoplus_{n=0}^\infty \Ss_n$ is the locally convex direct sum of the Schwartz spaces $\Ss_n:=\Ss(\Rl^{n d})$, $n\geq0$, with $\Ss_0:=\Cl$. Elements of $\BU$ are thus terminating sequences $f=(f_0,f_1,f_2,...,f_N,0,...\,)$, $f_n\in\Ss_n$. Equipped with the tensor product
\begin{align}\label{eq:ProductBU}
	(f\ot g)_n(x_1,...,x_n)
	:=
	\sum_{k=0}^n f_k(x_1,...,x_k)\cdot g_{n-k}(x_{k+1},...,x_n)
	\,,\qquad
	x_1,...,x_n\in\Rl^d\,,
\end{align}
${}^*$-involution
\begin{align}\label{eq:StarBU}
	{f^*}_n(x_1,...,x_n)
	:=
	\overline{f_n(x_n,...,x_1)}
	\,,
\end{align}
and unit $1_n:=\delta_{n,0}$, the linear space $\BU$ becomes a unital topological ${}^*$-algebra.

On $\BU$, the proper orthochronous Poincar\'e group acts by the continuous automorphisms
\begin{align}\label{eq:AlphaALa}
	(\alpha_{a,\La}f)_n(x_1,...,x_n)
	:=
	f_n(\La^{-1}(x_1-a),...,\La^{-1}(x_n-a))
	\,,\qquad
	(a,\La)\in\PGpo\,.
\end{align}
For our purposes, it is advantageous to implement time-reversing Lorentz
transformations by {\em anti}\,linear maps on $\BU$. In particular, the reflection
$j(x^0,...,x^{d-1}):=(-x^0,-x^1,x^2,...,x^{d-1})$ acts on $\BU$ according to
\begin{align*}
	(\alpha_j f)_n(x_1,...,x_n)
	:=
	\overline{f_n(jx_1,...,jx_n)}
	\,,
\end{align*}
and yields an extension of $\alpha$ to an automorphic action of the proper Poincar\'e group $\PG_+$ on $\BU$ (antilinear for $\PG_+^\downarrow$).

We define the support $\supp f$ of an element $f\in\BU$ as the smallest closed set $\OO$ in $\Rl^d$ such that $\supp f_n\subset \OO^{\times n}$ for all $n\geq1$. Given $\OO\subset\Rl^d$ and $f_n\in\Ss_n$, we will also write $\supp f_n\subset \OO$ or $f_n\in\Ss_n(\OO)$ instead of $\supp f_n\subset\OO^{\times n}$. With this definition of support, the set $\BU(\OO):=\{f\in\BU\,:\,\supp f\subset \OO\}$ is a unital ${}^*$-subalgebra of $\BU$, for any $\OO\subset\Rl^d$. Since $\supp \alpha_{x,\La}(f)=\La\,\supp f +x$, the automorphisms $\alpha_{x,\La}$ act covariantly on the net $\OO\mapsto\BU(\OO)$,
\begin{align}
	\alpha_{x,\La}(\BU(\OO))
	=
	\BU(\La\OO+x)
	\,.
\end{align}
This net becomes {\em local}, i.e., subalgebras $\BU(\OO_1)$, $\BU(\OO_2)\subset\BU$ associated with spacelike separated regions $\OO_1\subset\OO_2'$ commute, after dividing by the so-called {\em locality ideal} \cite{Borchers:1962, Yngvason:1984}, the two-sided ideal $\Le\subset\BU$ generated by all commutators $f_1\ot g_1-g_1\ot f_1$, with $f_1,g_1\in\Ss_1$ having spacelike separated supports.

We will consider states on $\BU$ subsequently, and introduce here some notation regarding GNS representations. For a state $\om$ on $\BU$, we write $(\Hil_\om,\phi_\om,\Om_\om)$ for the GNS triple associated with $(\BU,\om)$, and $\DD_\om:=\phi_\om(\BU)\Om_\om\subset\Hil_\om$ for the (dense) domain of the representing field operators. The equivalence classes $\{f+g\in\BU\,:\,\om(g^*\ot g)=0\}$ will be denoted $\Psi_\om(f)\in\DD_\om$. Thus $\Psi_\om(1)=\Om_\om$, and the fields act on $\DD_\om$ according to $\phi_\om(f)\Psi_\om(g)=\Psi_\om(f\ot g)$. As $\phi_\om$ is a representation, we have $\phi_\om(f)\phi_\om(g)=\phi_\om(f\ot g)$ and $\phi_\om(f)^*\supset\phi_\om(f^*)$. The represented localized field algebras are denoted $\pol_\om(\OO):=\phi_\om(\BU(\OO))$, $\OO\subset\Rl^d$.
\\
\\
Quantum field theories arise from the tensor algebra $\BU$ as GNS-representations in suitable states \cite{StreaterWightman:1964}. For a state $\om$ which vanishes on the locality ideal $\Le$, field operators $\phi_\om(f)$ and $\phi_\om(g)$ commute on $\DD_\om$ if the supports of $f$ and $g$ are spacelike separated. If $\om$ is also invariant under the automorphisms $\alpha_{x,\La}$ (invariant up to a conjugation for time-reversing $\La$), there also exists an (anti-)unitary representation $U_\om$ of the proper Poincar\'e group on $\Hil_\om$ which implements the automorphisms $\alpha_{x,\La}$. In this case, we obtain the familiar structure of a covariant net of local ${}^*$-algebras:
\begin{align}\label{eq:Net}
	\nonumber
	\pol_\om(\OO_1) \subset \pol_\om(\OO_2)\;&\text{ for }\;\OO_1\subset\OO_2\,,\\
	U_\om(x,\La)\pol_\om(\OO)U_\om(x,\La)^{-1} &= \pol_\om(\La \OO+x)\,,\\
	[\pol_\om(\OO_1),\,\pol_\om(\OO_2)]\DD_\om &=0\;\text{ for }\; \OO_1\subset\OO_2'
	\nonumber
	\,,
\end{align}
where $\OO_2'$ denotes the causal complement of $\OO_2$ in $\Rl^d$. For vacuum states, one is interested in the situation where the translations $x\mapsto U_\om(x,1)$ fulfill the spectrum condition. In this case, also a Reeh-Schlieder property holds, i.e., the subspace $\pol_\om(\OO)\Om_\om$ is dense in $\Hil_\om$ for any open region $\OO\subset\Rl^d$.

On a technical level, note that the field operators $\phi_\om(f)$, $f\in\BU$, are densely defined on the common $U_\om$-invariant domain $\DD_\om$ and closable, but in general unbounded. Several conditions on $\om$ are known which imply that one can pass from such a net of unbounded operators to nets of von Neumann algebras on $\Hil_\om$ \cite{BorchersZimmermann:1963,DriesslerSummersWichmann:1986,Buchholz:1990-1,BorchersYngvason:1990}. We will however not deal with this question here, and consider only algebras of unbounded operators.
\\
\\
The construction of states which annihilate $\Le$ and satisfy the spectrum condition has proven to be extremely difficult. In more than two space-time dimensions, only states leading to (generalized) free field theories are known. In view of these difficulties, we will not attempt a direct construction of quantum field theories by finding suitable states $\om$ on $\BU$, but rather use a deformation approach.

To explain this approach, we first recall that in the construction of many models discussed in the recent literature \cite{Borchers:1992,BrunettiGuidoLongo:2002,Lechner:2003,LongoRehren:2004,BuchholzSummers:2007,GrosseLechner:2007,Lechner:2008,BuchholzLechnerSummers:2010,DybalskiTanimoto:2010,Mund:2010}, a specific weakened version of the net structure \eqref{eq:Net} plays a prominent role. Instead of algebras $\pol_\om(\OO)$ associated with arbitrarily small spacetime regions $\OO$, one considers only specific regions, so-called {\em wedges}. Recall that the {\em right wedge} is the region $W_0:=\{(x^0,...,x^{d-1})\in\Rl^d\,:\,x^1>|x^0|\}$, and the set $\W$ of all wedges is the Poincar\'e orbit of $W_0$, i.e.,  $\W:=\{\La W_0+x\,:\,(x,\La)\in\PG_+\}$. In particular, the causal complement $W'$ of a wedge $W\in\W$ is also contained in $\W$, and for our reference region $W_0$, there holds $W_0'=-W_0=jW_0$, with $j(x^0,...,x^{d-1}):=(-x^0,-x^1,x^2,...,x^{d-1})$ the reflection at the edge of $W_0$.

In the context of the GNS data $\DD_\om\subset \Hil_\om$, $U_\om$ described before, a {\em wedge-local quantum field theory} is defined to be a collection of ${}^*$-algebras $\pol_\om(W)$, $W\in\W$, consisting of operators defined on $\DD_\om$, such that the properties \eqref{eq:Net} hold for $\OO,\OO_1,\OO_2\in\W$. Since $\W$ consists only of a single Poincar\'e orbit, such a net can be equivalently characterized in terms of a {\em single} algebra \cite{BuchholzSummers:2008,BaumgrtelWollenberg:1992} $\pol_0$ of operators acting on $\Hil_\om$ by requiring
\begin{align}
	U_\om(x,\La)\pol_0 U_\om(x,\La)^{-1}&\subset \pol_0\;\text{ for }\;\La W_0+x\subset W_0
	\label{eq:WedgeAlgebraCovariance}
	\,,\\
	[U_\om(0,j)\pol_0 U_\om(0,j),\,\pol_0]\DD_\om &=0\,.
	\label{eq:WedgeAlgebraLocality}
\end{align}
It is then straightforward to verify that $\pol(\La W_0+x):=U(x,\La)\pol_0 U(x,\La)^{-1}$ defines a wedge-local quantum field theory (A {\em simple causal net} in the terminology of \cite{BaumgrtelWollenberg:1992}).

Clearly any net $\OO\mapsto\pol_\om(\OO)$ \eqref{eq:Net} also defines such a wedge algebra $\pol_0$. But, as we shall see, an algebra $\pol_0$ satisfying the conditions \eqref{eq:WedgeAlgebraCovariance} and \eqref{eq:WedgeAlgebraLocality} with respect to a given representation $U_\om$ of $\PG_+$ is much easier to construct than a full net \eqref{eq:Net}. Moreover, after passing to a net of von Neumann algebras, one can in principle extract algebras of observables localized in arbitrary spacetime regions from these data \cite{BuchholzSummers:2008,Borchers:1992,BuchholzLechner:2004}.

In the deformation approach, one takes the point of view that a fully local and covariant quantum field theory in the sense of \eqref{eq:Net} is given. These data will usually be realized by free field theories, and in particular define an operator algebra $\pol_0$ and a representation $U_\om$ in a suitable relative position on some Hilbert space $\Hil_\om$. One then keeps $\Hil_\om$ and $U_\om$ fixed, and changes (deforms) the algebra $\pol_0$ in such a manner that \eqref{eq:WedgeAlgebraCovariance} and \eqref{eq:WedgeAlgebraLocality} remain valid. For suitably chosen deformations, this process leads to inequivalent nets, and in particular turns interaction-free theories into models with non-trivial interaction.
\\
\\
To find examples of deformations preserving the two conditions \eqref{eq:WedgeAlgebraCovariance} and \eqref{eq:WedgeAlgebraLocality}, one possible approach is to take the point of view that a deformation of an algebra is a deformation of the product of that algebra. This is the approach taken in the deformation theory of algebras in the mathematics literature \cite{Gerstenhaber:1964}, which has already led to deformations of quantum field theories in certain examples \cite{GrosseLechner:2008,BuchholzSummers:2008,BuchholzLechnerSummers:2010}.

By a product on $\BU$, we will always mean a bilinear separately continuous map $f,g\mapsto f\oth g$, which is associative and moreover compatible with the unit and star involution in $\BU$, i.e.,
\begin{align}
	f\oth 1&=f= 1\oth f\,,\qquad f\in\BU,\\
	(f\oth g)^* &= g^*\oth f^*\,,\qquad \;\;f,g\in\BU\,.
\end{align}
The structure of the family of such products clearly depends on the structure of the algebra under consideration. In the situation at hand, where $\BU$ is a tensor algebra, it is known that $\BU$ is rigid in the sense of algebraic deformation theory \cite{Gerstenhaber:1964}. That is, all products $\oth$ on $\BU$ are of the form
\begin{align}\label{eq:RhoProduct}
	f\oth g
	=
	\rho^{-1}(\rho(f)\ot \rho(g))
	=:
	f\ot_\rho g
	\,,
\end{align}
where $\rho:\BU\to\BU$ is a linear homeomorphism with $\rho(1)=1$ and $\rho(f)^*=\rho(f^*)$. Clearly, the algebra $\BU^\rho:=(\BU,\ot_\rho)$ given by the linear space $\BU$, endowed with the product $\ot_\rho$, and unchanged unit and involution, is isomorphic as a unital ${}^*$-algebra to $\BU$. This is the reason why products of the form \eqref{eq:RhoProduct} are considered trivial in the deformation theory of single algebras \cite{Gerstenhaber:1964}, and $\BU$ is rigid. But we will see later that the use of such trivial deformations will result in non-trivial deformations of nets of wedge algebras nonetheless, as also the local structure of $\BU$ matters here.

For deformations compatible with localization in wedges, the invariance property \eqref{eq:WedgeAlgebraCovariance} suggests to require a certain amount of compatibility between the deformation map $\rho$ and the Poincar\'e action $\alpha$. We therefore make the following definition.

\begin{definition}\label{definition:DeformationMap}
A deformation map (relative to $W_0$) is a linear homeomorphism $\rho:\BU\to\BU$ such that
\begin{enumerate}
	\item $\rho(1)=1$.
	\item $\rho(f)^*=\rho(f^*)$, $f\in\BU$.
	\item $\rho\circ\alpha_{x,\La}=\alpha_{x,\La}\circ \rho$ for all $(x,\La)\in\PG_+$ with $\La W_0+x\subset W_0$.
\end{enumerate}
\end{definition}

We remark that the third condition in this definition is equivalent to
\begin{align}\label{eq:RhoCovarianceExplicit}
	\rho\circ\alpha_{x,\La}=\alpha_{x,\La}\circ \rho
	\qquad \text{ for all } x\in\Rl^d\text{ and all }\La \text{ with } \La W_0=W_0\,.
\end{align}
This is due to the special form of the wedge regions: First, there holds $W_0+x\subset W_0$ for all $x\in\overline{W_0}$. Hence $\alpha_x\circ\rho=\rho\circ\alpha_x$ for all $x\in\overline{W_0}$. Multiplying by $\alpha_{-x}$ from both sides, we see that this equation also holds for $x\in-\overline{W_0}$. As any $y\in\Rl^d$ can be written as $y=x+x'$ with $x\in\overline{W_0}$ and $x'\in-\overline{W_0}$, this implies that $\rho$ must commute with {\em all} translations. Second, if a Poincar\'e transformation $(x,\La)$ maps $W_0$ inside itself, then necessarily $\La W_0=W_0$ \cite{ThomasWichmann:1997}. This explains the equivalence of \eqref{eq:RhoCovarianceExplicit} with Definition \ref{definition:DeformationMap} {\em iii)}. To summarize, a deformation map has to preserve the linear, topological, unital, and ${}^*$-structure of $\BU$, and commute with the automorphisms $\alpha_{x,\La}$ for $(x,\La)$ in a specific subgroup of $\PG_+$, which models the geometry of the wedge region $W_0$.

The properties required in Definition \ref{definition:DeformationMap} are stable under composition and taking inverses. With identity as the identity map on $\BU$, the deformation maps therefore form a group $\R$. In deformation theory, one is usually interested in studying certain one-parameter families $\rho_\la\in\R$, $\la\in\Rl$, such that $\la\mapsto\rho_\la$ is continuous in an appropriate sense, and $\rho_0=\rm id$. We will see examples of such one parameter families in Section \ref{section:examples}. For the present general considerations, it will be sufficient to consider deformation maps as such, without introducing a deformation parameter.
\\
\\
Given any deformation map, the product \eqref{eq:RhoProduct} will be referred to as the {\em associated deformed product} on $\BU$. In view of Definition \ref{definition:DeformationMap} {\em iii)}, the maps $\alpha_\La$ with $\La W_0=W_0$  act as automorphisms also with respect to the product $\ot_\rho$. For general $\La\in\LG_+$, one has $\alpha_\La(f\ot_\rho g)=\alpha_\La(f)\ot_{\rho_\La}\alpha_\La(g)$, where $\rho_\La:=\alpha_\La \circ \rho \circ  \alpha_\La^{-1}$ is a deformation map relative to $\La W_0$, and in general $\rho_\La\neq\rho$. We therefore obtain a whole family of products $\ot_{\rho_\La}$, parametrized by the Lorentz group modulo the stabilizer group of the wedge. This family includes in particular the {\em opposite deformation map}
\begin{align}\label{eq:OppositeRho}
	\rho'
	:=
	\alpha_j \circ \rho\circ\alpha_j
	\,.
\end{align}
To construct a wedge-local quantum field theory from this family of deformed products, we have to represent all the algebras $(\BU,\ot_{\rho_\La})$, $\La\in\LG_+$, on a common Hilbert space. This is possible in specific GNS representations.

\begin{definition}\label{definition:CompatibleState}
	A state $\om$ on $\BU$ is called compatible with a deformation map $\rho$ if
	\begin{align}\label{eq:RhoCompatabilityOmega}
		\om(f\ot_\rho g)
		&=
		\om(f\ot g)
		\,,\qquad
		f,g\in\BU\,.
	\end{align}
\end{definition}

Note that this definition does {\em not} imply that multiple deformed products reduce to undeformed products in $\om$, i.e., in general $\om(f_1 \ot_\rho ... \ot_\rho f_n)\neq\om(f_1\ot ... \ot f_n)$ for $n>2$. We are interested in compatible states because they produce common representation spaces for deformed and undeformed tensor products via the GNS construction.

\begin{proposition}\label{proposition:GNSRepresentationOfCompatibleDeformation}
	Let $\rho$ be a deformation map and $\om$ a $\rho$-compatible state. Then $\om$ is also a state on $\BU^\rho$, and the GNS triples $(\Hil_\om,\phi_\om,\Om_\om)$ of $(\BU,\om)$ and $(\Hil^\rho_\om,\phi^\rho_\om,\Om^\rho_\om)$ of $(\BU^\rho,\om)$ are related by
	\begin{align}
		\Hil^\rho_\om
		&=
		\Hil_\om\,,\\
		\Om^\rho_\om
		&=
		\Om_\om\,,\\
		\phi^\rho_\om(f)\phi_\om(g)\Om_\om
		&=
		\phi_\om(f\ot_\rho g)\Om_\om
		\,,\qquad
		f,g\in\BU\,.\label{eq:PhiRhoOnH}
	\end{align}
\end{proposition}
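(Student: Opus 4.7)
The strategy is to recognize that the compatibility condition (2.13) immediately identifies the two GNS sesquilinear forms, so the two GNS constructions coincide at the level of pre-Hilbert spaces, and the claimed identity (2.17) then follows by unwinding definitions.

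First I would verify that $\om$ really defines a state on $\BU^\rho$. Normalization is automatic since $1$ is the unit for both $\ot$ and $\ot_\rho$, so $\om(1)=1$ transfers. For positivity, combine Definition \ref{definition:DeformationMap} \textit{ii)} (which guarantees that the ${}^*$-involution on $\BU^\rho$ agrees with that on $\BU$) with the compatibility identity to get
\begin{equation*}
\om(f^*\ot_\rho f)=\om(f^*\ot f)\geq 0\,,\qquad f\in\BU\,.
\end{equation*}
Continuity of $\om$ on $\BU^\rho$ is inherited from continuity on $\BU$ together with separate continuity of $\ot_\rho$, which in turn is a consequence of $\rho$ being a linear homeomorphism.

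Next I would compare the GNS data directly. The Gelfand ideals
\begin{equation*}
\NN:=\{g\in\BU:\om(g^*\ot g)=0\}\,,\qquad \NN^\rho:=\{g\in\BU:\om(g^*\ot_\rho g)=0\}
\end{equation*}
coincide by the compatibility identity, so the quotients $\BU/\NN=\BU^\rho/\NN^\rho$ agree as vector spaces. Applying (2.13) once more with $f^*$ in place of $f$, the two inner products $\langle\Psi_\om(f),\Psi_\om(g)\rangle=\om(f^*\ot g)$ and $\langle\Psi^\rho_\om(f),\Psi^\rho_\om(g)\rangle^\rho=\om(f^*\ot_\rho g)$ agree on all pairs. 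Hence the two pre-Hilbert spaces are literally the same, their completions give $\Hil^\rho_\om=\Hil_\om$, and the cyclic vectors coincide, $\Om^\rho_\om=\Psi^\rho_\om(1)=\Psi_\om(1)=\Om_\om$.

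Finally I would establish (2.17) by chasing definitions: using $\phi_\om(g)\Om_\om=\Psi_\om(g)=\Psi^\rho_\om(g)$ and the defining property $\phi^\rho_\om(f)\Psi^\rho_\om(g)=\Psi^\rho_\om(f\ot_\rho g)$ of the deformed GNS representation,
\begin{equation*}
\phi^\rho_\om(f)\phi_\om(g)\Om_\om
=\phi^\rho_\om(f)\Psi^\rho_\om(g)
=\Psi^\rho_\om(f\ot_\rho g)
=\Psi_\om(f\ot_\rho g)
=\phi_\om(f\ot_\rho g)\Om_\om\,.
\end{equation*}
No step here is really hard: the content of the proposition is essentially bookkeeping around the single algebraic observation that $\ot_\rho$ and $\ot$ are indistinguishable at the level of the two-point functional $\om(\,\cdot\ot_\rho\cdot\,)$. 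The only point requiring mild care is checking that the ${}^*$-involution and continuity structure transport to $\BU^\rho$, which is ensured by Definition \ref{definition:DeformationMap} \textit{ii)} and the homeomorphism property of $\rho$; this is the closest thing to an obstacle, and it is minor.
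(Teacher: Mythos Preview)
Your proof is correct and follows essentially the same route as the paper: both argue that the compatibility identity forces the Gelfand ideals and inner products to coincide, hence the GNS pre-Hilbert spaces are literally equal, and then unwind definitions to obtain \eqref{eq:PhiRhoOnH}. The paper adds an explicit remark that $\Nu_\om$ is a left ideal for $\ot_\rho$ (ensuring the action $\phi^\rho_\om(f)\Psi_\om(g)=\Psi_\om(f\ot_\rho g)$ is well-defined on equivalence classes), which you handle implicitly via the identification $\Psi_\om=\Psi^\rho_\om$; conversely, your comments on continuity and the ${}^*$-structure are left implicit in the paper.
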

\begin{proof}
	The state $\om$ on $\BU$ clearly defines a normalized linear functional $f\mapsto\om(f)$ on $\BU^\rho$. The $\rho$-compatibility and positivity of $\om$ imply, $f\in\BU$,
	\begin{align*}
		\om\left(f^*\ot_\rho f\right)
		&=
		\om(f^*\ot f)
		\geq
		0\,.
	\end{align*}
	Hence $\om$ is also a state on the deformed algebra $\BU^\rho$.

	To verify the statements about the GNS representations of $(\BU,\om)$ and $(\BU^\rho,\om)$, let $\Nu_\om:=\{f\in\BU\,:\,\om(f^*\ot f)=0\}$ and $\Nu_\om^\rho:=\{f\in\BU\,:\,\om(f^*\otimes_\rho f)=0\}$ denote the respective Gelfand ideals. Since $\om$ is $\rho$-compatible, we have $\om(f^*\otimes_\rho f)=\om(f^*\ot f)$, and hence $\Nu^\rho_\om=\Nu_\om$ as linear spaces. As also $\BU$ and $\BU^\rho$ coincide as linear spaces, we have $\BU/\Nu_\om=\BU^\rho/\Nu_\om^\rho$. By the $\rho$-compatability of $\om$, these pre-Hilbert spaces carry the same scalar product $\langle\Psi_\om(f),\Psi_\om(g)\rangle=\om(f^*\ot g)=\om(f^*\otimes_\rho g)$, which implies in particular that their Hilbert space closures $\Hil_\om$ and $\Hil^\rho_\om$ are identical. The implementing vectors $\Om_\om$ and $\Om_\om^\rho$ are both equal to the equivalence class $\Psi_\om(1)=\Psi_\om^\rho(1)$ and therefore identical.

	The GNS representation $\phi^\rho_\om$ of $\BU^\rho$ acts on this space according to, $f,g\in\BU$,
	\begin{align}
		\phi^\rho_\om(f)\phi_\om(g)\Om_\om
		=
		\phi^\rho_\om(f)\Psi_\om(g)
		=
		\Psi_\om(f\ot_\rho g)
		=
		\phi_\om(f\otimes_\rho g)\Om_\om
		\,.
	\end{align}
	This is well-defined since $\Nu_\om$ is, by the preceding argument, also a left ideal with respect to the deformed product, and the proof is finished.
\end{proof}

We now explain how wedge-local quantum field theories can be constructed from deformation maps $\rho$. To this end, suppose $\rho$ is a deformation map, and $\om$ is a $\rho$-compatible state which is invariant under $\alpha$ in the sense that, $f\in\BU$,
\begin{align}
	\om(\alpha_{x,\La}(f))
	&=
	\left\{
	\begin{array}{rcl}
		\overline{\om(f)} &;& (x,\La)\in\PG_+^\downarrow\\
		\om(f) &;& (x,\La)\in\PGpo
	\end{array}
	\right.
	\,.
\end{align}
We then have, $f,g\in\BU$, $\La\in\LG_+^\uparrow$,
\begin{align*}
	\om(f\otimes_{\rho_\La}g)
	&=
	\om(\alpha_\La(\alpha_\La^{-1}(f)\otimes_\rho \alpha_\La^{-1}(g)))
	=
	\om(\alpha_\La^{-1}(f)\otimes_\rho \alpha_\La^{-1}(g))
	\\
	&=
	\om(\alpha_\La^{-1}(f)\ot\alpha_\La^{-1}(g))
	=
	\om(f\ot g)
	\,,
\end{align*}
and by an analogous calculation, also $\om(f\otimes_{\rho_\La}g)=\om(f\ot g)$ for $\La\in\LG_+^\downarrow$. Hence the state $\om$ is compatible with all Lorentz transformed deformation maps $\rho_\La$, $\La\in\LG_+$. In view of Proposition \ref{proposition:GNSRepresentationOfCompatibleDeformation}, all these deformations are thus realized on the GNS space of the undeformed algebra, and can be compared in terms of the Hilbert space operators $\phi^{\rho_\La}_\om(f)$ on $\DD_\om\subset \Hil_\om$.

It is clear from our construction that the (anti-)unitary representation $U_\om$ implementing $\alpha$ on $\Hil_\om$ satisfies $U_\om(x,\La)\Psi_\om(f)=\Psi_\om(\alpha_{x,\La}\,f)$, $(x,\La)\in\PG_+$, $f\in\BU$. After a small calculation making use of $\phi^{\rho_\La}_\om(f)\Psi_\om(g)=\Psi_\om(f\ot_{\rho_\La}g)$ \eqref{eq:PhiRhoOnH}, this yields the transformation law
\begin{align}
	U_\om(x,\La)\phi^\rho_\om(f)U_\om(x,\La)^{-1}
	=
	\phi_\om^{\rho_\La}(\alpha_{x,\La}f)
	\,,\qquad
	(x,\La)\in\PG_+,\;f\in\BU\,.
\end{align}
In particular, for those transformations $(x,\La)$ that satisfy $\La W_0+x\subset W_0$, the corresponding automorphisms commute with $\rho$ (Definition \ref{definition:DeformationMap} {\em iii)}), and we have
\begin{align}\label{eq:UOmPhiOmCovariance}
	U_\om(x,\La)\phi^\rho_\om(f)U_\om(x,\La)^{-1}
	=
	\phi_\om^\rho(\alpha_{x,\La}f)
	\,,\qquad
	\La W_0+x\subset W_0\,,\;f\in\BU\,.
\end{align}

To produce a wedge-localized algebra complying with \eqref{eq:WedgeAlgebraCovariance} and \eqref{eq:WedgeAlgebraLocality}, we have to use elements $f\in\BU$ with support in $W_0$. As the deformation map $\rho$ will usually not preserve supports, $\BU(W_0)$ will not be an algebra with respect to the deformed product $\ot_\rho$. We therefore consider the ${}^*$-algebra $\pol^\rho_{\om,0}$ {\em generated by} all $\phi^\rho_\om(f)$, $f\in\BU(W_0)$. The transformation law \eqref{eq:UOmPhiOmCovariance} then implies the desired invariance  \eqref{eq:WedgeAlgebraCovariance} of $\pol_{\om,0}^\rho$.

The crucial locality condition \eqref{eq:WedgeAlgebraLocality} is equivalent to the vanishing of the commutators
\begin{align}\label{eq:PhiRhoPhiRho'-Commutator}
	[\phi_\om^{\rho}(f),\,\phi_\om^{\rho'}(g')]\Psi=0
	\,,\qquad
	f\in\BU(W_0),\;g\in\BU(W_0')\,,\Psi\in\DD_\om\,.
\end{align}
We will say that a deformation map $\rho$ {\em is wedge-local in a state $\om$} which is compatible with $\rho$ and $\rho'$ if \eqref{eq:PhiRhoPhiRho'-Commutator} holds. In this case, the algebra $\pol_{\om,0}^\rho$ complies with \eqref{eq:WedgeAlgebraCovariance} and \eqref{eq:WedgeAlgebraLocality}, and can therefore be used to generate a quantum field theory model.
\\
\\
To illustrate the conditions on the interplay of $\rho$ and $\om$, we recall that the deformation map given by warped convolution \cite{GrosseLechner:2008} is compatible with all translationally invariant states on $\BU$. But the locality condition \eqref{eq:PhiRhoPhiRho'-Commutator} is only valid if $\om$ annihilates $\Le$, the translations $U_\om(x,1)$ satisfy a spectrum condition, and the parameters defining $\rho$ are suitably chosen \cite{GrosseLechner:2008,BuchholzLechnerSummers:2010}. Hence the validity of \eqref{eq:PhiRhoPhiRho'-Commutator} is not a property of $\rho$ alone, but also involves properties of $\om$ going beyond compatibility and vanishing on $\Le$.

In the present generality, it seems to be difficult to find manageable conditions on $\rho$ and $\om$ which imply that $\rho$ is wedge-local in $\om$. We will therefore present in the next section a family of explicit deformation maps $\rho$ together with their compatible states $\om$ such that $\rho$ is wedge-local in $\om$. Before moving on to the examples, we point out that the wedge-locality condition amounts to the vanishing of matrix elements of commutators with respect to the undeformed product, a result that will be useful later on.

\begin{lemma}\label{lemma:LocalityInOmega}
	Let $\rho$ be a deformation map and $\om$ a state on $\BU$ which is compatible with $\rho$ and $\rho'$. Then $\rho$ is wedge-local in $\om$ if and only if
	\begin{align}\label{eq:WedgeLocalityInOmega}
		\om((h\ot_\rho f)\ot(g'\ot_{\rho'} k))
		&=
		\om((h\ot_{\rho'} g')\ot(f\ot_{\rho} k))
	\end{align}
	for all $f\in \BU(W_0)$, $g'\in\BU(W_0')$, $h,k\in\BU$.
\end{lemma}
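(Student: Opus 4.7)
The plan is to translate the operator identity $[\phi_\om^\rho(f), \phi_\om^{\rho'}(g')]\Psi = 0$ (for all $\Psi\in\DD_\om$) into a scalar identity in $\om$ and recognise the latter as \eqref{eq:WedgeLocalityInOmega}. Since $\DD_\om = \phi_\om(\BU)\Om_\om = \{\Psi_\om(k)\,:\,k\in\BU\}$ is dense in $\Hil_\om$, the commutator vanishes on $\DD_\om$ if and only if the matrix elements $\langle \Psi_\om(h),\,[\phi_\om^\rho(f), \phi_\om^{\rho'}(g')]\Psi_\om(k)\rangle$ vanish for all $h,k\in\BU$. It therefore suffices to compute these matrix elements explicitly.

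Next, I would evaluate each term of the commutator by combining the defining relation \eqref{eq:PhiRhoOnH}, $\phi_\om^\rho(a)\Psi_\om(b) = \Psi_\om(a\ot_\rho b)$, with the $*$-representation property of $\phi_\om^\rho$ on the GNS domain, using the scalar product formula $\langle \Psi_\om(a), \Psi_\om(b)\rangle = \om(a^*\ot b) = \om(a^*\ot_\rho b)$ noted in Proposition \ref{proposition:GNSRepresentationOfCompatibleDeformation}. Expanding the inner product by first absorbing $\phi_\om^{\rho'}(g')$ into the right vector, then shifting $\phi_\om^\rho(f)$ to the left vector via its adjoint, a short calculation should yield
\begin{equation*}
\langle \Psi_\om(h),\,\phi_\om^\rho(f)\,\phi_\om^{\rho'}(g')\,\Psi_\om(k)\rangle \;=\; \om\bigl((h^*\ot_\rho f)\ot(g'\ot_{\rho'}k)\bigr),
\end{equation*}
where the identity $(f^*\ot_\rho h)^* = h^*\ot_\rho f$ is used; this relies on the property $\rho(f^*) = \rho(f)^*$ guaranteed by Definition \ref{definition:DeformationMap}~\emph{ii)}. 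An entirely analogous manipulation, this time with $\phi_\om^{\rho'}(g')$ shifted to the left, gives
\begin{equation*}
\langle \Psi_\om(h),\,\phi_\om^{\rho'}(g')\,\phi_\om^\rho(f)\,\Psi_\om(k)\rangle \;=\; \om\bigl((h^*\ot_{\rho'}g')\ot(f\ot_\rho k)\bigr).
\end{equation*}

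Subtracting the two expressions and using that $h\mapsto h^*$ is a bijection of $\BU$, the vanishing of all such commutator matrix elements becomes equivalent to the identity \eqref{eq:WedgeLocalityInOmega} for all $f\in\BU(W_0)$, $g'\in\BU(W_0')$, $h,k\in\BU$. Both implications follow simultaneously, establishing the lemma. There is no substantial obstacle: the argument is purely algebraic, and the only point demanding care is the bookkeeping of involutions when swapping factors inside the $\rho$- and $\rho'$-deformed products, which is handled by the compatibility of $\rho$ (and hence $\rho' = \alpha_j\circ\rho\circ\alpha_j$) with the star involution of $\BU$.
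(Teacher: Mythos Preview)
Your proposal is correct and follows essentially the same strategy as the paper: reduce wedge-locality to the vanishing of matrix elements $\langle\Psi_\om(h),[\phi_\om^\rho(f),\phi_\om^{\rho'}(g')]\Psi_\om(k)\rangle$ and compute each term via the GNS action \eqref{eq:PhiRhoOnH}. The only cosmetic difference is that the paper absorbs both field operators into the right vector and then uses compatibility and associativity to re-bracket, whereas you move one operator to the left via the adjoint relation $\phi_\om^\rho(f)^*\supset\phi_\om^\rho(f^*)$; both routes land on \eqref{eq:WedgeLocalityInOmega} after the bijection $h\mapsto h^*$.
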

\begin{proof}
	For $\rho$ to be wedge-local in $\om$, we need to show $[\phi_\om^\rho(f),\,\phi^{\rho'}_\om(g')]\Psi=0$ for all $\Psi\in\DD_\om$, $f\in \BU(W_0)$, $g'\in\BU(W_0')$. Since $\DD_\om=\phi_\om(\BU)\Om_\om$ is dense in $\Hil_\om$, this is equivalent to the vanishing of the matrix elements $\langle\Om_\om,\,\phi_\om(h)[\phi_\om^\rho(f),\,\phi^{\rho'}_\om(g')]\phi_\om(k)\Om_\om\rangle=0$ for arbitrary $h,k\in\BU$. But in view of the compatibility of $\rho,\rho'$ with $\om$, and the associativity of the products $\ot_\rho$, $\ot_{\rho'}$, we can rewrite these matrix elements as
	\begin{align*}
		0
		&=
		\langle\Om_\om,\,\phi_\om(h)[\phi_\om^\rho(f),\,\phi^{\rho'}_\om(g')]\phi_\om(k)\Om_\om\rangle
		\\
		&=
		\langle\Om_\om,\,\phi_\om(h)\phi_\om(f\ot_\rho(g'\ot_{\rho'}k))\Om_\om\rangle
		-
		\langle\Om_\om,\,\phi_\om(h)\phi_\om(g'\ot_{\rho'}(f\ot_{\rho}k))\Om_\om\rangle
		\\
		&=
		\om(h\ot (f\ot_\rho(g'\ot_{\rho'}k))) - \om(h\ot (g'\ot_{\rho'}(f\ot_\rho k)))
		\\
		&=
		\om(h\ot_\rho f\ot_\rho(g'\ot_{\rho'}k)) - \om(h\ot_{\rho'}g'\ot_{\rho'}(f\ot_\rho k))
		\\
		&=
		\om((h\ot_\rho f)\ot_\rho(g'\ot_{\rho'}k)) - \om((h\ot_{\rho'}g')\ot_{\rho'}(f\ot_\rho k))
		\\
		&=
		\om((h\ot_\rho f)\ot(g'\ot_{\rho'}k)) - \om((h\ot_{\rho'}g')\ot(f\ot_\rho k))
		\,.
	\end{align*}
	As the last expression is identical to \eqref{eq:WedgeLocalityInOmega}, the proof is finished.
\end{proof}


\section{Multiplicative deformations and their compatible states}\label{section:examples}

We now turn to the task of finding examples of deformation maps $\rho$ which meet our requirements. It will be convenient to work in momentum space most of the time, i.e., we consider the Fourier transform $f\mapsto\fti$ on $\BU$,
\begin{align*}
	\fti_n(p_1,...,p_n)
	:=
	(2\pi)^{-nd/2}\int d^dx_1\cdots d^dx_n\,f_n(x_1,...,x_n)\,e^{ip_1\cdot x_1}\cdots e^{ip_n\cdot x_n}
	\,.
\end{align*}
This map preserves the linear and product structure of $\BU$ as well as its identity element. Furthermore, the Fourier transform commutes with the action of the orthochronous Lorentz transformations, and thus $\LGpo$ acts on the momentum space wave functions in the same manner as in \eqref{eq:AlphaALa}. Translations, the ${}^*$-involution, and the reflection at the edge of the wedge take the form, $p_1,...,p_n\in\Rl^d$,
\begin{align}
	(\widetilde{\alpha_x f})_n(p_1,...,p_n)
	&=
	e^{i(p_1+...+p_n)\cdot x}\cdot \fti_n(p_1,...,p_n)
	\,,
	\label{eq:TranslationsInMomentumSpace}
	\\
	\widetilde{f^*}_n(p_1,...,p_n)
	&=
	\overline{\fti_n(-p_n,...,-p_1)}
	\label{eq:StarInMomentumSpace}
	\,,\\
	(\widetilde{\alpha_{j}f})_n(p_1,...,p_n)
	&=
	\overline{\fti_n(-j p_1,...,-j p_n)}
	\,.
	\label{eq:JInMomentumSpace}
\end{align}

After these remarks, we consider deformation maps $\rho:\BU\to\BU$ in the sense of Definition \ref{definition:DeformationMap}. In view of the structure of $\BU$, every such map is given by a family of (distributional) integral kernels $\rho_{nm}$, $n,m\in\Nl_0$, such that, $f_n\in\Ss_n$,
\begin{align}\label{eq:RhoNMKernels}
	\widetilde{\rho(f_n)}_m(p_1,...,p_m)
	&=
	\int dq_1\cdots dq_n\,\rho_{nm}(q_1,...,q_n;\,p_1,...,p_m)\, \fti_n(q_1,...,q_n)
	\,.
\end{align}
The defining properties of a deformation map restrict the possible form of the distributions $\rho_{nm}$. For example, property {\em iii)} of Definition \ref{definition:DeformationMap} requires the support of $\rho_{nm}$ to be contained in $\{(q_1,...,q_n,p_1,...,p_m)\,:\,q_1+..+q_n+p_1+..+p_m=0\}$, similar to the energy-momentum conservation of S-matrix elements.

A systematic study of deformation maps and the emerging deformed quantum field theories will be presented elsewhere. Here we consider a particularly simple class of maps $\rho:\BU\to\BU$ which preserve the grading of $\BU$ and act multiplicatively in momentum space, i.e., are of the form
\begin{align}\label{eq:MultiplicativeRho}
	\widetilde{\rho(f)}_n(p_1,...,p_n)
	=
	\rho_n(p_1,...,p_n)\cdot \fti_n(p_1,...,p_n)
	\,,\qquad
	n\in\Nl_0,\;f\in\BU\,.
\end{align}
We will refer to deformation maps of this type as {\em multiplicative deformations}. They form an abelian subgroup, denoted $\R_0$, of the group $\R$ of all deformation maps. Given $\rho\in\R_0$, the functions $\rho_n$ \eqref{eq:MultiplicativeRho} are called the {\em $n$-point functions of $\rho$}, and it is straightforward to characterize $\rho$ in terms of the $\rho_n$.

\begin{lemma}\label{lemma:MomentumSpaceDeformationMaps}
	The group $\R_0$ of multiplicative deformations of $\BU$ consists precisely of those sequences $\rho_n\in C^\infty(\Rl^{n d})$, $n\in\Nl_0$, of smooth functions which satisfy the following conditions.
	\begin{enumerate}
		\item For each multi index $\bmu\in\Nl_0^{nd}$, there exist $N_\bmu\in\Rl$ and $C_\bmu>0$ such that
		\begin{align}\label{eq:BoundOnRhoNFromAbove}
		|\partial^\bmu \rho_n(p_1,...,p_n)|
		&\leq
		C_\bmu\,(1+|p_1|^2+...+|p_n|^2)^{N_\bmu}
		\,,\qquad p_1,...,p_n\in\Rl^d\,.
	\end{align}
	\item There exists $M\in\Rl$ and $C'>0$ such that
	\begin{align}\label{eq:BoundOnRhoNFromBelow}
		|\rho_n(p_1,...,p_n)|
		&\geq
		C'\,(1+|p_1|^2+...+|p_n|^2)^{-M}
		\,,\qquad p_1,...,p_n\in\Rl^d\,.
	\end{align}
	\item For each Lorentz transformation $\La$ with $\La W_0=W_0$,
	\begin{align*}
		\rho_n(\La p_1,...,\La p_n)
		=
		\rho_n(p_1,...,p_n)
				\,,\qquad p_1,...,p_n\in\Rl^d\,.
	\end{align*}
	\item $\rho_n$ is ${}^*$-invariant,
	\begin{align*}
		\overline{\rho_n(-p_n,...,-p_1)}
		=
		\rho_n(p_1,...,p_n)
		\,,\qquad
		p_1,...,p_n\in\Rl^d\,.
	\end{align*}
	\item $\rho_0=1$.
	\end{enumerate}
\end{lemma}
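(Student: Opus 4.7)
The plan is to prove the two directions of the characterization separately, working throughout in momentum space where a multiplicative $\rho$ acts by pointwise multiplication by $\rho_n$, so that every defining clause of Definition~\ref{definition:DeformationMap} translates into a pointwise statement about the symbols $\rho_n$.

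For the forward direction, I would fix $\rho\in\R_0$ and chase each clause of Definition~\ref{definition:DeformationMap} through the multiplicative ansatz \eqref{eq:MultiplicativeRho}. The condition $\rho(1)=1$ applied to the $n=0$ component gives $\rho_0=1$, clause \textit{v)}. The $*$-invariance $\rho(f)^*=\rho(f^*)$, evaluated via \eqref{eq:StarInMomentumSpace}, forces $\overline{\rho_n(-p_n,\ldots,-p_1)}=\rho_n(p_1,\ldots,p_n)$ for every $n$ and every $\tilde f_n\in\Ss_n$, which is clause \textit{iv)}. The covariance condition was reduced in \eqref{eq:RhoCovarianceExplicit} to commutation with all translations together with the Lorentz transformations $\La$ fixing $W_0$; translations act diagonally in momentum space (cf.\ \eqref{eq:TranslationsInMomentumSpace}) and therefore automatically commute with multiplication by $\rho_n$, whereas commutation with $\alpha_\La$ forces $\rho_n(\La p_1,\ldots,\La p_n)=\rho_n(p_1,\ldots,p_n)$, which is clause \textit{iii)}. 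Finally, the requirement that $\rho$ be a linear homeomorphism of $\BU$ reduces, via the direct-sum decomposition $\BU=\bigoplus_n\Ss_n$, to the requirement that multiplication by $\rho_n$ be a continuous bijection of $\Ss_n$ with continuous inverse. By the standard characterization of continuous multipliers on Schwartz space, this forces $\rho_n$ to be smooth with every derivative polynomially bounded, clause \textit{i)}, and simultaneously forces $1/\rho_n$ to enjoy the same property; the polynomial upper bound on $1/\rho_n$ itself is precisely the lower bound \eqref{eq:BoundOnRhoNFromBelow}, clause \textit{ii)}.

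For the converse I would run the translations in reverse. Clauses \textit{i)} and \textit{ii)} together, combined with the Leibniz/Fa\`a di Bruno formula applied to $\rho_n^{-1}$, yield polynomial upper bounds on every $\partial^\bmu(\rho_n^{-1})$, so that both $\rho_n$ and $\rho_n^{-1}$ are continuous multipliers on $\Ss_n$. Consequently \eqref{eq:MultiplicativeRho} defines a continuous linear map of $\Ss_n$ with continuous two-sided inverse given by multiplication by $\rho_n^{-1}$, and summing over $n$ produces a linear homeomorphism of $\BU$. Clauses \textit{iii)}--\textit{v)} then hand back the remaining parts of Definition~\ref{definition:DeformationMap} by direct pointwise comparison using \eqref{eq:TranslationsInMomentumSpace}--\eqref{eq:JInMomentumSpace}. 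That $\R_0$ forms an abelian subgroup of $\R$ is immediate, since composing multiplication by $\rho_n$ with multiplication by $\sigma_n$ gives multiplication by the pointwise product $\rho_n\sigma_n$.

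The only step that is not a direct pointwise identification is the equivalence between the homeomorphism property and conditions \textit{i)}, \textit{ii)}: one must combine the polynomial upper bounds on the derivatives of $\rho_n$ with the polynomial lower bound on $\rho_n$ itself to conclude polynomial bounds on all derivatives of $1/\rho_n$. Once this Schwartz-multiplier fact is secured, the remainder of the proof consists of straightforward momentum-space bookkeeping.
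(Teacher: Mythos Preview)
Your proposal is correct and follows essentially the same route as the paper's proof: both arguments identify conditions \textit{iii)}--\textit{v)} with the corresponding clauses of Definition~\ref{definition:DeformationMap} by direct momentum-space bookkeeping, and both isolate the only nontrivial step as the equivalence between the homeomorphism property and conditions \textit{i)}, \textit{ii)}, proved via the Schwartz-multiplier characterization together with the chain rule to propagate polynomial bounds from $\rho_n$ to $1/\rho_n$. The organization differs slightly (you split into forward and converse directions, the paper proceeds condition by condition), but the content is the same.
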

\begin{proof}
	The first two conditions {\em i)}, {\em ii)} are necessary and sufficient for $\rho$ to be a homeomorphism on $\BU$: Let us first assume {\em i)}, {\em ii)} hold. Then, by condition {\em i)}, $f_n\mapsto \rho_n\cdot f_n$ maps $\Ss_n$ into $\Ss_n$. Moreover, this map is linear and it is straightforward to see that it is continuous in the Schwartz topology. By condition {\em ii)}, the $\rho_n$ are in particular non-vanishing, and the reciprocals $1/\rho_n$ are polynomially bounded by \eqref{eq:BoundOnRhoNFromBelow}. It now follows by application of the chain rule that all derivatives of $1/\rho_n$ satisfy polynomial bounds of the form \eqref{eq:BoundOnRhoNFromAbove}. Hence $f_n\mapsto f_n/\rho_n$ is also a continuous linear map from $\Ss_n$ onto $\Ss_n$, with inverse $\rho_n$.

	The map $\rho:\BU\to\BU$ on the direct sum $\BU=\bigoplus_n\Ss_n$ is continuous iff its restriction to $\Ss_n$ is continuous for each $n$ \cite{Treves:1967}. But the restriction of $\rho$ to $\Ss_n$ maps this space continuously onto $\Ss_n$, which in turn is continuously embedded in $\BU$. Hence $\rho$ is continuous, and by the same argument, one sees that $\rho^{-1}$ is continuous as well. Thus $\rho$ is a linear homeomorphism, as required in Definition \ref{definition:DeformationMap}.

	Conversely, let us now assume that $\rho$ defined as in \eqref{eq:MultiplicativeRho} is a homeomorphism of $\BU$. For such a multiplicative transformation to map $\Ss_n$ onto $\Ss_n$, it is necessary that $\rho_n$ is smooth and polynomially bounded in all derivatives, i.e., {\em i)} holds. Since $\rho^{-1}$ has the same properties, also {\em ii)} follows.

	Condition {\em iii)} is equivalent to $\alpha_\La\circ\rho=\rho\circ\alpha_\La$ for $\La$ with $\La W_0=W_0$, as a short calculation based on \eqref{eq:MultiplicativeRho} and \eqref{eq:AlphaALa} shows. Translational invariance imposes no further restrictions on $\rho$ since both $\rho$ and the translations act multiplicatively in momentum space and therefore commute automatically.

	Using \eqref{eq:MultiplicativeRho} and \eqref{eq:StarInMomentumSpace}, one easily checks that {\em iv)} is equivalent to $\rho(f^*)=\rho(f)^*$, $f\in\BU$. Since $\rho(1)_0=\rho_0$, condition {\em v)} is equivalent to $\rho(1)=1$.
\end{proof}

\noindent{\em Remark:} For $\rho\in\R_0$, the opposite deformation $\rho'=\alpha_j\circ \rho\circ\alpha_j$ is given by the $n$-point functions $\rho_n'(p_1,...,p_n)=\overline{\rho_n(-jp_1,...,-jp_n)}$. But as the Lorentz transformation $-j$ maps the wedge $W_0$ onto itself, and $-j\in\LGpo$ because $d$ is even, we can use the invariance stated in part {\em iii)} of Lemma \ref{lemma:MomentumSpaceDeformationMaps} to rewrite the $n$-point functions of the
opposite deformation as
\begin{align}\label{eq:RhoPrimeIsRhoBar}
	\rho_n'(p_1,...,p_n)
	&=
	\overline{\rho_n(p_1,...,p_n)}
	\,.
\end{align}
\vspace*{2mm}\\
The inverse $\rho^{-1}$ of a multiplicative deformation $\rho\in\R_0$ is given by the reciprocal $n$-point functions $1/\rho_n$, and thus the product $f\ot_\rho g=\rho^{-1}(\rho(f)\ot\rho(g))$ takes the following simple form in momentum space,
\begin{align}\label{eq:RhoProductKernels}
	\widetilde{(f\ot_\rho g)_n}(p_1,..,p_n)
	=
	\sum_{k=0}^n \frac{\rho_k(p_1,..,p_k)\rho_{n-k}(p_{k+1},..,p_n)}{\rho_n(p_1,..,p_n)}\fti_k(p_1,..,p_k)\gti_{n-k}(p_{k+1},..,p_n)
	\,.
\end{align}
It is clear from the conditions spelled out in Lemma \ref{lemma:MomentumSpaceDeformationMaps} that many multiplicative deformation maps exist. However, different $\rho,\hat{\rho}\in\R_0$ might induce the same product \eqref{eq:RhoProductKernels} on $\BU$. We therefore introduce an equivalence relation on $\R_0$ by defining $\rho,\hat{\rho}$ as equivalent, in symbols $\rho\sim\hat{\rho}$, if $f\ot_\rho g=f\ot_{\hat{\rho}} g$ for all $f,g\in\BU$. A multiplicative deformation $\rho\in\R_0$ is called {\em trivial} if $\rho\sim\rm id$.

\begin{lemma}
\begin{enumerate}
	\item Two deformations $\rho$, $\hat{\rho}\in\R_0$ are equivalent if and only if $\hat{\rho}\rho^{-1}$ is trivial.
	\item A deformation $\rho\in\R_0$ is trivial if and only if $\rho_n=\rho_1^{\otimes n}$, $n\in\Nl$.
	\item Let $\rho\in\R_0$. Then there exists another $\hat{\rho}\in\R_0$ with $\hat{\rho}_1=1$ and $\hat{\rho}\sim\rho$.
\end{enumerate}
\end{lemma}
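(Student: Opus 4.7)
The three statements are all about relating the product $\otimes_\rho$ (defined in \eqref{eq:RhoProduct}) to pointwise operations on the $n$-point functions $\rho_n$ via the explicit formula \eqref{eq:RhoProductKernels}. My strategy is to reduce each statement to an elementary identity on those functions, making repeated use of Lemma \ref{lemma:MomentumSpaceDeformationMaps} to keep track of the $\R_0$-conditions.

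For part \emph{i.)}, set $\sigma:=\hat\rho\rho^{-1}\in\R_0$ and note that $\rho\sim\hat\rho$ means
\begin{align*}
\rho^{-1}(\rho(f)\ot\rho(g))=\hat\rho^{-1}(\hat\rho(f)\ot\hat\rho(g))
\qquad\text{for all }f,g\in\BU.
\end{align*}
Applying $\hat\rho$ on both sides and substituting $u=\rho(f)$, $v=\rho(g)$ (which run through all of $\BU$ since $\rho$ is a homeomorphism) this is equivalent to $\sigma(u\ot v)=\sigma(u)\ot\sigma(v)$. The latter, by the very definition of $\ot_\sigma$, just says $u\ot_\sigma v=u\ot v$, i.e., $\sigma$ is trivial. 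This gives \emph{i.)}.

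For part \emph{ii.)}, triviality of $\rho$ means $\rho(f\ot g)=\rho(f)\ot\rho(g)$. Reading this off in momentum space using the definition \eqref{eq:MultiplicativeRho} and the explicit form \eqref{eq:RhoProductKernels} (applied with $\rho=\mathrm{id}$ on the left and with $\rho$ on the right), the equality at order $n$ in momenta $p_1,\ldots,p_n$ becomes
\begin{align*}
\rho_n(p_1,\ldots,p_n)\sum_{k=0}^n\fti_k(p_1,\ldots,p_k)\gti_{n-k}(p_{k+1},\ldots,p_n)
=\sum_{k=0}^n\rho_k(p_1,\ldots,p_k)\rho_{n-k}(p_{k+1},\ldots,p_n)\fti_k\gti_{n-k}.
\end{align*}
Since $f,g$ are arbitrary, this forces $\rho_n(p_1,\ldots,p_n)=\rho_k(p_1,\ldots,p_k)\rho_{n-k}(p_{k+1},\ldots,p_n)$ for every $0\le k\le n$. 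Taking $k=1$ and iterating (together with $\rho_0=1$ from Lemma \ref{lemma:MomentumSpaceDeformationMaps} \emph{v.)}), this is equivalent to $\rho_n=\rho_1^{\ot n}$ for all $n\ge 1$; conversely, $\rho_n=\rho_1^{\ot n}$ makes the required factorization obvious.

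For part \emph{iii.)}, I construct a trivial element $\sigma\in\R_0$ with $\sigma_1=1/\rho_1$ by setting $\sigma_n(p_1,\ldots,p_n):=\prod_{i=1}^n\rho_1(p_i)^{-1}$, and then define $\hat\rho:=\sigma\rho\in\R_0$ componentwise, so that $\hat\rho_n=\rho_n\cdot\prod_i\rho_1(p_i)^{-1}$. Then $\hat\rho_1=1$ by construction, and $\hat\rho\rho^{-1}=\sigma$ is trivial by part \emph{ii.)}, so $\hat\rho\sim\rho$ by part \emph{i.)}. The only non-trivial point is to check that $\sigma$ genuinely lies in $\R_0$, i.e., that $\sigma_1=1/\rho_1$ satisfies all five conditions of Lemma \ref{lemma:MomentumSpaceDeformationMaps}. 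Lorentz invariance, star-invariance and $\sigma_0=1$ are immediate from the corresponding properties of $\rho_1$. The main obstacle — and the only slightly technical step in the whole lemma — is the polynomial bounds \emph{i.)} and \emph{ii.)}: the upper bound on derivatives of $1/\rho_1$ follows from the lower bound \eqref{eq:BoundOnRhoNFromBelow} on $\rho_1$ via the chain rule (exactly as in the proof of Lemma \ref{lemma:MomentumSpaceDeformationMaps} itself, where $1/\rho_n$ was treated), while the lower bound on $1/\rho_1$ follows from the upper bound \eqref{eq:BoundOnRhoNFromAbove} on $\rho_1$. The bounds for $\sigma_n=\sigma_1^{\ot n}$ then follow by combining those for $\sigma_1$ with the elementary estimates $\prod_i(1+|p_i|^2)^{N}\le(1+\textstyle\sum_i|p_i|^2)^{n|N|}$ and $\prod_i(1+|p_i|^2)^{-M}\ge(1+\textstyle\sum_i|p_i|^2)^{-n|M|-n}$.
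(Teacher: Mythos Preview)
Your proof is correct and follows essentially the same route as the paper. In part \emph{ii.)} you spell out the factorization $\rho_n=\rho_k\cdot\rho_{n-k}$ in momentum space and iterate, whereas the paper phrases the same condition abstractly as ``$\rho$ is an automorphism of $\BU$''; in part \emph{iii.)} you add explicit detail on the polynomial bounds for $\sigma_n=1/\rho_1^{\otimes n}$ that the paper merely asserts are ``easy to check'' --- but the underlying construction and logic are identical.
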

\begin{proof}
	{\em i)} Assume $\rho\sim\hat{\rho}$. Then $\rho^{-1}(\rho(f)\ot\rho(g))=\hat{\rho}^{-1}(\hat{\rho}(f)\ot\hat{\rho}(g))$ for all $f,g\in\BU$, or, equivalently, $f\ot_{\hat{\rho}\rho^{-1}}g=(\rho\hat{\rho}^{-1})((\hat{\rho}\rho^{-1})(f)\ot (\hat{\rho}\rho^{-1})(g))=f\ot g$. Hence $\hat{\rho}\rho^{-1}$ is trivial. If, on the other hand, $\hat{\rho}\rho^{-1}\sim\rm id$, then $(\rho\hat{\rho}^{-1})((\hat{\rho}\rho^{-1})(f)\ot (\hat{\rho}\rho^{-1})(g))=f\ot g$, and $\rho\sim\hat{\rho}$ follows.

	{\em ii)} The triviality condition $\rho^{-1}(\rho(f)\ot\rho(g))=f\ot g$, $f,g\in\BU$, is satisfied if and only if $\rho$ is an automorphism of $\BU$. As $\rho$ is taken to be multiplicative here, it is an automorphism if and only if $\rho_n=\rho_1^{\ot n}$, $n\in\Nl$.

	{\em iii)} Let $\rho\in\R_0$. Then $\rho_1$ satisfies the conditions {\em i)--iv)} in Lemma \ref{lemma:MomentumSpaceDeformationMaps} for $n=1$, and it is easy to check that for $n\geq1$, also the functions $\sigma_n:=1/\rho_1^{\ot n}$ comply with these conditions. With $\sigma_0:=1$, this defines a multiplicative deformation $\sigma\in\R_0$ which is trivial by part {\em ii)}. According to part {\em i)}, $\hat{\rho}:=\sigma\rho$ is equivalent to $\rho$, and $\hat{\rho}_1=\rho_1/\rho_1=1$.
\end{proof}

In view of the last statement, the redundancy in describing deformed products of the form \eqref{eq:RhoProductKernels} by $n$-point functions $\rho_n$ is precisely taken into account by restricting to multiplicative deformations $\rho\in\R_0$ with trivial one point function $\rho_1=1$. We shall therefore consider only such $\rho$ in the following.
\\
\\
Following the general strategy explained in Section \ref{section:generaldeformations}, we next investigate the compatibility of $\rho\in\R_0$ with certain states $\om$ on $\BU$ (Definition \ref{definition:CompatibleState}). That is, we need to find physically relevant states such that $\om(f\ot_\rho g)=\om(f\ot g)$ for all $f,g\in\BU$. Each state on $\BU$ is given by a sequence of distributions $\om_n\in\Ss_n'$, $n\in\Nl$, its $n$-point functions, and $\om_0=1$. In momentum space, we have
\begin{align}\label{eq:om}
	\om(f)
	=
	\sum_{n=0}^\infty \int dp_1\cdots dp_n\,\omti_n(-p_1,...,-p_n)\,\fti_n(p_1,...,p_n)
	\,,\qquad
	f \in\BU\,.
\end{align}
Inserting \eqref{eq:RhoProductKernels} into the condition $\om(f\ot_\rho g)=\om(f\ot g)$, we observe that if the $\omti_n$ are measures, compatibility of $\om$ with $\rho$ is equivalent to the factorization
\begin{align}\label{eq:RhoCompatibleWithOmega-Condition}
	\rho_n(p_1,...,p_n)=\rho_k(p_1,...,p_k)\cdot\rho_{n-k}(p_{k+1},...,p_n)
	\;\text{ for all }\;(p_1,...,p_n)\in-\supp\omti_n\,,
\end{align}
for all $n,k\in\Nl_0$, $k\leq n$. For more singular distributions $\omti_n$, compatibility of $\om$ with $\rho$ poses also conditions on the derivatives of the $\rho_n$.

As the momentum space supports of $n$-point functions play a role in the compatibility question, we proceed with some remarks about relevant examples from quantum field theory. A large class of states of interest is the class of all translationally invariant states, satisfying $\om\circ\alpha_x=\om$ for all $x\in\Rl^d$. Their $n$-point functions have support at zero energy-momentum, that is,
\begin{align}\label{eq:TranslationallyInvariantSupport}
	\supp\omti_n\subset S_{\rm inv}^n := \{\bp\in\Rl^{nd}\,:\,p_1+...+p_n=0\}
	\,.
\end{align}

As vacuum states in quantum field theory, one considers the subclass of translationally invariant states satisfying the {\em spectrum condition}. These are given by $n$-point functions with \cite{Borchers:1962}
\begin{align}\label{eq:SpectrumConditionSupport}
	\supp\omti_n\subset S^n_{\rm Spec} := \{\bp\in\Rl^{nd}\,:\,p_1,p_1+p_2,...,p_1+...+p_{n-1}\in\overline{V_+},\,p_1+...+p_n=0\},
\end{align}
where $\overline{V_+}=\{q\in\Rl^d\,:\,q\cdot q\geq0,\,q^0\geq0\}$ is the closed forward light cone.

Special examples of states, related to generalized free field models, are given by {\em quasi-free} states, which are completely determined by their two-point function $\om_2$. Recall that a state $\om$ on $\BU$ is called quasi-free if
\begin{align}\label{eq:QuasiFreeNPointFunctions}
	\omti_{2n-1}
	&=
	0\,,\qquad
	\omti_{2n}(p_1,...,p_{2n})
	=
	\sum_{(\bl,\br)} \prod_{k=1}^n \omti_2(p_{l_k},p_{r_k})
	\,,\qquad n\in\Nl\,,
\end{align}
where the sum runs over all partitions $(\bl,\br)$ of $\{1,...,2n\}$ into disjoint tuples $(l_1,r_1),...,(l_n,r_n)$ with $l_k<r_k$, $k=1,...,n$. For quasi-free translationally invariant states $\om$, we have $\supp\omti_{2n-1}=\emptyset$, and
\begin{align}\label{eq:QuasiFreeSupport}
	\supp\omti_{2n}
	\subset
	S^{2n}_{\rm qf}
	:=
	\bigcup_{(\bl,\br)} \{\bp\in\Rl^{2nd}\,:\,p_{l_k}+p_{r_k}=0,\;k=1,...,n\}
	\,.
\end{align}
In view of the positivity $\om_2(f_1^*\ot f_1)\geq0$, $f_1\in\Ss_1$, we can apply Bochner's theorem to conclude that $\omti_2$ is a measure. Taking into account the special structure of the $n$-point functions \eqref{eq:QuasiFreeNPointFunctions}, it then follows that each $\omti_n$ is a measure. In particular, the compatibility of a translationally quasi-free state with a multiplicative deformation is equivalent to the factorization condition \eqref{eq:RhoCompatibleWithOmega-Condition}.

Finally, translationally quasi-free states satisfy in addition the spectrum condition if and only if  $\supp\omti_2\subset\{(p,q)\in\Rl^{2d}\,:\,p\in\overline{V_+},\,p+q=0\}$. In this last case, $\om_2$ can be represented as
\begin{align}\label{eq:KL2PointFunction}
	\omti_2(p,q)=\delta(p+q)\,w(p)\,,
\end{align}
where $w$ is a measure on $\overline{V_+}$.
\\
\\
It turns out that it is a very strong condition to require a multiplicative deformation to
be compatible with all translationally invariant states, or all translationally
invariant states satisfying the spectrum condition. In fact, since there exist
sufficiently many such states on $\BU$
\cite{Yngvason:1981}, these conditions are equivalent to requiring
\eqref{eq:RhoCompatibleWithOmega-Condition} to hold on all of
$S^n_{\rm inv}$ \eqref{eq:TranslationallyInvariantSupport}. As a necessary condition for compatibility, this yields a recursive equation
determining the $n$-point functions $\rho_n$, $n\geq2$, in terms of the two
point function $\rho_2$. In addition, several algebraic relations for
$\rho_2$ have to be satisfied for
\eqref{eq:RhoCompatibleWithOmega-Condition} to hold. One special solution, corresponding to Rieffel deformations and warped convolutions, exists, and will be recalled later on. The most general deformation two-point function complying with these conditions is presently not known, but it seems that there is little freedom for obtaining other deformations $\rho\in\R_0$ compatible with all translationally invariant states\footnote{S.~Alazzawi, work in progress.}.

Instead of asking for compatibility of $\rho$ with all translationally invariant states, we will consider in the following the less restrictive condition that $\rho$ should be compatible with all {\em quasi-free} translationally invariant states. This amounts to requiring \eqref{eq:RhoCompatibleWithOmega-Condition} to hold on the smaller domain $S^n_{\rm qf}$ \eqref{eq:QuasiFreeSupport}. We will see that an infinite family of such $\rho$ exists, providing non-trivial deformations of generalized free field theories.

Also in the case of multiplicative deformations which are compatible with
quasi-free translationally invariant states, the $n$-point functions $\rho_n$
are determined by the two-point function $\rho_2$. In the following
proposition, we show under which conditions on $\rho_2$ the required
compatibility holds. Explicit solutions of these conditions on $\rho_2$ are
then discussed in Lemma \ref{lemma:RDeformations}.

\begin{proposition}\label{proposition:DeformationsCompatibleWithQuasiFreeStates}
	Let $\rho_2\in C^\infty(\Rl^d\times\Rl^d)$ be a two-point function of a multiplicative deformation, satisfying conditions {\it i)--iv)} of Lemma \ref{lemma:MomentumSpaceDeformationMaps} for $n=2$, and in addition, $p,q\in\Rl^d$,
	\begin{align}\label{eq:QuasiFreeCompatibilityConditionsOnRho2}
		\rho_2(p,-p)
		=1
		\,,\qquad
		\rho_2(-p,q)
		=
		\rho_2(p,-q)
		=
		\rho_2(q,p)=\rho_2(p,q)^{-1}
		\,.
	\end{align}
	Define
	\begin{align}\label{eq:NPointFunctionsFromTwoPointFunction}
		\rho_0:=1
		\,,\qquad
		\rho_1(p_1):=1
		\,,\qquad
		\rho_n(p_1,...,p_n)
		:=
		\prod_{1\leq l<r\leq n}\rho_2(p_l,p_r)
		\,,\quad n\geq 2.
	\end{align}
	Then
	\begin{enumerate}
		\item The $n$-point functions \eqref{eq:NPointFunctionsFromTwoPointFunction} define a multiplicative deformation $\rho\in\R_0$.
		\item The deformed product associated with $\rho$ has the form
		\begin{align}\label{eq:DeformedProductFromTwoPointFunction}
			\widetilde{(f\ot_\rho g)}_n(p_1,..,p_n)
			=
			\sum_{k=0}^n \left(\prod_{l=1}^k \prod_{r=k+1}^n \rho_2(p_l,p_r)^{-1}\right) \fti_k(p_1,..,p_k)\,\gti_{n-k}(p_{k+1},..,p_n)
			\,.
		\end{align}
		\item Let $\om$ be a quasi-free translationally invariant state on $\BU$. Then $\rho$ \eqref{eq:NPointFunctionsFromTwoPointFunction} and $\rho^{-1}$ are compatible with $\om$, and for all $f_n\in\Ss_n$, $k\in \{0,...,n\}$, the functions
		\begin{align}\label{eq:fhat}
			\fti_{n,k,\pm}(p_1,...,p_n)
			&:=
			\fti_n(p_1,...,p_n)\cdot \prod_{l=1}^k\prod_{r=k+1}^n \rho_2(p_l,p_r)^{\pm 1}
		\end{align}
		have the same expectation value as $f_n$ in $\om$.
		\item The opposite deformation is $\rho'=\rho^{-1}$.
	\end{enumerate}
\end{proposition}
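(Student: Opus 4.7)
The plan is to unpack each of the four claims and carry out mostly routine calculations, with one combinatorial lemma doing the real work. For \emph{(i)}, I would verify the conditions of Lemma~\ref{lemma:MomentumSpaceDeformationMaps} one by one. Smoothness, the polynomial growth bounds on derivatives, the polynomial lower bound, and Lorentz-invariance under $\La W_0=W_0$ all transfer from $\rho_2$ to the finite product $\rho_n=\prod_{1\le l<r\le n}\rho_2(p_l,p_r)$ by elementary manipulations; for instance, Leibniz plus the assumed power-law bound on $\rho_2$ yields a polynomial bound on derivatives of $\rho_n$ with exponent $\binom{n}{2}N_\bmu$. The only slightly less obvious point is star-invariance: the substitution $(p_1,\ldots,p_n)\mapsto(-p_n,\ldots,-p_1)$ reindexes the set of pairs via $(l,r)\mapsto(n-r+1,n-l+1)$, after which the $n=2$ case of the star-condition applied factorwise restores $\rho_n$.

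For \emph{(ii)}, I would substitute the ansatz $\rho_n=\prod_{l<r}\rho_2(p_l,p_r)$ into \eqref{eq:RhoProductKernels}. The combinatorial observation is that each pair $1\le l<r\le n$ is of exactly one of three types with respect to the cut $k$---both $\le k$, both $>k$, or crossing---so that $\rho_n=\rho_k(p_1,\ldots,p_k)\,\rho_{n-k}(p_{k+1},\ldots,p_n)\,\prod_{l\le k<r}\rho_2(p_l,p_r)$. Cancelling the first two factors against the denominator in \eqref{eq:RhoProductKernels} produces \eqref{eq:DeformedProductFromTwoPointFunction}.

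Part \emph{(iii)} is the core of the proof. By \emph{(ii)}, the quasi-free decomposition $\omti_{2n}=\sum_\sigma\prod_i\omti_2(p_{l_i},p_{r_i})$ and the fact that each $\omti_2$ is a measure on $\{p+q=0\}$, compatibility $\om(f\ot_\rho g)=\om(f\ot g)$ reduces to showing that, for every pairing $\sigma$, the cross-product $P_k(\bp):=\prod_{l=1}^k\prod_{r=k+1}^n\rho_2(p_l,p_r)$ equals $1$ pointwise on the variety $\{p_{l_i}+p_{r_i}=0\text{ for all }i\}$. I would classify each $\sigma$-pair as Type~I (both indices in $\{1,\ldots,k\}$), Type~II (both in $\{k+1,\ldots,n\}$), or Type~III (crossing). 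Fixing any $r>k$, the two factors contributed by a Type~I pair $(l,l')$ combine to $\rho_2(p_l,p_r)\rho_2(-p_l,p_r)=1$ via $\rho_2(-p,q)=\rho_2(p,q)^{-1}$, and symmetrically for Type~II with fixed $l\le k$. What remains is a product running only over indices in Type~III pairs: the diagonal factors $\rho_2(p_{l_i},-p_{l_i})=1$ by $\rho_2(p,-p)=1$, while each off-diagonal pair combines as $\rho_2(p_{l_i},-p_{l_j})\rho_2(p_{l_j},-p_{l_i})=\rho_2(p_{l_i},p_{l_j})^{-1}\rho_2(p_{l_j},p_{l_i})^{-1}=1$ using $\rho_2(p,-q)=\rho_2(p,q)^{-1}$ and $\rho_2(q,p)=\rho_2(p,q)^{-1}$. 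Hence $P_k\equiv 1$ on each component of the support. The same argument with all factors inverted yields compatibility of $\rho^{-1}$, and the assertion about $\fti_{n,k,\pm}$ is precisely the compatibility equation restricted to one tensor factor.

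For \emph{(iv)}, the remark after Lemma~\ref{lemma:MomentumSpaceDeformationMaps} gives $\rho'_n=\overline{\rho_n}$, so it suffices to show $\overline{\rho_2(p,q)}=\rho_2(p,q)^{-1}$. This follows by combining the $n=2$ star-condition, which reads $\overline{\rho_2(p,q)}=\rho_2(-q,-p)$, with the identity $\rho_2(-q,-p)=\rho_2(q,p)=\rho_2(p,q)^{-1}$ obtained by applying $\rho_2(-p,q)=\rho_2(p,q)^{-1}$ (and $\rho_2(q,p)=\rho_2(p,q)^{-1}$) to the assumed algebraic relations on $\rho_2$. The main obstacle is the bookkeeping in \emph{(iii)}: one must check that every cross-pair factor meets a canceling partner regardless of which pairing $\sigma$ one is evaluating on, and organising this via the Type~I/II/III classification seems the cleanest way to make the cancellation manifest.
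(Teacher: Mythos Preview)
Your proposal is correct and follows essentially the same route as the paper: parts \emph{(i)}, \emph{(ii)}, \emph{(iv)} are handled by the same direct computations, and for \emph{(iii)} your Type~I/II/III classification of pairs relative to the cut $k$ is precisely the paper's decomposition into $(\lhat,\rhat)$-, $(\lcheck,\rcheck)$-, and $(\lti,\rti)$-pairs, with the same two-stage cancellation argument. The only cosmetic difference is that in the remaining Type~III product you invoke $\rho_2(p,-p)=1$ directly for the diagonal, whereas the paper first converts to $\rho_2(p,p)$ and derives that this equals $1$ from the ${}^*$-invariance and \eqref{eq:QuasiFreeCompatibilityConditionsOnRho2}.
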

\begin{proof}
	{\em i)} We have to check that the $n$-point functions \eqref{eq:NPointFunctionsFromTwoPointFunction} satisfy the conditions of Lemma \ref{lemma:MomentumSpaceDeformationMaps}. Using the product formula \eqref{eq:NPointFunctionsFromTwoPointFunction}, it is straightforward to verify that conditions {\em i)--iii)} hold for all $n\geq2$. For {\em iv)}, we note that $\overline{\rho_2(-q,-p)}=\rho_2(p,q)$ by the ${}^*$-invariance of $\rho_2$, and compute
	\begin{align*}
		\overline{\rho_n(-p_n,...,-p_1)}
		=
		\prod_{1\leq l<r\leq n}\overline{\rho_2(-p_r,-p_l)}
		=
		\prod_{1\leq l<r\leq n}\rho_2(p_l,p_r)
		=
		\rho_n(p_1,...,p_n)
		\,.
	\end{align*}
	Hence {\em iv)} holds, and by definition \eqref{eq:NPointFunctionsFromTwoPointFunction}, also {\em v)} is satisfied. Thus $\rho\in\R_0$.

	{\em ii)} Here we just have to insert the definition of $\rho_n$ \eqref{eq:NPointFunctionsFromTwoPointFunction} into \eqref{eq:RhoProductKernels}. Let $n\in\Nl_0$ and $k\in\{0,...,n\}$. Then
	\begin{align}
		\frac{\rho_k(p_1,...,p_k)\rho_{n-k}(p_{k+1},...,p_n)}{\rho_n(p_1,...,p_n)}
		&=
		\frac{\prod\limits_{1\leq l'<r'\leq k} \rho_2(p_{l'},p_{r'})\cdot \prod\limits_{k+1\leq l''<r''\leq n} \rho_2(p_{l''},p_{r''})}{\prod\limits_{1\leq l<r\leq n} \rho_2(p_l,p_r)}
		\nonumber
		\\
		&=
		\prod_{l=1}^k \prod_{r=k+1}^n \rho_2(p_l,p_r)^{-1}
		\,,
		\label{eq:ProductKernelsFromTwoPointFunction}
	\end{align}
	and \eqref{eq:DeformedProductFromTwoPointFunction} follows.

	{\em iii)} For $\rho$ to be compatible with all quasi-free translationally invariant states, we will show that \eqref{eq:ProductKernelsFromTwoPointFunction} equals 1 for even $n=2N$, $k\in\{0,...,2N\}$, and $\bp\in -S^{2N}_{\rm qf}$ \eqref{eq:QuasiFreeSupport}. Fixing such $N,k$, let $(\bl,\br)=\{(l_1,r_1),...,(l_N,r_N)\}$ be a partition of $\{1,...,2N\}$ into pairs $(l_j,r_j)$ as in \eqref{eq:QuasiFreeSupport}, and $\bp\in\Rl^{2Nd}$ with $p_{l_j}+p_{r_j}=0$, $j=1,...,N$. We split the partition into three parts: First, the pairs $(l_j,r_j)$ with $l_j,r_j\leq k$, denoted $\{(\lhat_1,\rhat_1),...(\lhat_L,\rhat_L)\}$, second, the pairs $(l_j,r_j)$ with $l_j\leq k<r_j$, denoted $\{(\lti_1,\rti_1),...,(\lti_M,\rti_M)\}$, and third, the pairs $(l_j,r_j)$ with $k<l_j,r_j$, denoted $\{(\lcheck_1,\rcheck_1),...,(\lcheck_R,\rcheck_R)\}$. Clearly, these sets are disjoint, and their union is $\{(l_1,r_1),...,(l_N,r_N)\}$, i.e. in particular $\{\lhat_1,...,\lhat_L,\rhat_1,...,\rhat_L,\lti_1,...,\lti_M\}=\{1,...,k\}$ and $\{\rti_1,...,\rti_M,\lcheck_1,...,\lcheck_R,\rcheck_1,...,\rcheck_R\}=\{k+1,...,2N\}$.

	We now rewrite \eqref{eq:ProductKernelsFromTwoPointFunction} using this splitting as well as the support condition $p_{l_j}+p_{r_j}=0$ and the properties \eqref{eq:QuasiFreeCompatibilityConditionsOnRho2}, which give
	\begin{align*}
		\prod_{l=1}^k\prod_{r=k+1}^{2N} \rho_2(p_l,p_r)^{-1}
 		&=
		\prod_{r=k+1}^{2N}\left(
		\prod_{i=1}^L   (\rho_2(p_{\lhat_i},p_r)^{-1}\rho_2(p_{\rhat_i},p_r)^{-1})
		\cdot
  		\prod_{j=1}^M \rho_2(p_{\lti_j},p_r)^{-1}
  		\right)
  		\\
  		&=
		\prod_{r=k+1}^{2N}\left(
		\prod_{i=1}^L
   		( \rho_2(p_{\lhat_i},p_r)^{-1} \rho_2(-p_{\lhat_i},p_r)^{-1})
  		\cdot
  		\prod_{j=1}^M  \rho_2(p_{\lti_j},p_r)^{-1}
  		\right)
  		\\
  		&=
    		\prod_{j=1}^M
  		\prod_{r=k+1}^{2N}  \rho_2(p_{\lti_j},p_r)^{-1}
  		\\
  		&=
		\prod_{j=1}^M
  		\left(
  		\prod_{t=1}^R( \rho_2(p_{\lti_j},p_{\lcheck_t})^{-1} \rho_2(p_{\lti_j},p_{\rcheck_t})^{-1})
  		\prod_{i=1}^M  \rho_2(p_{\lti_j},p_{\rti_i})^{-1}
  		\right)
  		\\
  		&=
		\prod_{j=1}^M
  		\left(
  		\prod_{t=1}^R( \rho_2(p_{\lti_j},p_{\lcheck_t})^{-1} \rho_2(p_{\lti_j},-p_{\lcheck_t})^{-1})
		\prod_{i=1}^M  \rho_2(p_{\lti_j},p_{\rti_i})^{-1}
  		\right)
  		\\
		&=
  		\prod_{j=1}^M
		\prod_{i=1}^M  \rho_2(p_{\lti_j},p_{\lti_i})
  		\,.
	\end{align*}
	Using the ${}^*$-invariance of $\rho_2$, and \eqref{eq:QuasiFreeCompatibilityConditionsOnRho2}, we get $\rho_2(p,p)=\overline{\rho_2(-p,-p)}=\overline{\rho_2(-p,p)}^{-1}=1$.  Hence in the product $\prod_{i,j}^M  \rho_2(p_{\lti_j},p_{\lti_i})$ the diagonal terms  $\rho_2(p_{\lti_j},p_{\lti_j})=1$ drop out. The off-diagonal terms appear in reciprocal pairs $\rho_2(p_{\lti_i},p_{\lti_j})$ and $\rho_2(p_{\lti_j},p_{\lti_i})=\rho_2(p_{\lti_i},p_{\lti_j})^{-1}$, and therefore drop out as well. As the partition $(\bl,\br)$ was arbitrary, the compatibility of $\rho$ and $\om$ follows.

	Replacing $\rho_2$ by $1/\rho_2$, we also have compatibility of $\rho^{-1}$ and $\om$. The equation $\om(f_{n,k,\pm})=\om(f_n)$ is just a reformulation of these compatibility statements.

	{\em iv)} As $\rho_2$ satisfies \eqref{eq:QuasiFreeCompatibilityConditionsOnRho2}, and is invariant under the ${}^*$-operation \eqref{eq:StarInMomentumSpace}, we have
	\begin{align*}
		\overline{\rho_2(p,q)}=\rho_2(-q,-p)=\rho_2(q,p)=\rho_2(p,q)^{-1}
		\,.
	\end{align*}
	In view of the product form of the $\rho_n$, this implies $\overline{\rho_n}=1/\rho_n$. But the $n$-point functions of the opposite deformation $\rho'$ are the conjugates of the $\rho_n$ \eqref{eq:RhoPrimeIsRhoBar}. Hence $\rho'=\rho^{-1}$.
\end{proof}

Having reduced the problem of finding deformations compatible with quasi-free translationally invariant states to conditions on the two-point function $\rho_2$, we next solve these conditions by discussing suitable two-point functions. These will be realized in terms of a deformation function $R$ and an admissible matrix $Q$, defined below.

\begin{definition}\label{definition:DeformationFunction}
	A deformation function is a smooth function $R:\Rl\to\Cl$ such that
	\begin{enumerate}
		\item
		\begin{align}\label{eq:R-Relations}
			R(a)^{-1}
			&=
			\overline{R(a)}
			=
			R(-a)\,,
			\qquad
			R(0)=1
			\,,
		\end{align}
		\item For each $k\in\Nl$, there exists $C_k, N_k>0$, such that
			\begin{align}\label{eq:BoundsOnR}
				\left|\frac{\partial^k R(a)}{\partial a^k}\right|
				&
				\leq
				C_k\,(1+a^2)^{N_k}
				\,,\qquad
				a\in\Rl\,,
			\end{align}
		\item The Fourier transform $\Rti\in\Ss_1'$ of $R$ has support on the positive half line.
	\end{enumerate}
\end{definition}

Note that the support restriction on $\Rti$ amounts to requiring that $R$ has an analytic continuation to the upper half plane. More precisely \cite[Thm. IX.16]{ReedSimon:1975}, any deformation function is the boundary value in the sense of $\Ss_1'$ of a function analytic in the upper half plane, satisfying polynomial bounds at infinity and at the real boundary. Conversely, if $R$ is a function analytic on the upper half plane, satisfying suitable polynomial bounds, then its distributional boundary value on the real line exists, and its Fourier transform has support in the right half line. As concrete examples, consider the functions
\begin{align}\label{eq:RExampleFunctions}
	R(a)
	=
	e^{ica}\prod_{k=1}^N \frac{z_k-a}{z_k+a}
	\,,\qquad
	c\geq 0,\;\im z_1,...,\im z_N\geq 0
	\,,
\end{align}
where with each $z_k$, also $-\overline{z_k}$ is contained in the set of zeros $\{z_1,...,z_N\}$. As these functions satisfy the first two conditions of Definition \ref{definition:DeformationFunction}, and furthermore have bounded analytic continuations to the upper half plane, they are examples of deformation functions.

\begin{lemma}\label{lemma:RDeformations}
	Let $R$ be a deformation function, and let $Q$ be a $(d\times d)$-matrix which is antisymmetric w.r.t. the Minkowski inner product on $\Rl^d$, and satisfies
	\begin{align}\label{eq:QLambdaInvariance}
		\La Q\La^{-1}
		&=
		\left\{
		\begin{array}{rcl}
			Q &;& \La\in\LGpo\;\text{with}\;\La W_0=W_0\\
			-Q &;& \La\in\LG_+^\downarrow\;\text{with}\;\La W_0=W_0
		\end{array}
		\right.
	\end{align}
	Then the deformation two-point function
	\begin{align}\label{eq:Rho2FromRAndQ}
		\rho_2(p,q)
		:=
		R(-p\cdot Qq)
	\end{align}
	satisfies all assumptions of Proposition \ref{proposition:DeformationsCompatibleWithQuasiFreeStates}.
\end{lemma}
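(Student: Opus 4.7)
The plan is to unwind the definition $\rho_2(p,q) = R(-p\cdot Qq)$ and verify, one by one, conditions \emph{i)}--\emph{iv)} of Lemma~\ref{lemma:MomentumSpaceDeformationMaps} at $n=2$ together with the algebraic identities \eqref{eq:QuasiFreeCompatibilityConditionsOnRho2}. Every verification rests on exactly three ingredients: the antisymmetry $p\cdot Qq = -q\cdot Qp$ of $Q$ with respect to the Minkowski inner product; the invariance assumption $\La Q\La^{-1} = Q$ for orthochronous $\La \in \LGpo$ with $\La W_0=W_0$; and the defining relations $R(0)=1$ and $R(-a)=R(a)^{-1}=\overline{R(a)}$ from Definition~\ref{definition:DeformationFunction} \emph{i)}.

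The analytic conditions come essentially for free. Smoothness of $\rho_2$ is immediate since $R$ is smooth and $(p,q)\mapsto -p\cdot Qq$ is a polynomial. The polynomial bound on $\partial^\bmu\rho_2$ required by Lemma~\ref{lemma:MomentumSpaceDeformationMaps} \emph{i)} then follows from the chain rule: any such derivative is a finite sum of terms, each a polynomial in $(p,q)$ times a derivative $R^{(k)}$ evaluated at $-p\cdot Qq$, and by \eqref{eq:BoundsOnR} those $R^{(k)}$ satisfy polynomial bounds that translate into polynomial bounds in $(p,q)$. Condition \emph{ii)} is trivial because $R(a)\overline{R(a)}=R(a)R(a)^{-1}=1$ gives $|\rho_2(p,q)|=1$ uniformly, so any lower bound with $M=0$ works. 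For the Lorentz invariance \emph{iii)}, I use that every $\La$ preserves the Minkowski inner product to rewrite $\La p\cdot Q\La q = p\cdot(\La^{-1}Q\La)q$, and then apply $\La Q\La^{-1}=Q$ to conclude $\rho_2(\La p,\La q)=\rho_2(p,q)$. The ${}^*$-invariance \emph{iv)} combines antisymmetry and Definition~\ref{definition:DeformationFunction} \emph{i)}: explicitly, $\overline{\rho_2(-q,-p)} = \overline{R(-q\cdot Qp)} = R(q\cdot Qp) = R(-p\cdot Qq)=\rho_2(p,q)$, using $\overline{R(a)}=R(-a)$ and then antisymmetry.

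The four relations in \eqref{eq:QuasiFreeCompatibilityConditionsOnRho2} are each a one-line application of the same toolkit. For the first, antisymmetry gives $p\cdot Qp=0$, so $\rho_2(p,-p)=R(p\cdot Qp)=R(0)=1$. Each of $\rho_2(-p,q)$, $\rho_2(p,-q)$, and $\rho_2(q,p)$ collapses to $R(+p\cdot Qq)$ — the first two by linearity of the map $(p,q)\mapsto -p\cdot Qq$, the third by antisymmetry of $Q$ — and $R(+p\cdot Qq)=R(-p\cdot Qq)^{-1}=\rho_2(p,q)^{-1}$ by Definition~\ref{definition:DeformationFunction} \emph{i)}. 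The only real thing to watch for, rather than a genuine obstacle, is the bookkeeping of signs: the minus sign inside $-p\cdot Qq$, the antisymmetry of $Q$, and the two equivalent ways $R(-a)=R(a)^{-1}=\overline{R(a)}$ of inverting $R$ must be composed consistently. Once the signs are tracked honestly, no further input is required.
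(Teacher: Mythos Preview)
Your proof is correct and follows exactly the paper's approach, which is in fact terser than yours (it simply declares the checks a ``straightforward computation''). The one point the paper singles out that you do not address is the time-reversing case of condition~\emph{iii)}: for $\La\in\LG_+^\downarrow$ with $\La W_0=W_0$ the hypothesis \eqref{eq:QLambdaInvariance} gives $\La Q\La^{-1}=-Q$, so $\rho_2(\La p,\La q)=R(p\cdot Qq)=\overline{\rho_2(p,q)}$, and the paper remarks that this extra complex conjugation is exactly what is needed since $\alpha_\La$ is antilinear for such $\La$ (the sign flip in $Q$ cancelling against the conjugation via $\overline{R(-a)}=R(a)$). Since you explicitly restrict to orthochronous $\La$, which matches how condition~\emph{iii)} is established in the proof of Lemma~\ref{lemma:MomentumSpaceDeformationMaps}, this is not a gap in your argument but rather a small completeness remark you could add.
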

\begin{proof}
	Checking the conditions {\em i)--iv)} of Lemma \ref{lemma:MomentumSpaceDeformationMaps}, and the additional properties \eqref{eq:QuasiFreeCompatibilityConditionsOnRho2} is a matter of straightforward computation making use of the listed properties of $R$, and the antisymmetry and partial Lorentz invariance of $Q$. Note that the minus sign for time-reversing Lorentz transformations which appears in \eqref{eq:QLambdaInvariance} cancels against the complex conjugation of $\alpha_\La$, $\La\in\LG_+^\downarrow$, since $\overline{R(-a)}=R(a)$.
\end{proof}

Given a deformation map $\rho=\rho(R,Q)$ of the form described above, it is straightforward to check that a general Poincar\'e transformation $(x,\La)\in\PG_+$ acts on the associated deformed product according to
\begin{align}\label{eq:PoincareTransformsOfRQProduct}
	\alpha_{x,\La}(f\ot_{\rho(R,Q)}g)
	=
	\alpha_{x,\La}(f)\ot_{\rho(R,\pm\La Q\La^{-1})}\alpha_{x,\La}(g)
	\,,\qquad
	f,g\in\BU\,,
\end{align}
where the $\pm$-sign is ``$+$'' for orthochronous and ``$-$'' for non-orthochronous $\La$. This identity can easily be verified on the basis of \eqref{eq:DeformedProductFromTwoPointFunction} and \eqref{eq:R-Relations}.

It has been shown in \cite{GrosseLechner:2007} that the most general matrix satisfying \eqref{eq:QLambdaInvariance} is, in case the spacetime dimension is $d=4$ or $d\neq4$,
\begin{align}\label{eq:CovariantQ}
	Q
	&=
	\left(
		\begin{array}{cccc}
			0&\kappa&0&0\\
			\kappa&0&0&0\\
			0&0&0&\kappa'\\
			0&0&-\kappa'&0
		\end{array}
	\right)
	\,,\qquad
	Q
	=
	\left(
	\begin{array}{ccccc}
		0 & \kappa & 0 & \cdots & 0\\
		\kappa & 0  & 0 & \cdots & 0\\
		0      & 0  & 0 & \cdots & 0\\
		\vdots & \vdots &\vdots&\ddots&\vdots\\
		0 & 0&0&\cdots&0
	\end{array}
	\right)\,,
\end{align}
with arbitrary parameters $\kappa,\kappa'\in\Rl$. For Lorentz transformations $\La$ which map the wedge $W_0$ onto its causal complement $W_0'$, one has
\begin{align}\label{eq:QLambdaInvariance2}
	\La Q\La^{-1}
	&=
	\left\{
	\begin{array}{rcl}
		-Q &;& \La\in\LGpo\;\text{with}\;\La W_0=W_0'\\
		Q &;& \La\in\LG_+^\downarrow\;\text{with}\;\La W_0=W_0'
	\end{array}
	\right.\;.
\end{align}
This implies that for fixed $R$, the opposite deformation is given by inverting the sign of $Q$,
\begin{align}\label{eq:ReflectionJOfRQProduct}
	\alpha_j(f\ot_{\rho(R,Q)}g)
	=
	\alpha_j(f)\ot_{\rho(R,-Q)}\alpha_j(g)
	\,,\qquad
	f,g\in\BU\,.
\end{align}

We also mention that the deformations $\rho(R,Q)$ naturally lead to one-parameter families of deformation maps $\rho(R,\la\cdot Q)$, $\la\in\Rl$. In the limit $\la\to0$, we recover the undeformed product.

\begin{proposition}\label{proposition:RQProductContinuousInParameter}
	Let $R$ be a deformation function and $Q$ a real $(d\times d)$-matrix. Then, for all $f,g\in\BU$,
	\begin{align}\label{eq:LimitOfLambdaProduct}
		\lim_{\la\to0}f\ot_{\rho(R,\la\cdot Q)}g
		=
		f\ot g\,.
	\end{align}
\end{proposition}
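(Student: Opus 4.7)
The approach is to use the explicit momentum-space formula \eqref{eq:DeformedProductFromTwoPointFunction} and reduce the assertion to convergence of a family of Schwartz multipliers to the constant function $1$.

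First I would reduce the problem to convergence in a single summand $\Ss_n$. Since $f,g\in\BU$ are terminating sequences, only finitely many $n$ contribute, and the locally convex direct sum topology on $\BU=\bigoplus_n\Ss_n$ means convergence in $\BU$ is equivalent to convergence in every $\Ss_n$ of the corresponding components. Passing to the Fourier side is harmless since the Fourier transform is a topological isomorphism of $\Ss_n$.

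Writing $\rho_2^{(\la)}(p,q):=R(-\la\,p\cdot Q q)$ and using formula \eqref{eq:DeformedProductFromTwoPointFunction}, the difference of the deformed and undeformed products takes the form
\begin{align*}
\widetilde{(f\ot_{\rho(R,\la Q)}g)_n}(\vec p) - \widetilde{(f\ot g)_n}(\vec p)
= \sum_{k=0}^n \bigl(P_\la^{n,k}(\vec p) - 1\bigr)\,\fti_k(p_1,\ldots,p_k)\,\gti_{n-k}(p_{k+1},\ldots,p_n),
\end{align*}
with the kernel factor
\begin{align*}
P_\la^{n,k}(\vec p) := \prod_{l=1}^k\prod_{r=k+1}^n R(-\la\,p_l\cdot Qp_r)^{-1}.
\end{align*}
Since each $\fti_k\ot\gti_{n-k}$ lies in $\Ss_n$, it suffices to show that for any fixed $h\in\Ss_n$, the product $(P_\la^{n,k}-1)\cdot h$ tends to $0$ in $\Ss_n$ as $\la\to0$.

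For the pointwise limit, $R(0)=1$ and smoothness of $R$ imply $P_\la^{n,k}\to 1$ pointwise and $\partial^\alpha P_\la^{n,k}\to 0$ pointwise for every multi-index $\alpha\neq 0$. The uniformity needed for Schwartz convergence comes from the polynomial bounds \eqref{eq:BoundsOnR}: applying the chain rule to each factor $R(-\la\,p_l\cdot Qp_r)^{-1}$ (using that $R$ is bounded away from $0$ on compact sets and that $1/R$ satisfies analogous polynomial bounds on its derivatives, by \eqref{eq:R-Relations}) and then the product rule, one obtains bounds
\begin{align*}
|\partial^\alpha P_\la^{n,k}(\vec p)| \leq C_\alpha (1+|\vec p|^2)^{N_\alpha}, \qquad \la\in[-1,1],
\end{align*}
uniform in the parameter $\la$. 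Combining via Leibniz with the Schwartz decay of $h$, every Schwartz seminorm $\sup_{\vec p}|\vec p^{\,\sigma}\,\partial^\tau((P_\la^{n,k}-1)h)(\vec p)|$ is dominated by an integrable majorant, so dominated convergence yields seminorm convergence to $0$.

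The only mildly technical step is the uniform polynomial bound on derivatives of $P_\la^{n,k}$; once this is in hand, the rest is a standard dominated convergence argument in Schwartz space. Summing over $k$ and $n$ then gives the claim.
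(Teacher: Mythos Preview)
Your proposal is correct and follows essentially the same approach as the paper: reduce to each $\Ss_n$, identify the multiplier $P_\la^{n,k}$ in the product formula \eqref{eq:DeformedProductFromTwoPointFunction}, and use the polynomial bounds \eqref{eq:BoundsOnR} on derivatives of $R$ to control Schwartz seminorms. The paper packages the estimate slightly differently, showing directly that $\sup_{\bp}(1+|\bp|^2)^{-N_{\bmu}}|\partial^{\bmu}(r_\la-1)|\to 0$ for suitable $N_{\bmu}$, but this is equivalent to your uniform-bound-plus-pointwise-limit argument; two minor remarks: since $|R|=1$ everywhere by \eqref{eq:R-Relations}, $1/R=\overline{R}$ and the bounds on $1/R$ are immediate (no need to restrict to compacta), and your ``dominated convergence'' for Schwartz \emph{sup} seminorms is really a tightness/equicontinuity argument rather than Lebesgue's theorem, but the reasoning goes through.
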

\begin{proof}
	As $R(0)=1$, the functions $r_\la(p_1,...,p_n):=\prod_{l,r} R(\la\,p_l\cdot Qp_r)$ \eqref{eq:DeformedProductFromTwoPointFunction} appearing in the product $\ot_{\rho(R,\la\cdot Q)}$ converge pointwise to the constant function $1$ as $\la\to0$. This limit is also valid in a stronger topology: Making use of the polynomial boundedness of the derivatives of $R$, it is not difficult to show that for any multi index $\boldsymbol\mu\in\Nl^{nd}$, there exists $N(\boldsymbol\mu)\in\Rl$ such that
	\begin{align*}
		\lim_{\la\to0}\sup_{\bp\in\Rl^{nd}}\frac{|\partial_\bp^{\boldsymbol\mu} (r_\la(\bp)-1)|}{(1+\|\bp\|^2)^{N(\boldsymbol\mu)}}
		=
		0\,.
	\end{align*}
	It then follows by straightforward estimates that $(f\ot_{\rho(R,\la\cdot Q)}g)_n=r_\la\cdot (f\ot g)_n\to (f\ot g)_n$ as $\la\to0$, in the topology of $\Ss_n$, $n\in\Nl$. This implies the claimed limit \eqref{eq:LimitOfLambdaProduct}.
\end{proof}

The simplest non-trivial deformation function is $R(a):=e^{ia}$. This example was studied in \cite{GrosseLechner:2008}. For this function, the corresponding deformed product
\begin{align*}
	\widetilde{(f\otimes_\rho g)}_n(p_1,...,p_n)
	&=
	\sum_{k=0}^n e^{i(p_1+...+p_k)\cdot Q(p_{k+1}+...+p_n)}\,\fti_k(p_1,...,p_k)\,\gti_{n-k}(p_{k+1},...,p_n)
\end{align*}
can also be written as
\begin{align*}
	(f\otimes_\rho g)_n(x_1,...,x_n)
	=
	(2\pi)^{-d}\int dq\,dy\,e^{-iq\cdot y}\,(\alpha_{Qq}f \ot \alpha_y g)_n(x_1,...,x_n)
	\,.
\end{align*}
It is thus identical to the Rieffel-deformation \cite{Rieffel:1992} of the tensor
product $\otimes$ with the $\Rl^d$-action $\alpha|_{\Rl^d}$
\cite{GrosseLechner:2008}. In particular, it follows that all translationally
invariant states are compatible with this deformation, a fact proven in a
$C^*$-framework in \cite{Rieffel:1993}. In our present setting, we immediately
see that because of the antisymmetry of $Q$, we have $e^{-i (p_1+...+p_k)\cdot
Q(p_{k+1}+...+p_n)}=1$ for all $\bp\in S^n_{\rm inv}$
\eqref{eq:TranslationallyInvariantSupport}. By Proposition \ref{proposition:RhoOmCompatibility}, this implies
the compatibility of this deformation with all translationally invariant (not
necessarily quasi-free) states.

However, for the deformations given by a general deformation function $R$, the restriction
to quasi-free states is necessary. In fact, assume that the $n$-point functions $\rho_n$ defined in terms of $R$ by \eqref{eq:Rho2FromRAndQ} and \eqref{eq:NPointFunctionsFromTwoPointFunction} satisfy \eqref{eq:RhoCompatibleWithOmega-Condition} on $S^n_{\rm inv}$ \eqref{eq:TranslationallyInvariantSupport}. Taking $n=4$ and $k=1$ in
\eqref{eq:RhoCompatibleWithOmega-Condition}, we then have $R(p_1\cdot
Qp_2)R(p_1\cdot Qp_3)R(p_1\cdot Qp_4)=1$ for all $p$ with $p_1+...+p_4=0$.
Inserting $p_4=-(p_1+p_2+p_3)$ and making use of the antisymmetry of $Q$ as well
as \eqref{eq:R-Relations} yields the condition
\begin{align*}
	R(p_1\cdot Qp_2)R(p_1\cdot Qp_3)
	&=
	R(p_1\cdot Qp_2+p_1\cdot Qp_3)
	\,.
\end{align*}
Since $p_1,p_2,p_3$ can be chosen independently, this condition is only
satisfied for $R(a)=e^{ica}$. We will therefore restrict our attention to
quasi-free translationally invariant states $\om$ in the following.
\\
\\
We now consider the question under which conditions a multiplicative deformation $\rho$ is wedge-local in the GNS representation associated with a quasi-free translationally invariant state $\om$. As we have been working with the full tensor algebra $\BU$ instead of its quotient $\BU/\Le$ by the locality
ideal $\Le$, we have to consider states annihilating $\Le$. Picking such a state $\om$, we recall from Lemma \ref{lemma:LocalityInOmega} that wedge-locality in the GNS representation of
$(\BU,\om)$ amounts to
\begin{align}\label{eq:WedgeLocalityInOmega-Again}
	\om((u\ot_\rho f)\ot (g'\ot_{\rho'} v))
	=
	\om((u\ot_{\rho'}g')\ot (f\ot_{\rho} v))
\end{align}
for all $f\in\BU(W_0)$, $g'\in\alpha_j(\BU(W_0))=\BU(W_0')$, and all $u,v\in\BU$.

To motivate the following steps, it is instructive to recall the known results about the special case $R(a)=e^{ia}$ first. In this context, \eqref{eq:WedgeLocalityInOmega-Again} is known to hold for a translationally invariant state $\om$ annihilating $\Le$ if $\om$ satisfies also the spectrum condition and $Q$ is {\em admissible} in the sense that $Q\overline{V_+}\subset\overline{W_0}$ \cite{BuchholzLechnerSummers:2010}. This interplay of locality and spectral properties can be understood as follows. The spectrum condition restricts the supports of the $n$-point functions $\omti_n$ to those $\bp\in\Rl^{nd}$ with $p_1,p_1+p_2,...,p_1+...+p_n\in\overline{V_+}$ \eqref{eq:SpectrumConditionSupport}. This implies that in \eqref{eq:WedgeLocalityInOmega-Again}, we may restrict to $u$ with $\supp\uti\subset -S^n_{\rm inv}$. For those $u$, in the deformed product
\begin{align*}
	\widetilde{(u\ot_\rho f)}_n(p_1,...,p_n)
	&=
	\sum_{k=0}^n e^{i(p_1+...+p_k)\cdot
	Q(p_{k+1}+...+p_n)}\,\tilde{u}_k(p_1,...,p_k)\,\fti_{n-k}(p_{k+1},...,p_n)
	\\
	&=
	\sum_{k=0}^n
\tilde{u}_k(p_1,...,p_k)\,\alpha_{-Q(p_1+...+p_k)}(\fti)_{n-k}(p_{k+1},...,p_n)
\end{align*}
only translations of $f$ in the directions $-Q(p_1+...+p_k)$ appear, which by admissibility of $Q$ lie in the right wedge $\overline{W_0}$. But translations along $x\in\overline{W_0}$ preserve the support of $f\in\BU(W_0)$. Similar
arguments can be applied to the other terms in \eqref{eq:WedgeLocalityInOmega-Again}, showing that $g'\in\BU(W_0')$ is effectively translated in the opposite direction, so that also the support of $g'$ in $W_0'=-W_0$ is preserved. Thus the effect of the deformation consists in shifting the spacelike supports of $f$ and $g'$ apart, and the locality condition of $\om$ then allows to conclude that \eqref{eq:WedgeLocalityInOmega-Again} holds \cite{GrosseLechner:2008}.

As we are working here with a family of deformations containing $R(a)=e^{ia}$, we will in the following also require that $\om$ satisfies the spectrum condition and $Q$ is admissible. This last condition simply amounts to choosing the parameter $\kappa$ appearing in $Q$ \eqref{eq:CovariantQ} non-negative.

The multiplicative deformations given by a function $R$ which are not of exponential form do not simply act as translations on $\BU$, and the preceding locality argument for the case $R(a)=e^{ia}$ has to be adapted. Here the half-sided support of the Fourier transform $\Rti$ (Definition \ref{definition:DeformationFunction} {\em iii)}) comes into play, which makes it possible to control the effect of the deformed products $\ot_\rho$, $\ot_{\rho'}$ on the spacetime supports of suitable test functions.

\begin{proposition}\label{proposition:RShiftOperator}
Let $R$ be a deformation function and define, $x\in\Rl^d$,
	\begin{align}\label{eq:tauR}
		\tau^R_x&:\BU\to\BU,
		\nonumber
		\\
		\widetilde{(\tau^R_x f)_n}(p_1,...,p_n)
		&:=
		\fti_n(p_1,...,p_n)\cdot\prod_{k=1}^n R(x\cdot p_k)
		\,.
	\end{align}
\begin{propositionlist}
	\item $\tau_x^R$ is a continuous automorphism of $\BU$ for any $x\in\Rl^d$. For $x\in\overline{W_0}$, one has
		\begin{align}
			\tau^R_{\pm x}(\BU(\pm W_0))\subset\BU(\pm W_0)
			\,.
		\end{align}
	\item Let $n,m\in\Nl_0$, $h^\pm\in\Ss_m$ with $\supp\hti^\pm\subset\overline{V_\pm}$, and $f\in\Ss_n(W_0)$, $g'\in\BU(W_0')$. Then the deformation map $\rho$ given by $R$ and an admissible matrix $Q$ satisfies
	\begin{align}
		\supp (h^-\ot_\rho f) &\subset \Rl^{md} \times (W_0)^{\times n}
		\,,\\
		\supp(h^-\ot_{\rho'}g') &\subset \Rl^{md} \times (W_0')^{\times n}
		\,,\\
		\supp(g'\ot_{\rho'} h^+) &\subset (W_0')^{\times n}\times\Rl^{md}
		\,,\\
		\supp(f\ot_{\rho} h^+) &\subset (W_0)^{\times n}\times\Rl^{md}
		\,.
	\end{align}
\end{propositionlist}
\end{proposition}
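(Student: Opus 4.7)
The plan is to handle the two parts of the proposition in sequence, the first part essentially reducing to a Paley--Wiener style argument about $R$, and the second part reducing to part (i) applied variable-by-variable.

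For part (i), I first check that $\tau^R_x$ is a continuous linear map on each $\Ss_n$: the polynomial bounds \eqref{eq:BoundsOnR} on $R$ and its derivatives ensure that $\prod_k R(x\cdot p_k)$ is a smooth multiplier of polynomial growth together with all its derivatives, hence preserves $\Ss_n$ continuously. The inverse $\tau^R_{-x}$ exists by $R(-a)=R(a)^{-1}$. The factorization $\prod_{k=1}^n R(x\cdot p_k)=\prod_{k=1}^j R(x\cdot p_k)\cdot\prod_{k=j+1}^n R(x\cdot p_k)$ shows, in view of the momentum-space form of $\ot$, that $\tau^R_x(f\ot g)=\tau^R_x(f)\ot\tau^R_x(g)$, so $\tau^R_x$ is a homeomorphic ${}^*$-automorphism of $\BU$. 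For the support statement I compute, by a one-dimensional substitution in $R(a)=\int\Rti(t)e^{-ita}\,dt$, that the Fourier transform of the single-variable multiplier $p\mapsto R(y\cdot p)$ is the pushforward of $\Rti$ along $t\mapsto ty$, hence a distribution supported on the ray $\{ty:t\in\supp\Rti\}\subset\{ty:t\ge0\}$. For $x\in\overline{W_0}$ this ray lies in $\overline{W_0}$, and $\tau^R_x$ is (in position space) an independent convolution in each argument with this distribution; the inclusion $W_0+\overline{W_0}\subset W_0$ (which follows from $W_0$ being an open convex cone) then gives $\tau^R_x(\BU(W_0))\subset\BU(W_0)$. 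The case $-W_0$ is identical with $\overline{W_0'}=-\overline{W_0}$ and $W_0'+\overline{W_0'}\subset W_0'$.

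For part (ii), the crucial simplification is that when $h^-$ is homogeneous of degree $m$ and $f$ is homogeneous of degree $n$, only the $k=m$ summand in \eqref{eq:DeformedProductFromTwoPointFunction} survives, giving
\begin{align*}
\widetilde{(h^-\ot_\rho f)_{m+n}}(p_1,\ldots,p_{m+n})
&=
\prod_{l=1}^{m}\prod_{r=m+1}^{m+n}R(p_l\cdot Qp_r)\cdot\widetilde{h^-}(p_1,\ldots,p_m)\,\fti(p_{m+1},\ldots,p_{m+n}),
\end{align*}
after using $R(-a)=R(a)^{-1}$ to flip the sign. The Minkowski-antisymmetry of $Q$ lets me rewrite $p_l\cdot Qp_r=y_l\cdot p_r$ with $y_l:=-Qp_l$, and the support condition $\supp\widetilde{h^-}\subset\overline{V_-}^{\times m}$ combined with admissibility $Q\overline{V_+}\subset\overline{W_0}$ forces $y_l\in\overline{W_0}$ for every $l$ and every $(p_1,\ldots,p_m)$ in $\supp\widetilde{h^-}$. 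The multiplier $\prod_{l,r}R(y_l\cdot p_r)$ factorizes as $\prod_{r=m+1}^{m+n}\bigl(\prod_{l=1}^m R(y_l\cdot p_r)\bigr)$, which is precisely the momentum-space description of applying, in each of the last $n$ spacetime variables, a convolution with the distribution from part (i) attached to $y_1,\ldots,y_m\in\overline{W_0}$. The stability $W_0+\overline{W_0}\subset W_0$ then preserves $\supp f\subset W_0^{\times n}$, while the first $m$ variables remain unconstrained, yielding $\supp(h^-\ot_\rho f)\subset\Rl^{md}\times W_0^{\times n}$.

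The other three claims follow by the same mechanism with the signs adapted. For $h^-\ot_{\rho'}g'$ I use $\rho'=\rho(R,-Q)$ from \eqref{eq:ReflectionJOfRQProduct}, whose admissibility reads $-Q\overline{V_+}\subset\overline{W_0'}$, so the directions $y_l$ land in $\overline{W_0'}$ and the convolution is compatible with $\supp g'\subset(W_0')^{\times n}$ via $W_0'+\overline{W_0'}\subset W_0'$. For $g'\ot_{\rho'}h^+$ and $f\ot_\rho h^+$ the roles of $h$ and $f/g'$ are swapped: the surviving $k$ is now $n$, and the $R$-multipliers depend on $p_{n+1},\ldots,p_{n+m}\in\overline{V_+}$ and act on the first $n$ momentum variables. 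The corresponding direction vectors $w_r=\mp Qp_r$ again lie in $\overline{W_0'}$ respectively $\overline{W_0}$ by admissibility, so the support of the first $n$ spacetime variables is preserved in $W_0'$ respectively $W_0$.

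The only step that is not pure bookkeeping is the support calculation for $p\mapsto R(y\cdot p)$ in part (i), together with the stability $W+\overline{W}\subset W$ for open convex cones; all the remaining work in part (ii) is the combinatorial identification of the multiplier $\prod_{l,r}R(p_l\cdot Qp_r)$ with an iterated $\tau^R$-type action in the appropriate variables, and careful sign bookkeeping between $Q$, $-Q$, and $R(-a)=R(a)^{-1}$. I expect the main place where one has to be careful is that the directions $y_l$ depend on the integration variables $p_l$, so the convolution interpretation is really variable-by-variable for each fixed $\mathbf{p}$ in the momentum support of $h^\pm$, which is precisely why the hypothesis is phrased as a support condition on $\widetilde{h^\pm}$ rather than on $h^\pm$.
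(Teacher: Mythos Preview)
Your proof is correct and follows essentially the same strategy as the paper's: the same convolution/Paley--Wiener argument for part~(i), and the same identification of the deformed-product multiplier with an iterated $\tau^R$-action (for fixed momenta in $\supp\hti^\pm$) for part~(ii). The only cosmetic difference is that the paper reduces the last two support statements of~(ii) to the first two via the ${}^*$-involution, $(g'\ot_{\rho'}h^+)=((h^+)^*\ot_{\rho'}(g')^*)^*$ and $(f\ot_{\rho} h^+)=((h^+)^*\ot_{\rho} f^*)^*$, whereas you handle them directly by swapping the roles of the fixed and convolved variables.
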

\begin{proof}
	{\em i)} The linearity and continuity of each $\tau_x^R$ is clear. In momentum space, $\tau_x^R$ multiplies by the tensor product function $R_x^{\ot n}$, $R_x(p):=R(x\cdot p)$. Hence $\tau_x^R$ is an algebra homomorphism. It is also invertible, with inverse $(\tau_x^R)^{-1}=\tau_{-x}^R$, because $R(-a)=R(a)^{-1}$. By definition, $\tau_x^R$ has the identity of $\BU$ as a fixed point, and since $\overline{R(-a)}=R(a)$, we also have $\tau_x^R (f)^*=\tau_x^R(f^*)$, $f\in\BU$. This shows that $\tau_x^R$ is an automorphism of $\BU$. Note that although $\tau_0^R=\rm id$ and $(\tau_x^R)^{-1}=\tau_{-x}^R$, the group law $\tau_x^R\tau_y^R=\tau_{x+y}^R$ holds only if $R$ is of exponential form.

	To check the claims about the supports in wedges, let $f\in\BU(W_0)$, {\em i.e.}, $\supp f_n\subset W_0^{\times n}$ for all $n\in\Nl$. The function $R_x^{\ot n}$ defines a tempered distribution, and therefore has a Fourier transform $\Rti_x^{\ot n}$ in $\Ss_n'$ such that $\tau_x^R(f_n)=(2\pi)^{-nd/2}\,\Rti_x^{\ot n}*f_n$. Explicitly, $\by,\bz,\bp\in\Rl^{nd}$,
	\begin{align*}
		(\tau^R_x f)_n(\by)
		&=
		(2\pi)^{-nd/2}
		\int d\bz f_n(\by-\bz)
		\int d\bp \prod_{k=1}^n\left(e^{-ip_k\cdot z_k}\,R(x\cdot p_k)\right)
		\\
		&=
		(2\pi)^{-nd/2}
		\int d\bz f_n(\by-\bz)
		\int d\bp \prod_{k=1}^n\left(e^{-ip_k\cdot z_k}\,(2\pi)^{-1/2}\int d\la_k \,e^{i\la_k\,(x\cdot p_k)}\Rti(\la_k)\right)
		\\
		&=
		(2\pi)^{+n(d-1)/2}
		\int d\la_1\cdots d\la_n\,\Rti(\la_1)\cdots\Rti(\la_n)
		\int d\bz f_n(\by-\bz)
		\prod_{k=1}^n \delta(z_k-\la_k\,x)
		\\
		&=
		(2\pi)^{+n(d-1)/2}
		\int d\la_1\cdots d\la_n\,\Rti(\la_1)\cdots\Rti(\la_n)
		f_n(y_1-\la_1x,...,y_n-\la_n x)
		\,.
	\end{align*}
	The wedge $W_0$ has the two geometric properties $\la W_0 \subset\overline{W_0}$ for $\la\geq0$ and $W_0+\overline{W_0}\subset W_0$. Since $\supp\Rti\subset\Rl_+$, all $\la_k$ appearing in this integral are positive, and since $x \in\overline{W_0}$, we have $\la_1 x,...,\la_n x \in\overline{W_0}$. Taking into account that the support of $f_n$ lies in $W_0^{\times n}$, we find $\supp (\tau_x^R f)_n \subset \supp f_n + \overline{W_0}^{\times n}  \subset W_0^{\times n}+\overline{W_0}^{\times n} \subset W_0^{\times n}$, and hence $\tau_x^R\BU(W_0)\subset \BU(W_0)$. 	The arguments leading to $\tau_{-x}^R(\BU(-W_0))\subset\BU(-W_0)$ are completely analogous.

	{\em ii)} The deformed product $h^-\ot_{\rho}f$  can be expressed with the shift automorphisms $\tau^R_x$ \eqref{eq:tauR} as, $\bk\in\Rl^{md}$, $\bp\in\Rl^{nd}$,
	\begin{align*}
		(h^-\ot_{\rho}f)^{\sim}(\bk,\bp)
		&=
		\hti^-(\bk)
		\prod_{l=1}^m\prod_{r=1}^n R(k_l\cdot Qp_r) \cdot \fti(\bp)
		\\
		&=
		\hti^-(\bk)
		\left( \tau^R_{-Qk_1}\cdots \tau^R_{-Qk_m}f\right)^{\sim}(\bp)
		\,.
	\end{align*}
	As $\supp \hti^-\subset\overline{V_-}$ and $Q$ is admissible, the vectors $-Qk_1$, ..., $-Qk_n$ all lie in $\overline{W_0}$, and by part a), we have $\tau^R_{-Qk_1}\cdots \tau^R_{-Qk_m}f\in\Ss_n(W_0)$. Hence $h^-\ot_{\rho}f$ has support in $\Rl^{md}\times W_0^{\times n}$.

	In comparison, in $(g'\ot_{\rho'}h^+)$, the support of $g'$ lies in $-W_0$ instead of $W_0$, and $\rho$ is replaced by $\rho'$. But the opposite deformation is given by the same deformation function $R$, and matrix $-Q$ instead of $Q$. Hence we can repeat the above argument with shifts $+Qk_1,...,+Qk_m\in-{\overline{W_0}}$, preserving the support of $g'$ in $-W_0$, i.e. $\supp (g'\ot_{\rho'}h^+)\subset(W_0')^{\times n}\times\Rl^{md}$.

	The third and fourth function can be rewritten as
	\begin{align*}
		(g'\ot_{\rho'}h^+)=((h^+)^* \ot_{\rho'} (g')^*)^*
		\,,\qquad
		(f\ot_{\rho} h^+)=((h^+)^*\ot_{\rho} f^*)^*
		\,.
	\end{align*}
	As the ${}^*$-involution preserves supports in spacetime, but reflects supports in momentum space about the origin, we have $\supp(\hti^+)^*\subset \overline{V_-}$, $\supp (g')^*\subset (-W_0)$, $\supp f^*\subset W_0$, as in the first two functions $h^-\ot_{\rho}f$, $h^-\ot_{\rho'}g'$. Taking also into account $(\Ss_m\ot\Ss_n(\pm W_0))^*=\Ss_n(\pm W_0)\ot\Ss_m$, the claim about the supports of $(g'\ot_{\rho'}h^+)$ and $(f\ot_{\rho} h^+)$ follows.
\end{proof}

As a preparation for the wedge-locality proof, we recall two facts about quasi-free states satisfying the spectrum condition respectively states vanishing on the locality ideal.

\begin{lemma}\label{lemma:QuasiFreeGNS}
	Let $\om$ be a quasi-free translationally invariant state on $\BU$	which satisfies the spectrum condition, and consider some $f\in\BU$ and the vector $\Psi_\om(f)$ representing $f$ in the GNS representation space of $(\BU,\om)$. If
	\begin{align*}
		\om(h^-\ot f)=0
		\quad\text{for all }\;h^-\in\BU\;\text{ with }\;\supp \hti^-\subset\overline{V_-}
		\,,
	\end{align*}
	then $\Psi_\om(f)=0$.
\end{lemma}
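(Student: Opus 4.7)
The plan is to reformulate the hypothesis as a Hilbert-space orthogonality condition and then establish that the ``positive-frequency'' vectors span a dense subspace of $\Hil_\om$. First, by the GNS identity $\om(h^-\ot f)=\langle\Psi_\om((h^-)^*),\Psi_\om(f)\rangle$ and the formula \eqref{eq:StarInMomentumSpace} for the ${}^*$-involution in momentum space, the map $h^-\mapsto (h^-)^*$ sends $\{h^-\in\BU:\supp\tilde h^-\subset\overline{V_-}\}$ bijectively onto $\BU^+:=\{h^+\in\BU:\supp\tilde h^+\subset\overline{V_+}\}$. The hypothesis thus reads $\Psi_\om(f)\perp\Psi_\om(h^+)$ for every $h^+\in\BU^+$, so it suffices to show that $\{\Psi_\om(h^+):h^+\in\BU^+\}$ is total in $\Hil_\om$.

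For this density I would invoke the Fock-space description of the GNS representation of a quasi-free, translationally invariant state obeying the spectrum condition. By \eqref{eq:KL2PointFunction} the two-point function has the form $\tilde\om_2(p,q)=\delta(p+q)w(p)$ with a positive tempered measure $w$ supported on $\overline{V_+}$, and a direct calculation identifies the one-particle space $\Hil_1$ with $L^2(\overline{V_+},w)$ via $f_1\mapsto\tilde f_1|_{\overline{V_+}}$; $\Hil_\om$ is then canonically a Fock space over $\Hil_1$. The key observation is that, for positive-frequency $h_1^+,\ldots,h_n^+\in\Ss_1$, the ``internal'' two-point contractions $\om_2(h_i^+,h_j^+)=\int w(-p)\,\tilde h_i^+(p)\,\tilde h_j^+(-p)\,dp$ vanish, since the integrand forces $p\in\overline{V_+}\cap\overline{V_-}=\{0\}$, a $w$-null set. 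By the quasi-free Wick combinatorics this means $\Psi_\om(h_1^+\ot\cdots\ot h_n^+)$ is a pure $n$-particle vector, corresponding to $[h_1^+]\ot\cdots\ot[h_n^+]\in\Hil_1^{\ot n}$.

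Since positive-frequency Schwartz functions map onto a dense subspace of $L^2(\overline{V_+},w)$ (by cutting off $\tilde f_1|_{\overline{V_+}}$ with smooth bump functions supported in the interior of $\overline{V_+}$) and algebraic tensor products are dense in each $\Hil_1^{\ot n}$, the vectors $\Psi_\om(h_1^+\ot\cdots\ot h_n^+)$ span a dense subspace of every particle sector, hence of $\Hil_\om$. The orthogonality established in the first step then forces $\Psi_\om(f)=0$.

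The main obstacle I anticipate is the careful bookkeeping of the Fock-space identification of $\Hil_\om$ and the Wick decomposition of $\Psi_\om(h_1^+\ot\cdots\ot h_n^+)$ into its pure $n$-particle part plus lower-particle contributions. A subordinate subtlety is the behaviour of $w$ at the origin: for physically typical cases (where $w$ is concentrated on a mass hyperboloid excluding $p=0$) the vanishing of the internal contractions is immediate, while in the degenerate situation $w(\{0\})>0$ a mild quotient argument via the Gelfand null-space of $\om$ is required to absorb the resulting constant contributions.
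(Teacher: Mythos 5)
Your argument is correct and takes essentially the same route as the paper's proof: reduce the hypothesis to orthogonality of $\Psi_\om(f)$ against $\{\Psi_\om(h^+)\,:\,\supp\tilde{h}^+\subset\overline{V_+}\}$, identify $\Hil_\om$ with the Bose Fock space over $L^2(\overline{V_+},w)$, note that positive-frequency test functions yield pure $n$-particle vectors because the internal contractions vanish (the paper asserts this; you spell out the reason), and conclude by density in each particle sector. There is no substantive difference in approach, only that you are somewhat more explicit about the Wick combinatorics and the degenerate behaviour of $w$ at the origin.
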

\begin{proof}
	By the GNS construction, we have $\om(h^-\ot f)=\langle \Psi_\om((h^-)^*),\,\Psi_\om(f)\rangle$, where the momentum space support of $(h^-)^*$ is $\supp(\hti^-)^*\subset\overline{V^+}$. We thus have to show that the space $\DD_+\subset\Hil_\om$ of all $\Psi_\om(h^+)$, where $\supp\hti^+\subset\overline{V_+}$, is dense. This is a consequence of $\om$ being quasi-free and satisfying the spectrum condition. In fact, in this situation, $\om_2$ has the form \eqref{eq:KL2PointFunction} with a measure $w$ on $\overline{V_+}$, and the GNS representation space $\Hil_\om$ is the Bose Fock space over the single particle space $L^2(\overline{V_+},w(p)dp)$. For functions $h_n^+\in\Ss_n$ whose support in momentum space does not intersect the backward lightcone, $\Psi_\om(h_n^+)$ is a vector in the $n$-particle space $L^2(\overline{V_+},w(p)dp)^{\ot_{\rm sym} n}$, given by symmetrization in all variables of the Fourier transform $\hti_n^+$. The $n$-particle vectors obtained in this manner form a dense subspace of the $n$-particle space. Since we can take arbitrary $n$, the density of $\DD_+$ follows.
\end{proof}

\begin{lemma}\label{lemma:LocalityAndTheFlip}
	Let $F\in\Ss_{n+m}$, $G\in\Ss_{n'+m'}$, $n,m,n',m'\in\Nl_0$, such that $\supp F\subset\Rl^{md}\times (W_0)^{\times n}$ and $\supp G\subset (W_0')^{\times n'}\times\Rl^{m'd}$. Let $\tau:\Ss_{m+n+n'+m'}\to\Ss_{m+n+n'+m'}$ denote the flip
	\begin{align}\label{eq:FlipTau}
		(\tau H)(\by,\bx,\bx',\by')
		:=
		H(\by,\bx',\bx,\by')
		\,,\qquad
		\by\in\Rl^{md},\,\bx\in\Rl^{nd},\,\bx'\in\Rl^{n'd},\,\by'\in\Rl^{m'd}\,.
	\end{align}
	Then for each state $\om$ on $\BU$ which annihilates the locality ideal, we have
	\begin{align}\label{eq:FlipInLocalState}
		\om(F\ot G)=\om(\tau(F\ot G))
		\,.
	\end{align}

\end{lemma}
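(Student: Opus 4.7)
The plan is to show that $F\ot G-\tau(F\ot G)$ lies in the locality ideal $\Le$, which by hypothesis is annihilated by $\om$, giving \eqref{eq:FlipInLocalState} immediately.

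First, I would reduce to the case of elementary tensor products. Since $W_0$ and $W_0'$ are open, a standard density argument (using nuclearity of Schwartz spaces together with a smooth cutoff supported in a slight enlargement of the constraining wedges) shows that the subspace of $\Ss_{m+n}$ consisting of functions with support in $\Rl^{md}\times W_0^{\times n}$ is the closure of finite linear combinations of pure tensors
\begin{align*}
h\ot f_1\ot\cdots\ot f_n\,,\qquad h\in\Ss_m,\;f_i\in\Ss_1(W_0),
\end{align*}
and analogously for $G$. Because the tensor product and the flip $\tau$ are separately continuous, and $\om$ restricted to each $\Ss_k$ is continuous, it suffices to establish \eqref{eq:FlipInLocalState} on such elementary tensors.

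Second, for pure tensors
\begin{align*}
F=h_1\ot\cdots\ot h_m\ot f_1\ot\cdots\ot f_n\,,\qquad G=g'_1\ot\cdots\ot g'_{n'}\ot k_1\ot\cdots\ot k_{m'}\,,
\end{align*}
with $f_i\in\Ss_1(W_0)$ and $g'_j\in\Ss_1(W_0')$, the tensor product $F\ot G$ is an ordered product of single-variable Schwartz functions, and $\tau(F\ot G)$ is obtained by exchanging the consecutive blocks $f_1\ot\cdots\ot f_n$ and $g'_1\ot\cdots\ot g'_{n'}$. I would then realize this block-exchange as a composition of $n\cdot n'$ elementary transpositions, each swapping a single $f_i$ with an adjacent $g'_j$.

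Third, each such elementary transposition contributes a difference of the form
\begin{align*}
A\ot(f_i\ot g'_j-g'_j\ot f_i)\ot B
\end{align*}
with appropriate tensor factors $A,B\in\BU$. Since $f_i\in\Ss_1(W_0)$ and $g'_j\in\Ss_1(W_0')$ have spacelike separated supports, the commutator $f_i\ot g'_j-g'_j\ot f_i$ is by definition a generator of $\Le$, and because $\Le$ is a two-sided ideal of $\BU$, the full displayed expression remains in $\Le$. Telescoping across the $n\cdot n'$ adjacent swaps then expresses $F\ot G-\tau(F\ot G)$ as a finite sum of elements of $\Le$, hence itself an element of $\Le$. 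Applying $\om$ and using $\om|_{\Le}=0$ concludes the proof on elementary tensors, and the continuity argument of the first step extends the identity to all $F,G$ satisfying the support hypotheses.

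The step I expect to require most care is the density reduction: one must check that a Schwartz function supported in the product set $\Rl^{md}\times W_0^{\times n}$ can indeed be approximated, in the Schwartz topology, by finite sums of elementary tensors whose single-variable factors are themselves supported in $W_0$. Openness of $W_0$ makes this a routine combination of nuclearity with a partition-of-unity/cutoff construction, but this is the only point where one genuinely leaves the purely algebraic manipulation of the locality ideal.
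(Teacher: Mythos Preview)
Your proof is correct and follows essentially the same strategy as the paper: approximate $F$ and $G$ by sums of tensor products respecting the support constraints, use the locality ideal to perform the flip on the approximants, and pass to the limit by continuity of $\om$ and $\tau$.

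The one difference is in granularity. The paper decomposes $F=\sum_t l^{(t)}\ot r^{(t)}$ with $l^{(t)}\in\Ss_m$ and $r^{(t)}\in\Ss_n(W_0)$ (and similarly $G=\sum_s a^{(s)}\ot b^{(s)}$ with $a^{(s)}\in\Ss_{n'}(W_0')$), then asserts in one step that $\om(l^{(t)}\ot r^{(t)}\ot a^{(s)}\ot b^{(s)})=\om(l^{(t)}\ot a^{(s)}\ot r^{(t)}\ot b^{(s)})$ because the supports of $r^{(t)}$ and $a^{(s)}$ are spacelike separated. Since $\Le$ is by definition generated by commutators of \emph{single}-variable test functions, this one-step swap tacitly uses exactly the finer decomposition and telescoping argument you spell out. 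Your version therefore makes explicit a step the paper leaves implicit; conversely, the paper's coarser two-block decomposition slightly simplifies the approximation statement you flag as the delicate point, since one only needs $\Ss_n(W_0)\otimes\Ss_m$ dense in $\{F\in\Ss_{n+m}:\supp F\subset W_0^{\times n}\times\Rl^{md}\}$ rather than the full single-variable factorization.
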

\begin{proof}
	In view of the support properties of $F$ and $G$, we can represent these functions as $F=\sum_{t=1}^\infty l^{(t)}\ot r^{(t)}$,  $G=\sum_{t=1}^\infty a^{(t)}\ot b^{(t)}$, with $l^{(t)}\in\Ss_m$, $r^{(t)}\in\Ss_n(W_0)$, $a^{(t)}\in\Ss_{n'}(W_0')$, $b^{(t)}\in\Ss_{m'}$, and these series converge in the topology of $\BU$. Because the supports of the $r^{(t)}$ and $a^{(t)}$ are spacelike separated and $\om$ annihilates the locality ideal, we have
	\begin{align*}
		\sum_{t=1}^T\sum_{s=1}^S \om(l^{(t)}\ot r^{(t)} \ot a^{(s)}\ot b^{(s)})
		&=
		\sum_{t=1}^T\sum_{s=1}^S \om(l^{(t)}\ot a^{(s)}\ot r^{(t)} \ot b^{(s)})
		\\
		&=
		 \om(\tau(\sum_{t=1}^T\sum_{s=1}^S\;l^{(t)}\ot r^{(t)} \ot a^{(s)}\ot b^{(s)}))\,.
	\end{align*}
	Making use of the continuity of $\tau$ and $\om$, the equality \eqref{eq:FlipInLocalState} follows from the above calculation in the limit $T\to\infty$, $S\to\infty$.
\end{proof}

\begin{theorem}\label{theorem:Locality}
	Let $R$ be a deformation function, and let $Q$ be an admissible matrix. Then the deformation $\rho$ given by $R$ and $Q$ via \eqref{eq:Rho2FromRAndQ} and \eqref{eq:NPointFunctionsFromTwoPointFunction} is wedge-local in any quasi-free translationally invariant state $\om$ which satisfies the spectrum condition and vanishes on the locality ideal.
\end{theorem}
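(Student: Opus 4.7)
By Lemma \ref{lemma:LocalityInOmega}, wedge-locality of $\rho$ in $\om$ is equivalent to the identity
\begin{align*}
\om((h\ot_\rho f)\ot(g'\ot_{\rho'} k))=\om((h\ot_{\rho'} g')\ot(f\ot_\rho k))
\end{align*}
for all $f\in\BU(W_0)$, $g'\in\BU(W_0')$, $h,k\in\BU$. Written as the vanishing of the sesquilinear form $(h,k)\mapsto\langle\Psi_\om(h^*),[\phi_\om^\rho(f),\phi_\om^{\rho'}(g')]\Psi_\om(k)\rangle$, and using the density of the vectors $\Psi_\om((h^-)^*)$ with $\supp\hti^-\subset\overline{V_-}$ and $\Psi_\om(k^+)$ with $\supp\widetilde{k^+}\subset\overline{V_+}$ in $\Hil_\om$ established in the proof of Lemma \ref{lemma:QuasiFreeGNS}, it is enough to verify the identity for $h=h^-$ and $k=k^+$ of this type.

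For such test functions, Proposition \ref{proposition:RShiftOperator}{\it ii)} combined with admissibility of $Q$ yields
\begin{align*}
\supp(h^-\ot_\rho f)\subset\Rl^{md}\times W_0^{\times n},\qquad\supp(g'\ot_{\rho'} k^+)\subset(W_0')^{\times n'}\times\Rl^{m'd},
\end{align*}
so that in the LHS integrand the middle spacetime variables lie in the spacelike-separated regions $W_0^{\times n}$ and $(W_0')^{\times n'}$. Since $\om$ annihilates the locality ideal, Lemma \ref{lemma:LocalityAndTheFlip} permits interchanging these two blocks inside the distribution $\om_N$. After this swap, the resulting momentum-space integral shares with that of the RHS the common test-function factor $\hti^-(\bk_1)\fti(\bq')\widetilde{g'}(\bq)\widetilde{k^+}(\bk_2)$, while the accompanying $R$-factor product now couples $\bk_1$ with $\bq'$ and $\bq$ with $\bk_2$, rather than $\bk_1$ with $\bq$ and $\bq'$ with $\bk_2$ as on the RHS.

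To conclude, one expands $\omti_N$ as a sum over pairings $\sigma$ via the quasi-free structure of $\om$. On the $\delta$-support of each such $\sigma$, the two $R$-products agree: any factor $R(a\cdot Qb)$ whose arguments form a pair in $\sigma$ collapses to $R(0)=1$ by antisymmetry of $Q$, and the surviving factors between unpaired variables combine into reciprocal couples via $R(-x)=R(x)^{-1}$ together with $p\cdot Qq=-q\cdot Qp$. This pairing-by-pairing matching is the technical core of the argument and the main obstacle: elementary in any single case, it requires a careful case analysis organizing the partners of the $\bk_1$- and $\bk_2$-variables among $\bq$, $\bq'$, and each other. Once the combinatorial bookkeeping is carried out, the two integrands coincide on every pairing's support, and the desired identity $\om((h^-\ot_\rho f)\ot(g'\ot_{\rho'} k^+))=\om((h^-\ot_{\rho'} g')\ot(f\ot_\rho k^+))$ follows.
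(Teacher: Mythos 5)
Your reduction follows the paper's own route step by step: the reformulation via Lemma \ref{lemma:LocalityInOmega}, the restriction of the outer test functions to $h^-$ with $\supp\hti^-\subset\overline{V_-}$ and $k^+$ with $\supp\widetilde{k^+}\subset\overline{V_+}$ by means of Lemma \ref{lemma:QuasiFreeGNS} (note that restricting the \emph{right} test function is not a naive density argument, since the commutator is unbounded; one first has to move it to the left slot using $\phi_\om^\rho(f)^*\supset\phi_\om^\rho(f^*)$ and compatibility, which is exactly what the paper's chain of equivalent reformulations does — your sesquilinear-form phrasing is fixable in the same way), then the support statements of Proposition \ref{proposition:RShiftOperator} \emph{ii)} for an admissible $Q$, and the flip Lemma \ref{lemma:LocalityAndTheFlip}. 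Up to this point your argument and the paper's proof coincide.

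The genuine gap sits precisely at the step you yourself call ``the technical core and the main obstacle'': the identity $\om(\tau(F^-\ot G^+))=\om(G^-\ot F^+)$, i.e.\ the matching of the two $R$-factor products under the quasi-free expectation, is asserted but not carried out. Your plan — expand $\omti_N$ into pairings and check on each pairing's $\delta$-support that paired arguments give $R(0)=1$ while the rest cancels in reciprocal couples — can be made to work, but it amounts to redoing, in a more involved four-block setting (with the additional cases of pairs linking the $\bk$- and $\bk'$-blocks to each other and to the middle blocks), exactly the combinatorial analysis the paper has already performed once and for all in Proposition \ref{proposition:DeformationsCompatibleWithQuasiFreeStates} \emph{iii)}. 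The paper's proof needs no new pairing bookkeeping: after inserting the mutually cancelling factors $R(k_l\cdot Qk'_r)\,R(-k_l\cdot Qk'_r)=1$, the ratio of the two integrands is a product of precisely the form covered by \eqref{eq:fhat}, applied twice (once splitting the variables after the $\bk$-block, once before the $\bk'$-block, with the two reciprocal choices of the two-point function), so multiplying $(G^-\ot F^+)^{\sim}$ by it leaves the $\om$-expectation unchanged and turns it literally into $\tau(F^-\ot G^+)^{\sim}$. To complete your proof, either invoke \eqref{eq:fhat} in this way, or actually carry out the deferred case analysis — as it stands, the conclusion rests on an unproven claim.
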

\begin{proof}
	Let $n,n'\in\Nl_0$, and $f\in\Ss_n(W_0)$, $g'\in\Ss_{n'}(W_0')$. We have to show that $\phi^\rho_\om(f)$ and $\phi^{\rho'}_\om(g')$ commute on the Wightman domain $\phi_\om(\BU)\Om_\om$ in the GNS space $\Hil_\om$. In view of Lemma \ref{lemma:LocalityInOmega}, this is equivalent to showing that for arbitrary $h^+\in\Ss_m,h^-\in\Ss_{m'}$, $m,m'\in\Nl_0$, one of the following equivalent equations holds,
	\begin{align}
		\om(h^-\ot(f\ot_\rho(g'\ot_{\rho'}h^+)))
		&=
		\om(h^-\ot(g'\ot_{\rho'}(f\ot_{\rho}h^+)))
		\nonumber
		\\
		\Longleftrightarrow
		\om((h^-\ot_\rho f)\ot (g'\ot_{\rho'}h^+))
		&=
		\om((h^-\ot_{\rho'}g')\ot (f\ot_{\rho}h^+))
		\label{eq:LocalityInOmega}
		\\
		\Longleftrightarrow
		\om(((h^-\ot_\rho f)\ot_{\rho'} g')\ot h^+)
		&=
		\om(((h^-\ot_{\rho'}g')\ot_\rho f)\ot h^+)
		\nonumber
		\\
		\Longleftrightarrow
		\om((h^+)^*\ot((g')^*\ot_{\rho'}(f^*\ot_\rho (h^-)^*)))
		&=
		\om((h^+)^*\ot(f^*\ot_{\rho}((g')^*\ot_{\rho'} (h^-)^*)))
		\,.
		\nonumber
	\end{align}
	Considering the first equation, we note that by Lemma \ref{lemma:QuasiFreeGNS}, it is sufficient to consider $h^-$ with $\supp \hti^-\subset\overline{V_-}$. Considering the last equation, we can apply Lemma \ref{lemma:QuasiFreeGNS} again, and see that we may restrict to $h^+$ with $\supp(\hti^+)^*\subset \overline{V_-}$, or, equivalently, $\supp\hti^+\subset \overline{V_+}$.

	So let $h^\pm$ have the specified momentum space supports, and consider the equation in question in the form \eqref{eq:LocalityInOmega}. In view of the support properties of $f,g'$, we can apply Proposition \ref{proposition:RShiftOperator}. Introducing the abbreviations $F^-:=h^-\ot_\rho f$, $F^+:=f\ot_\rho h^+$, $G^-:=h^-\ot_{\rho'}g'$, $G^+:=g'\ot_{\rho'} h^+$, we have
	\begin{align*}
		\supp F^- \subset \Rl^{md}\times (W_0)^{\times n}
		\,,\qquad
		\supp G^+ \subset (W_0')^{\times n'} \times \Rl^{m'd}
		\,,\\
		\supp G^- \subset \Rl^{md}\times (W_0')^{\times n'}
		\,,\qquad
		\supp F^+ \subset (W_0)^{\times n} \times \Rl^{md}
		\,.
	\end{align*}
	Application of Lemma \ref{lemma:LocalityAndTheFlip} now yields $\om(F^-\ot G^+)=\om(\tau(F^-\ot G^+))$ with the flip $\tau$ \eqref{eq:FlipTau}. Note that $\tau$ also acts in momentum space by interchanging the two middle variables.

	To complete the proof, we now show that $\om(\tau (F^-\ot G^+))=\om(G^-\ot F^+)$ by exploiting the compatibility of $\om$ with $\rho,\rho'$ in the form expressed in \eqref{eq:fhat}. We can thus multiply $(G^-\ot F^+)^{\sim}$ with various factors of $R(\pm p\cdot Qq)$ without changing its expectation value in $\om$. Explicitly, $\bk\in\Rl^m$, $\bp\in\Rl^n$, $\bp'\in\Rl^{n'}$, $\bk'\in\Rl^{m'}$,
	\begin{align*}
		(G^-\ot F^+)^{\sim}(\bk,\bp',\bp,\bk')
		&=
		\hti^-(\bk)\gti'(\bp')\fti(\bp)\hti^+(\bk')
		\prod_{l=1}^m\prod_{r=1}^{n'} R(-k_l\cdot Qp_r')
		\prod_{l=1}^n\prod_{r=1}^{m'} R(p_l\cdot Qk_r')
		\,,
		\end{align*}
	and we choose once $k=m$ and two point function $\rho_2(p,q)=R(-p\cdot Qq)$ in \eqref{eq:fhat}, and once $k=m+n+n'$ and $\rho_2(p,q)=R(p\cdot Qq)$. Multiplying $(G^-\ot F^+)^\sim$ by these products results in a function $C$, which takes the form
	\begin{align*}
		\Cti(\bk,\bp',\bp,\bk')
		&:=
		(G^-\ot F^+)^{\sim}(\bk,\bp',\bp,\bk')\cdot
		\prod_{l=1}^m\left\{\prod_{r=1}^{n'}
		R(k_lQp_r')\cdot\prod_{r=1}^{n}R(k_lQp_r)\cdot\prod_{r=1}^{m'}
		R(k_lQk_r')\right\}\times
		\\
		&\qquad\qquad\qquad\;\times
		\prod_{r=1}^{m'}\left\{\prod_{l=1}^m
		R(-k_lQk_r')\cdot\prod_{l=1}^{n'}R(-p'_lQk_r')\cdot\prod_{l=1}^{n}
		R(-p_lQk_r')\right\}
		\\
		&=
		\hti^-(\bk)\fti(\bp)\gti'(\bp')\hti^+(\bk')
		\cdot
		\prod_{l=1}^m\prod_{r=1}^{n}R(k_lQp_r)
		\cdot
		\prod_{l=1}^{n'}\prod_{r=1}^{m'}R(-p_l'Qk_r')
		\\
		&=
		((h^-\ot_{\rho}f)\ot (g'\ot_{\rho'}
		h^+))^{\sim}(\bk,\bp,\bp',\bk')
		\\
		 &=
		(F^-\ot G^+)^{\sim}(\bk,\bp,\bp',\bk')
		\,,
	\end{align*}
	that is, $C=\tau(F^-\ot G^+)$. By construction of $C$, we have $\om(C)=\om(G^-\ot F^+)$. Thus we arrive at
	\begin{align*}
		\om(F^-\ot G^+)
		=
		\om(\tau(F^-\ot G^+))
		=
		\om(C)
		=
		\om(G^-\ot F^+)
		\,,
	\end{align*}
	establishing \eqref{eq:LocalityInOmega}.
\end{proof}

\section{Fock space representations}\label{section:FockSpace}

As shown in the previous section, there exists a large class of multiplicative deformations on $\BU$ which are compatible with all quasi-free Wightman states, and therefore give rise to wedge-local deformations of generalized free field theories. In this section, we will for simplicity consider the explicit two point function
\begin{align*}
	\omti_2(p,q)
	=
	\delta(p+q)\,\eps_\bp^{-1}\,\delta(p^0-\eps_\bp)
	\,,\qquad
	\eps_\bp=\sqrt{\bp^2+m^2},\qquad p=(p^0,\bp)\in\Rl^d\,,
\end{align*}
with some fixed mass $m>0$, and discuss multiplicative deformations in the corresponding GNS representation. We will use the notation from Section \ref{section:generaldeformations}, but generally drop the index $\om$ on $\phi_\om(f),\Psi_\om(f), \DD_\om,\Hil_\om,\Om_\om,U_\om$, since we are working with a fixed state here.

Recall that without deformation, the GNS representation $(\phi,\Hil,\Om)$ of $(\BU,\om)$ describes the model theory of a free scalar field of mass $m$. The representation space $\Hil$ is the Bose Fock space over the single particle space $\Hil_1:=L^2(\Rl^d,d\mu)$ with measure $d\mu(p)=\eps_\bp^{-1}\,\delta(p^0-\eps_\bp)dp$, and the implementing vector $\Om$ is the Fock vacuum.

As a consequence of the Poincar\'e invariance properties of $\om$, there exists an (anti-)unitary representation $U$ of $\PG_+$ on $\Hil$, which leaves $\Om$ invariant, satisfies the spectrum condition, and acts according to $U(x,\La)\Psi(f)=\Psi(\alpha_{x,\La}f)$, $f\in\BU$. Explicitly, we have, $\Psi\in\Hil$,
\begin{align}
	\label{eq:UxLa}
	(U(x,\La)\Psi)_n(p_1,...,p_n)
	&=
	e^{i(p_1+...+p_n)\cdot x}\,\Psi_n(\La^{-1}p_1,...,\La^{-1}p_n)
	\,,\\
	(U(0,j)\Psi)_n(p_1,...,p_n)
	&=
	\overline{\Psi_n(-jp_1,...,-jp_n)}
	\label{eq:J}
	\,.
\end{align}

In the following, we will consider a fixed multiplicative deformation $\rho\in\R_0$, given by a deformation function $R$ and an admissible matrix $Q$. To keep track of both these objects, we will denote the fields representing $(\BU,\ot_\rho)$ as $\phi_{R,Q}(f)$ instead of $\phi^\rho(f)$.

\begin{proposition}\label{proposition:PhiQBasicProperties}
	Let $R$ be a deformation function, $Q$ an admissible matrix, and $f,g\in\BU$, $\Psi\in\DD$.
	  \begin{enumerate}
		 \item\label{item:PhiQDomainAndContinuity} $\phi_{R,Q}(f)$ is a closable operator containing $\DD$ in its domain for any $f\in\BU$, and the map $\BU\ni f\mapsto \phi_{R,Q}(f)\Psi\in\Hil$ is linear and continuous for any $\Psi\in\DD$.
		 \item \label{item:PhiQRepresentation} For $f,g\in\BU$,
			\begin{align}
				\phi_{R,Q}(f)\Psi(g)
				&=
				\Psi(f\ot_{\rho(R,Q)} g)
				\,,\\
				\phi_{R,Q}(f)\phi_{R,Q}(g)
				&=
				\phi_{R,Q}(f \ot_{\rho(R,Q)}g)
				\,,\\
				\phi_{R,Q}(f)^*
				&\supset
				\phi_{R,Q}(f^*)
				\label{eq:PhiQStar}
				\,,\\
				\phi_{R,Q}(f)\Om
				&=
				\phi(f)\Om\,.
				\label{eq:PhiQOnOm}
			\end{align}
	  	\item \label{item:PhiQCovariance} Covariance: For $(x,\La)\in\PG_+^\uparrow$, we have
	  	\begin{align}\label{eq:PhiQCovariance}
	  		U(x,\La)\phi_{R,Q}(f)U(x,\La)^{-1}
	  		=
	  		\phi_{R,\La Q\La^{-1}}(\alpha_{x,\La}f)
	  		\,,
	  	\end{align}
	  	and the reflection at the edge of $W_0$ acts according to
		\begin{align}\label{eq:PhiJCovariance}
			U(0,j)\phi_{R,Q}(f)U(0,j)
			=
			\phi_{R,-Q}(\alpha_j f)
			\,.
		\end{align}
		\item Wedge-Locality: \label{item:PhiQWedgeLocality} Let $f\in\BU(W_0+a), g\in\BU(W_0'+a)$ for some $a\in\Rl^d$. Then
		\begin{align}\label{eq:CommutatorPhiQPhi-Q}
			[\phi_{R,Q}(f),\,\phi_{R,-Q}(g)]\Psi=0\,.
		\end{align}
		 \item \label{item:PhiQReehSchlieder} Reeh-Schlieder property: For any open set $\OO\subset\Rl^d$, the subspace
		  \begin{align}
			\DD_{R,Q}(\OO)
			:=
			\phi_{R,Q}(\BU(\OO))\Om
		\end{align}
		is dense in $\Hil$.
		 \item\label{item:PhiQKleinGordon} $\phi_{R,Q}$ is a weak solution of the Klein-Gordon equation: For $f_1\in C_0^\infty(\Rl^d)$,
		 \begin{align}
		 	\phi_{R,Q}((\Box+m^2)f_1)=0
		 	\,.
		 \end{align}
	  \end{enumerate}
\end{proposition}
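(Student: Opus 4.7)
The plan is to derive items i) and ii) directly from Proposition \ref{proposition:GNSRepresentationOfCompatibleDeformation}, to obtain the covariance iii) and wedge-locality iv) from the identities \eqref{eq:PoincareTransformsOfRQProduct} and \eqref{eq:ReflectionJOfRQProduct} together with Theorem \ref{theorem:Locality}, and to reduce the Reeh-Schlieder property v) and the Klein-Gordon equation vi) to the undeformed free-field situation via the identity $\phi_{R,Q}(f)\Om = \phi(f)\Om$ established in ii).

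For i) and ii), the $\rho$-compatibility of $\om$ verified in Proposition \ref{proposition:DeformationsCompatibleWithQuasiFreeStates} allows me to invoke Proposition \ref{proposition:GNSRepresentationOfCompatibleDeformation}, which gives $\DD \subset \dom\phi_{R,Q}(f)$ and $\phi_{R,Q}(f)\Psi(g) = \Psi(f\ot_{\rho(R,Q)}g)$. From this, $\phi_{R,Q}(f)\phi_{R,Q}(g) = \phi_{R,Q}(f\ot_\rho g)$ follows by associativity of $\ot_\rho$, and $\phi_{R,Q}(f)^* \supset \phi_{R,Q}(f^*)$ is the standard GNS computation, now with $\rho$-compatibility of $\om$ replacing the undeformed product in all inner products; closability is automatic from the dense domain of the adjoint. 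Continuity of $f \mapsto \phi_{R,Q}(f)\Psi(g)$ reduces to separate continuity of $\ot_\rho$ together with continuity of the canonical map $\BU \to \Hil$, $h \mapsto \Psi(h)$. The last relation $\phi_{R,Q}(f)\Om = \phi(f)\Om$ follows by taking $g = 1$ and using $f \ot_\rho 1 = f$.

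For iii), I compute $U(x,\La)\phi_{R,Q}(f)U(x,\La)^{-1}\Psi(g) = \Psi(\alpha_{x,\La}(f\ot_{\rho(R,Q)}\alpha_{x,\La}^{-1}g))$ for orthochronous $\La$ and apply \eqref{eq:PoincareTransformsOfRQProduct} to rewrite the right-hand side as $\phi_{R,\La Q\La^{-1}}(\alpha_{x,\La}f)\Psi(g)$. The reflection $j$ is handled analogously via \eqref{eq:ReflectionJOfRQProduct}, the antiunitarity of $U(0,j)$ matching the complex conjugation in \eqref{eq:J}. For iv), the remark following Proposition \ref{proposition:DeformationsCompatibleWithQuasiFreeStates} identifies the opposite deformation as $\rho(R,-Q)$, so Theorem \ref{theorem:Locality} together with Lemma \ref{lemma:LocalityInOmega} yields $[\phi_{R,Q}(f),\phi_{R,-Q}(g)]\Psi = 0$ for $f \in \BU(W_0)$, $g \in \BU(W_0')$. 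The translated case $a \neq 0$ is then reduced to $a=0$ by writing $\phi_{R,\pm Q}(\alpha_a h) = U(a,\mathbf{1})\phi_{R,\pm Q}(h)U(a,\mathbf{1})^{-1}$ (a special case of iii)) and using that translations preserve $\DD$.

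Parts v) and vi) are straightforward consequences of ii). The identity $\phi_{R,Q}(f)\Om = \phi(f)\Om$ identifies $\DD_{R,Q}(\OO)$ with the undeformed Reeh-Schlieder subspace $\phi(\BU(\OO))\Om$, whose density is the standard Reeh-Schlieder property of the free scalar field determined by the chosen two-point function. For vi), I insert $(\Box + m^2)f_1$ into the explicit formula \eqref{eq:DeformedProductFromTwoPointFunction}: since $f_1$ has only its one-particle component nonzero, only the $k=1$ term contributes, so $((\Box+m^2)f_1\ot_\rho g)^{\sim}_n(p_1,\ldots,p_n)$ is proportional to $\widetilde{(\Box+m^2)f_1}(p_1)$ times a smooth factor in the remaining momenta. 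Because the GNS Hilbert space $\Hil_n = L^2(\Rl^d,d\mu)^{\otimes_{\mathrm{sym}} n}$ restricts all arguments to the mass shell, the factor $(m^2 - p_1^2)$ vanishes, giving $\phi_{R,Q}((\Box+m^2)f_1)\Psi(g) = 0$. No substantial obstacle is expected; the point requiring the most care is the bookkeeping of how $Q$ transforms under Lorentz elements in iii), including the sign flip for time-reversing transformations which is precisely what the compatibility condition \eqref{eq:QLambdaInvariance} of Lemma \ref{lemma:RDeformations} is designed to accommodate.
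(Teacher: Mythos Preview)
Your treatment of items i)--v) matches the paper's proof essentially line for line; the only cosmetic difference is that for v) you state $\DD_{R,Q}(\OO)=\phi(\BU(\OO))\Om$ as an equality where the paper writes only the inclusion $\DD_{R,Q}(\OO)\supset\DD_0(\OO)$, but the equality indeed holds by \eqref{eq:PhiQOnOm}, so your version is marginally sharper.

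The genuine difference is in vi). The paper does not compute directly but argues structurally: from $\phi_{R,Q}((\Box+m^2)f_1)\Om=\phi((\Box+m^2)f_1)\Om=0$ it uses the compact support of $f_1$ to find $a$ with $\supp f_1\subset W_0+a$, then invokes wedge-locality iv) to commute $\phi_{R,Q}((\Box+m^2)f_1)$ past any $\phi_{R,-Q}(g)$, $g\in\BU(W_0'+a)$, obtaining $\phi_{R,Q}((\Box+m^2)f_1)\phi_{R,-Q}(g)\Om=0$; Reeh--Schlieder v) then gives vanishing on a dense domain. Your direct computation is also correct --- the factor $(m^2-p_1^2)$ in $((\Box+m^2)f_1\ot_\rho g)^\sim_n$ is indeed killed because every momentum variable appearing in the quasi-free $n$-point functions is pinned to the mass shell by the two-point function \eqref{eq:KL2PointFunction} --- but the sentence ``the GNS Hilbert space $\Hil_n$ restricts all arguments to the mass shell'' glosses over the fact that $\Psi(h_n)$ has components in several $\Hil_k$, $k\leq n$, via Wick contractions; the clean statement is that $\omti_n$ is a measure supported where each argument lies on $\pm H_m^+$, so $(m^2-p_1^2)$ vanishes on $\supp\omti_n$. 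Your route is more elementary and exploits the explicit free-field Fock structure; the paper's route is more robust in that it uses only the abstract properties iv) and v) already in hand, and would transfer unchanged to settings where the mass-shell computation is less transparent.
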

\begin{proof}
	The statements in \refitem{item:PhiQRepresentation} follow directly from Proposition \ref{proposition:GNSRepresentationOfCompatibleDeformation} because $\rho$ is compatible with $\om$, and $\phi_{R,Q}$ is a ${}^*$-representation of $(\BU,\ot_\rho)$. \refitem{item:PhiQDomainAndContinuity} Clearly, each $\phi_{R,Q}(f)$ is defined on the dense domain $\DD$, and in view of \eqref{eq:PhiQStar} closable. For $g\in\BU$, the map $f\mapsto\phi_{R,Q}(f)\Psi(g)=\Psi(\rho^{-1}(\rho(f)\ot\rho(g)))$ is linear and continuous because $\rho$, $\rho^{-1}:\BU\to\BU$ and $\Psi:\BU\to\Hil$ are linear and continuous.

	The covariance statements in \refitem{item:PhiQCovariance} follow from $U(x,\La)\Psi(f)=\Psi(\alpha_{x,\La}f)$, $U(0,j)\Psi(f)=\Psi(\alpha_j f)$ and \eqref{eq:PoincareTransformsOfRQProduct}. {\em iv)} In view of the translation covariance {\em iii)}, it is sufficient to show \eqref{eq:CommutatorPhiQPhi-Q} for $a=0$. But this is just a reformulation of Theorem \ref{theorem:Locality}. The Reeh-Schlieder property \refitem{item:PhiQReehSchlieder} is known to hold for the undeformed fields, corresponding to $Q=0$. But in view of \eqref{eq:PhiQOnOm}, $\DD_{R,Q}(\OO)\supset\DD_0(\OO)$, and the density of $\DD_{R,Q}(\OO)$ follows.

	The undeformed field $\phi$ is known to be a weak solution of the Klein-Gordon equation. Using \eqref{eq:PhiQOnOm} again, we therefore have  $\phi_{R,Q}((\Box+m^2)f_1)\Om=\phi((\Box+m^2)f_1)\Om=0$. Now, since $f_1$ has compact support, we find $a\in\Rl^d$ such that $W_0'+a$ lies spacelike to $\supp f_1$. For $g\in\BU(W_0'+a)$, we have in view of \refitem{item:PhiQWedgeLocality}
	\begin{align*}
		\phi_{R,Q}((\Box+m^2)f_1)\phi_{R,-Q}(g)\Om
		=
		\phi_{R,-Q}(g)\phi_{R,Q}((\Box+m^2)f_1)\Om
		=
		0
		\,.
	\end{align*}
	Thus $\phi_{R,Q}((\Box+m^2)f_1)$ vanishes on $\DD_{R,-Q}(W_0'+a)$. As this subspace is dense by \refitem{item:PhiQReehSchlieder}, we arrive at $\phi_{R,Q}((\Box+m^2)f_1)=0$.
\end{proof}

As explained in Section \ref{section:generaldeformations}, we have now constructed a wedge-local quantum field theory, given by the ${}^*$-algebra $\pol_R$ generated by all $\phi_{R,Q}(f)$, $f\in\BU(W_0)$, and
\begin{align}\label{eq:PRQNet}
	\pol_R(\La W_0+x)
	:=
	U(x,\La)\pol_RU(x,\La)^{-1}
	\,,\qquad
	(x,\La)\in \PG_+\,.
\end{align}
In view of the transformation property \eqref{eq:PhiQCovariance}, the algebra $\pol_R(\La W_0+x)$ is generated by the field operators $\phi_{R,\pm\La Q\La^{-1}}(f)$, $f\in\BU(\La W_0+x)$, where the sign "$\pm$" refers to orthochronous / anti-orthochronous Lorentz transformations. In particular, $\pol_R(W_0')$ is generated by all $\phi_{R,-Q}(f)$, $f\in\BU(W_0')$. Thus the orbit $\Q:=\{\La Q\La^{-1}\,:\,\La\in\LG_+\}$ provides a coordinatization for the different directions of the wedges \cite{GrosseLechner:2007, BuchholzLechnerSummers:2010}, whereas the deformation function $R$ labels the kind of deformation used.

It is also possible to proceed from this net of ${}^*$-algebras of unbounded operators to a corresponding net of von Neumann algebras on $\Hil$, generated by bounded functions of the fields. However, as we are interested in a field theoretic setting here, we refrain from giving any details.
\\
\\
Before we proceed to studying the observable consequences of the deformation, we point out that with $Q$, also the rescaled matrices $\la\cdot Q$, $\la\geq0$, are admissible. We have thus constructed one-parameter families $\pol_{R,\la}$ of wedge algebras, representing the deformation maps $\rho(R,\la\cdot Q)$. Taking the limit $\la\to0$ reproduces the undeformed field operators.
\begin{proposition}\label{proposition:PhiDependsContinuouslyOnParameter}
	Let $R$ be a deformation function and $Q$ an admissible matrix. Then, for any $f\in\BU,\Psi\in\DD$,
	\begin{align}
		\lim_{\la\to0}\phi_{R,\la\cdot Q}(f)\Psi
		=
		\phi(f)\Psi
		\,.
	\end{align}
\end{proposition}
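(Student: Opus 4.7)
The strategy is to reduce the convergence in $\Hil$ to the already-established convergence of deformed products in $\BU$ (Proposition \ref{proposition:RQProductContinuousInParameter}), via the continuity of the GNS vector map $\Psi:\BU\to\Hil$.

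\medskip

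\emph{Step 1 (reduction).} Since $\DD=\phi(\BU)\Om$, any $\Psi\in\DD$ has the form $\Psi=\Psi(g)$ for some $g\in\BU$. Using the representation property of Proposition \ref{proposition:PhiQBasicProperties}.\textit{ii)}, both sides of the desired limit are values of the GNS map:
\[
\phi_{R,\la\cdot Q}(f)\,\Psi(g)=\Psi\bigl(f\ot_{\rho(R,\la\cdot Q)}g\bigr),\qquad \phi(f)\,\Psi(g)=\Psi(f\ot g).
\]
Thus the statement reduces to
\[
\lim_{\la\to 0}\Psi\bigl(f\ot_{\rho(R,\la\cdot Q)}g\bigr)=\Psi(f\ot g)\quad\text{in }\Hil.
\]

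\emph{Step 2 (convergence of the argument).} Proposition \ref{proposition:RQProductContinuousInParameter} already gives $f\ot_{\rho(R,\la\cdot Q)}g\to f\ot g$ in $\BU$. Moreover, since $f$ and $g$ are terminating sequences, the products stay in a common finite subsum $\bigoplus_{n\leq N}\Ss_n\subset\BU$; on this subspace the direct-sum topology reduces to componentwise Schwartz convergence, and inspection of the proof of Proposition \ref{proposition:RQProductContinuousInParameter} shows convergence in every Schwartz seminorm.

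\emph{Step 3 (continuity of the GNS map).} It remains to check that $h\mapsto\Psi(h)$ is continuous from $\BU$ into $\Hil$. This follows from
\[
\|\Psi(h_n)-\Psi(h)\|^2=\om\bigl((h_n-h)^*\ot(h_n-h)\bigr),
\]
the continuity of the state $\om$, of the ${}^*$-involution \eqref{eq:StarBU}, and of the tensor product \eqref{eq:ProductBU} with one argument fixed. Applied to the convergent net of Step~2, this gives the required Hilbert space convergence.

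\medskip

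\textbf{Main obstacle.} There is no real obstacle; the only point requiring a moment's care is that the convergence of Proposition \ref{proposition:RQProductContinuousInParameter} is strong enough for $\om$ to pass through the limit of $(\cdot)^*\ot(\cdot)$. This is guaranteed because all relevant elements lie in a fixed finite-degree subspace, where the relevant operations (${}^*$, $\ot$ with one argument fixed, and $\om$) are continuous in the Schwartz topology.
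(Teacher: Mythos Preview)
Your proof is correct and follows essentially the same approach as the paper: write $\Psi=\Psi(g)$, use $\phi_{R,\la\cdot Q}(f)\Psi(g)=\Psi(f\ot_{\rho(R,\la\cdot Q)}g)$, and then invoke Proposition~\ref{proposition:RQProductContinuousInParameter} together with the continuity of the GNS map $\Psi:\BU\to\Hil$. The paper's proof is in fact terser---it simply cites the continuity of $\Psi$ without spelling out Step~3---so your version adds detail rather than deviating in strategy.
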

\begin{proof}
	Since any $\Psi\in\DD$ is of the form $\Psi=\Psi(g)$, $g\in\BU$, we have $\phi_{R,\la\cdot Q}\Psi=\Psi(f\ot_{\rho(R,\la\cdot Q)}g)$. The claim now follows from the continuity of $\Psi:\BU\to\Hil$ and Proposition \ref{proposition:RQProductContinuousInParameter}.
\end{proof}

We now want to compute the deformed field operators $\phi_{R,Q}(f)$ more explicitly in terms of twisted creation and annihilation operators. To this end, we have to introduce some more notation. For $f_1\in\Ss_1$, we denote by $f_1^\pm(p):=\fti_1(\pm p)$, $p\in H^+_m$, the restriction of the Fourier transform of $f_1$ to the upper and lower mass shell $H^\pm_m$. With this notation, $\Psi(f_1)=f_1^+\in\Hil_1$, and the undeformed field operator has the familiar form
\begin{align}
	\phi(f_1)
	&=
	\ad(\Psi(f_1))+a(\Psi(f_1^*))
	=
	\ad(f_1^+)+a(\overline{f_1^-})
	\,,\qquad
	f_1\in\Ss_1\,.
\end{align}
Here $a,\ad$ form the standard representation of the canonical commutation
relations on $\Hil$. For $\Psi\in\DD$, $\varphi,\psi\in\Hil_1$,
\begin{align}
	(a(\varphi)\Psi)_n(p_1,...,p_n)
	&:=
	\sqrt{n+1}\int d\mu(q)\,\overline{\varphi(q)}\,\Psi_{n+1}(q,p_1,...,p_n)
	\,,
	\\
	\ad(\varphi)
	&:=
	a(\varphi)^*
	\,,
	\\
	[a(\varphi),a(\psi)]&=0,\qquad
	[\ad(\varphi),\ad(\psi)]=0,\qquad
	[a(\varphi),\ad(\psi)]\Psi
	=
	\langle\varphi,\psi\rangle\cdot\Psi
	\,.
\end{align}
We will also work with the distributional kernels $a^\#(p)$ of these operators,
related to $a^\#(\varphi)$ by $\ad(\varphi)=\int d\mu(p)\,\varphi(p)\ad(p)$ and
$a(\varphi)=\int d\mu(p)\,\overline{\varphi(p)}a(p)$, with the commutation
relations,
\begin{align}
	[a(p),a(q)]=0
	\,,\qquad
	[\ad(p),\ad(q)]=0
	\,,\qquad
	[a(p),\ad(q)]
	=
	\eps_\bp\,\delta(p-q)\cdot 1
	\,.
\end{align}

To define deformed versions of these creation/annihilation operators, we introduce the operator-valued function
\begin{align}
	T_R:\Rl^d &\to\B(\Hil),\\
	(T_R(x)\Psi)_n(p_1,...,p_n)
	&:=
	\prod_{k=1}^n R(x\cdot p_k)\,\Psi_n(p_1,...,p_n)
	\,.
\end{align}
It is not difficult to see that quasi-free translationally invariant states are invariant under the shift automorphisms $\tau_x^R$ \eqref{eq:tauR}, and the operators $T_R(x)$ defined above implement these automorphisms on the GNS space. We will however not need these facts here, and only point out that because of the properties \eqref{eq:R-Relations} of $R$, the operator $T_R(x)$ is unitary for any $x\in\Rl^d$, and
\begin{align}
	T_R(x)^*=T_R(-x)=T_R(x)^{-1}
	\,,\qquad
	T_R(0)=1\,.
\end{align}
The operators $T_R(x)$ are now used to twist the canonical commutation relations. We define the operator-valued distributions
\begin{align}\label{eq:ARQ}
	a_{R,Q}(p)
	:=
	a(p)T_R(Qp)
	\,,\qquad
	\ad_{R,Q}(p)
	=
	\ad(p)\,T_R(-Qp)
	\,.
\end{align}
Making use of the antisymmetry of $Q$ and $R(0)=1$, it is straightforward to check that $a(p)$ and $T_R(Qp)$ commute, and thus $\ad_{R,Q}(p)=a_{R,Q}(p)^*$. Explicitly, the deformed annihilation operator acts as, $\varphi\in\Hil_1$, $\Psi\in\DD$,
\begin{align}\label{eq:a_Q}
	(a_{R,Q}(\varphi)\Psi)_n(p_1,...,p_n)
	&=
	\sqrt{n+1}\int d\mu(q)\,\overline{\varphi(q)}\,\prod_{k=1}^n R(Qq\cdot
p_k)\,\Psi_{n+1}(q,p_1,...,p_n)
	\,,
\end{align}
and $\ad_{R,Q}(\varphi)=	a_{R,Q}(\varphi)^*$. It is instructive to compute the exchange relations of the kernels \eqref{eq:ARQ} for different matrices $Q,Q'$. By straightforward calculation, one gets $a(p) T_R(x)=R(x\cdot p)\cdot T_R(x)a(p)$, and hence
\begin{align}\label{eq:RTwistedCCR}
	a_{R,Q}(p)a_{R,Q'}(p')
	&=
	R(p\cdot Qp')R(p\cdot Q'p')\,a_{R,Q'}(p')a_{R,Q}(p)
	\\
	\ad_{R,Q}(p) \ad_{R,Q'}(p')
	&=
	R(p\cdot Qp')R(p\cdot Q'p')\,\ad_{R,Q'}(p')\ad_{R,Q}(p)
	\nonumber
	\\
	a_{R,Q}(p) \ad_{R,Q'}(p')
	&=
	R(-p\cdot Qp')R(-p\cdot Q'p')\,\ad_{R,Q'}(p') a_{R,Q}(p)
	+
	\eps_\bp\,\delta(\bp-\bp')T_R(Qp)T_R(-Q'p)
	\,.
	\nonumber
\end{align}
This exchange algebra generalizes the relations of the Moyal-twisted CCR from \cite{GrosseLechner:2007}. Putting $Q'=Q$, these commutation relations are reminiscent of the Zamolodchikov-Faddeev algebra \cite{ZamolodchikovZamolodchikov:1979,Faddeev:1984}, an observation that will be discussed in Section \ref{section:IntegrableModels}. We also note that for $Q'=-Q$, one can use $R(-a)=R(a)^{-1}$ to simplify the above commutators to
\begin{align*}
	[a_{R,Q}(p),\,a_{R,-Q}(p')]
	&=
	0
	\,,
	\\
	[\ad_{R,Q}(p),\,\ad_{R,-Q}(p')]
	&=
	0\,,
	\\
	[a_{R,Q}(p),\,\ad_{R,-Q}(p')]
	&=
	\eps_\bp\,\delta(\bp-\bp')T_R(Qp)^2
	\,.
\end{align*}
As $-Q$ corresponds to the reflected wedge $W_0'$, these exchange relations and the analytic properties of $R$ can also be used for a proof of the wedge-locality in Proposition \ref{proposition:PhiQBasicProperties} {\em iv)} along the same lines as in \cite{Lechner:2003}.

\begin{proposition}\label{proposition:DeformedSingleFieldOperators}
	The deformed field operators  $\phi_{R,Q}(f_1)$, $f_1\in\Ss_1$, have the form
	\begin{align}\label{eq:PhiQ-AA*}
		\phi_{R,Q}(f_1)
		&=
		\ad_{R,Q}(f_1^+)
		+
		a_{R,Q}(\overline{f_1^-})
		\,.
	\end{align}
\end{proposition}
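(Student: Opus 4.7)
The plan is to use Proposition~\ref{proposition:PhiQBasicProperties}~\refitem{item:PhiQRepresentation}, by which $\phi_{R,Q}(f_1)\Psi(g)=\Psi(f_1\ot_\rho g)$ for every $g\in\BU$. Setting $B:=\ad_{R,Q}(f_1^+)+a_{R,Q}(\overline{f_1^-})$, the proposition is therefore equivalent to the identity
$$
B\,\Psi(g)=\Psi(f_1\ot_\rho g)\qquad(g\in\BU),
$$
on the dense domain $\DD=\Psi(\BU)$. Since both sides are linear and continuous in $g$ and $\BU=\bigoplus_n\Ss_n$, it will suffice to verify this identity for $g=g_n\in\Ss_n$ (embedded into $\BU$ by zero padding) and every $n\geq 0$.

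For such $g_n$ the product formula \eqref{eq:DeformedProductFromTwoPointFunction} collapses to the single pure $(n{+}1)$-function
$$
\widetilde{(f_1\ot_\rho g_n)}(p_0,p_1,\ldots,p_n)=\fti_1(p_0)\,\gti_n(p_1,\ldots,p_n)\prod_{k=1}^n R(p_0\cdot Qp_k).
$$
I would then expand both $\Psi(f_1\ot_\rho g_n)$ and $\Psi(g_n)$ in Fock space by Wick's theorem for the quasi-free vacuum $\om$: the $m$-particle component of $\Psi(K)$ for $K\in\Ss_N$ is a sum over partial pairings of the $N$ momentum arguments of $\tilde K$, each pair being contracted against $\omti_2(p,q)=\delta(p+q)\,\eps_\bp^{-1}\delta(p^0-\eps_\bp)$ (so that $p_j=-p_i$ with $p_i$ on the forward mass shell), while the $m$ unpaired arguments are restricted to $H^+_m$ and symmetrised. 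On the $B$-side, the factorisations $\ad_{R,Q}(p)=\ad(p)T_R(-Qp)$ and $a_{R,Q}(p)=a(p)T_R(Qp)$, together with the formula \eqref{eq:a_Q}, show that $B$ acts on the Wick expansion of $\Psi(g_n)$ by adding or removing one particle and inserting a twist $T_R(\mp Qp_0)$ on the surviving momenta.

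The heart of the argument is a term-by-term matching, organised by splitting the Wick pairings of $\{p_0,\ldots,p_n\}$ according to whether the distinguished variable $p_0$ is paired with some $p_{k^*}$ or is among the unpaired survivors. Two elementary cancellations, both following from \eqref{eq:R-Relations} and the antisymmetry of $Q$, will drive the identification. First, along any pair $\{p_i,p_j\}\subset\{p_1,\ldots,p_n\}$ not involving $p_0$, the contraction $p_j=-p_i$ telescopes
$$
R(p_0\cdot Qp_i)\,R(p_0\cdot Qp_j)\Big|_{p_j=-p_i}=R(p_0\cdot Qp_i)\,R(-p_0\cdot Qp_i)=1,
$$
so that all inner contracted pairs drop out of the $R$-weighting. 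Second, if $p_0$ itself is paired with $p_{k^*}$, then $R(p_0\cdot Qp_{k^*})|_{p_{k^*}=-p_0}=R(0)=1$ (using $p_0\cdot Qp_0=0$), while the remaining factors on the surviving variables, rewritten as $R(-p_{k^*}\cdot Qp_k)$, reassemble into the twist $T_R(Qp_{k^*})$ of $a_{R,Q}$; the test-function coefficient becomes $\fti_1(-p_{k^*})=f_1^-(p_{k^*})$ on the mass shell, producing the annihilation part $a_{R,Q}(\overline{f_1^-})\Psi(g_n)$ after the complex conjugation built into $a(\cdot)$. In the complementary branch where $p_0$ is unpaired, the surviving $R$-factors form the twist $T_R(-Qp_0)$ of $\ad_{R,Q}$, with coefficient $\fti_1(p_0)=f_1^+(p_0)$, producing the creation part $\ad_{R,Q}(f_1^+)\Psi(g_n)$.

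The hard part will be the combinatorial bookkeeping: aligning the Wick pairings on $\{p_0,\ldots,p_n\}$ with those appearing inside the Wick expansion of $\Psi(g_n)$ so that inner pairs on the $p_k$'s cancel as above, and carefully tracking the symmetrisation factors $\sqrt{n!}$, $\sqrt{n+1}$ and pair-counting multiplicities on both sides. Once that alignment is in place, the $R$-cancellations reduce the computation level by level to the standard undeformed identity $\phi(f_1)=\ad(f_1^+)+a(\overline{f_1^-})$ applied to the Wick-decomposed $\Psi(g_n)$, which completes the verification.
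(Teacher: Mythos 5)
Your proposal is essentially correct in its mechanism, but it takes a genuinely different and considerably heavier route than the paper. The two cancellations you isolate --- $R(p_0\cdot Qp_i)R(p_0\cdot Q(-p_i))=1$ along internally contracted pairs, $R(p_0\cdot Qp_{k^*})=R(0)=1$ on the pair containing $p_0$, and the reassembly of the surviving factors into the twists $T_R(\mp Qp_0)$ of $\ad_{R,Q}$ and $a_{R,Q}$ --- are exactly what makes \eqref{eq:PhiQ-AA*} true. The paper, however, avoids the Wick combinatorics altogether by a choice of test functions: it first assumes $\supp\fti_1\subset V_-$ and takes $g\in\Ss_{n+1}$ with momentum support in $V_+$, so that $\Psi(g)$ is a pure $(n+1)$-particle vector with no contractions and $f_1\ot_\rho g$ consists of a single deformed term; the product of $R$-factors is then recognized as $T_R(Qp_1)$ next to $a(p_1)$, giving $\phi_{R,Q}(f_1)\Psi(g)=a_{R,Q}(\overline{f_1^-})\Psi(g)$ in a few lines. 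The creation part follows by taking adjoints via $\phi_{R,Q}(f_1)^*\supset\phi_{R,Q}(f_1^*)$, and a decomposition $f_1=g_1+h_1$ according to momentum support yields general $f_1$ by linearity. What your route buys is a direct verification on all of $\DD$ for arbitrary $g_n$, with no support restrictions and no appeal to adjoints; what it costs is precisely the part you defer --- the Wick expansion of $\Psi(K)$ for non-product kernels, the bijection between partial pairings of $\{0,1,\ldots,n\}$ (split according to whether $0$ is paired) and the terms of $\ad_{R,Q}(f_1^+)\Psi(g_n)+a_{R,Q}(\overline{f_1^-})\Psi(g_n)$, and the symmetrization constants. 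That bookkeeping does close, and your idea of pulling out the diagonal twists and reducing to the undeformed identity is the right way to organize it; just note that with the paper's conventions the left member of each contracted pair is the one evaluated on the backward mass shell (the weight is $\omti_2(-p_i,-p_j)$, forcing $p_i\in H^-_m$, $p_j=-p_i\in H^+_m$ for $i<j$), which is exactly what guarantees that only $f_1^-$ appears in your paired branch; your Branch B computation respects this, though your stated contraction convention has the shells interchanged. If you want a short write-up, adopting the paper's support restriction removes the combinatorial step entirely.
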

\begin{proof}
	Let $f_1\in\Ss_1$ with $\supp \fti_1\subset V_-$, and $g\in\Ss_{n+1}(V_+)$,
	$n\in\Nl_0$. Then $\Psi(g)\in\Hil_{n+1}$, $\Psi(f_1\ot_\rho g)\in\Hil_n$, and for $p_1,...,p_n\in H^+_m$, we find, $d\mu(\bp):=d\mu(p_1)\cdots d\mu(p_{n+2})$,
	\begin{align*}
		\phi_{Q,R}(f_1)\Psi(g)
		&=
		\Psi(f_1\ot_{R,Q} g)
		\\
		&=
		\int	d\mu(\bp)\,(f_1\ot_{R,Q}g)^{\sim}(-p_1,p_2,...,p_{n+2})
		a(p_1)\ad(p_2)\cdots\ad(p_{n+2})\Om
		\\
		&=
		\int	d\mu(\bp)\,\fti_1(-p_1)\gti(p_2,..,p_{n+2})
		\prod_{r=2}^{n+2}R(-p_1\cdot Qp_r)
		a(p_1)\ad(p_2)\cdots\ad(p_{n+2})\Om
		\\
		&=
		\int	d\mu(\bp)\,\fti_1(-p_1)\gti(p_2,..,p_{n+2})
		a(p_1)T_R(Qp_1)\ad(p_2)\cdots\ad(p_{n+2})\Om
		\\
		&=
		a_{R,Q}(\overline{f_1^-})\Psi(g)
		\,.
	\end{align*}
	As $g$ and $n$ were arbitrary, we have shown that $\phi_{R,Q}(f_1)$ and $a_{R,Q}(\overline{f_1^-})$ coincide on $\DD$. Since $\supp f_1$ does not intersect the upper mass shell, $f_1^+=0$, and hence the above equation confirms \eqref{eq:PhiQ-AA*}. Taking adjoints, one also finds, $\Psi\in\DD$,
	\begin{align*}
		\phi_{R,Q}(f_1^*)\Psi
		=
		\phi_{R,Q}(f_1)^*\Psi
		=
		a_{R,Q}(\overline{(f_1^*)^-})^*\Psi
		=
		\ad_{R,Q}((f_1^*)^+)\Psi
		\,.
	\end{align*}
	As $\supp \fti_1^*=-\supp \fti_1\subset V_+$, this implies $\phi_{R,Q}(f_1)\Psi=\ad_{R,Q}(f_1^+)\Psi$ for all $\Psi\in\DD$ and all $f_1\in\Ss_1$ with $\supp\fti\subset V_+$. A function $f_1\in\Ss_1$ with arbitrary momentum space support can be decomposed according to $f_1=g_1+h_1$ with the support of $\gti_1$ (respectively $\hti_1$) not intersecting the upper (respectively lower) mass shell. By linearity, this gives \eqref{eq:PhiQ-AA*}.
\end{proof}

It is interesting to note that the deformed field operators can also be expressed as integrals over undeformed fields, similar to the warped convolutions studied in  \cite{BuchholzLechnerSummers:2010}. More precisely, one has, $f_1\in\Ss_1$,
\begin{align}\label{eq:PhiRQAsIntegral}
	\phi_{R,Q}(f_1)
	=
	(2\pi)^{-d}\int dp\,dx\,e^{-ip\cdot x}\,U(x,1)\phi(f_1)U(-x,1)T_R(-Qp)
	\,.
\end{align}
This integral exists as a weak oscillatory integral on vectors $\Psi\in\DD$. In fact, for $\supp f_1\subset V_-$ and $\Psi\in\Hil$, we obtain, $n\in\Nl_0$, $q_1,...,q_n\in H^+_m$,
\begin{align*}
	(2\pi)^{-d}\int dp&\,dx\,e^{-ip\cdot x}\,(U(x,1)\phi(f_1)U(-x,1)T_R(-Qp)\Psi)_n(q_1,...,q_n)
	\\
	&=
	\sqrt{n+1}(2\pi)^{-d}\int dp\,dx\,e^{-ip\cdot x}\int d\mu(q_0)\,\fti_1(-q_0)e^{-iq_0\cdot x}(T_R(-Qp)\Psi)_{n+1}(q_0,q_1,...,q_n)
	\\
	&=
	\sqrt{n+1}\int d\mu(q_0)\,\fti_1(-q_0)(T_R(Qq_0)\Psi)_{n+1}(q_0,q_1,...,q_n)
	\\
	&=
	\sqrt{n+1}\int d\mu(q_0)\,\fti_1(-q_0)\prod_{r=1}^n R(Qq_0\cdot q_r) \Psi_{n+1}(q_0,q_1,...,q_n)
	\\
	&=
	(a_{R,Q}(\overline{f_1^-})\Psi)_n(q_1,...,q_n)
	\,,
\end{align*}
and an analogous calculation can be carried out for the creation operator, establishing \eqref{eq:PhiRQAsIntegral}. However, the integral formula \eqref{eq:PhiRQAsIntegral} reproduces the higher deformed fields $\phi_{R,Q}(f_n)$, $n\geq2$, only if $R$ is of the exponential form $R(a)=e^{ica}$. In this case, $T_R(x)=U(x,1)$, and \eqref{eq:PhiRQAsIntegral} coincides with the warped convolution of $\phi(f)$ by the translation representation $U|_{\Rl^d}$ \cite{BuchholzLechnerSummers:2010}. But for generic $R$, the integrals \eqref{eq:PhiRQAsIntegral} are non-local operators\footnote{I acknowledge helpful discussions with Sergio Yuhjtman about this question.}, and the deformation map $\phi(f_n)\mapsto\phi_{R,Q}(f_n)$ takes a different form. The extension of this map to bounded operators and its integral representations will be discussed in a forthcoming publication with J.~Schlemmer.
\\
\\
We now show that the deformation $\phi(f)\mapsto\phi_{R,Q}(f)$ produces in fact new models, which are not equivalent to their undeformed counterparts. To this end, we will compute the two-particle scattering of the deformed models defined by the fields $\phi_{R,Q}$, following the Haag-Ruelle-Hepp approach \cite{Araki:1999,Hepp:1965}  in its form adapted to wedge-localized operators \cite{BorchersBuchholzSchroer:2001}. Picking $f_1,g_1\in\Ss_1$, the fields $\phi_{R,Q}(f_1)$, $\phi_{R,-Q}(g_1)$ are localized in the wedges $W_0+\supp f_1$ and $W_0'+\supp g_1$, respectively, and create single particle states from the vacuum\footnote{In fact, these fields are temperate polarization-free generators in the sense of \cite{BorchersBuchholzSchroer:2001}.}:
\begin{align}
	\phi_{R,\pm Q}(f_1)\Om=\phi(f_1)\Om=f_1^+\in\Hil_1
	\,.
\end{align}
To define two-particle scattering states, we choose $f_1,g_1$ in such a way that $\supp \fti_1$, $\supp \gti_1$ are concentrated around points on the upper mass shell, and do not intersect the lower mass shell. Furthermore, we introduce the usual notations $\Gamma(f_1):=\{(1,\bp/\eps_\bp)\,:\,p\in\supp\fti_1\}$ for the velocity support of $f_1$, and $f_{1,t}(x):=(2\pi)^{-d/2}\int dp\,\fti(p)e^{i(p^0-\eps_\bp)t}e^{-ip\cdot x}$, with  $p=(p^0,\bp)$ and $\eps_\bp=(\bp^2+m^2)^{1/2}$, for its Klein-Gordon time evolution. It is well known that for asymptotic times $t$, the support of $f_{1,t}$ is essentially contained in $t\Gamma(f_1)$ \cite{Hepp:1965}, that is, the restriction of $f_{1,t}$ to the complement of an open neighborhood of $t\Gamma(f_1)$ converges to zero in the topology of $\Ss_1$ as $|t|\to\infty$.

Because of the compact supports of $f_1,g_1$ in momentum space, the fields $\phi_{R,Q}(f_1)$ and $\phi_{R,-Q}(g_1)$ are not sharply localized in Minkowski space. However, for asymptotic times we have localization of $\phi_{R,Q}(f_{1,t})$  and $\phi_{R,-Q}(g_1)$ in $W_0+t\Gamma(f_1)$ and $W_0'+t\Gamma(g_1)$, respectively. If the velocity supports of $f_1,g_1$ lie in a suitable relative position to the wedge $W_0$, namely $\Gamma(f_1)-\Gamma(g_1)\subset W_0$, these regions are spacelike for $t>0$. As $t\to\infty$, we therefore find two-particle outgoing scattering states as the limits \cite{BorchersBuchholzSchroer:2001}
\begin{align}
	\lim_{t\to\infty}\phi_{R,-Q}(g_{1,t})\phi_{R,Q}(f_{1,t})\Om
	=
	\lim_{t\to\infty}\phi_{R,Q}(f_{1,t})\phi_{R-Q}(g_{1,t})\Om
	=:
	f_1^+\times_{\rm out}^R g_1^+
	\,.
\end{align}
To construct scattering states of incoming particles, the ordering of $f_1,g_1$ has to be reversed: For $t<0$, the localization regions $W_0+t\Gamma(g_1)$ and $W_0'+t\Gamma(f_1)$ lie spacelike if $\Gamma(f_1)-\Gamma(g_1)\subset W_0$, and we have
\begin{align}
	\lim_{t\to-\infty}\phi_{R,-Q}(f_{1,t})\phi_{R,Q}(g_{1,t})\Om
	=
	\lim_{t\to-\infty}\phi_{R,Q}(g_{1,t})\phi_{R,-Q}(f_{1,t})\Om
	=:
	f_1^+\times_{\rm in}^R g_1^+
	\,.
\end{align}
All these limits are easy to compute in the present setting. Since the supports of $f_1,g_1$ do not intersect the lower mass shell, the annihilation parts of the fields drop out, and because the $t$-dependence of $f_{1,t}$ is trivial on the upper mass shell, one finds, $\Gamma(f_1)-\Gamma(g_1)\subset W_0$,
\begin{align*}
	f_1^+\times_{\rm out}^R g_1^+
	&=
	\lim_{t\to\infty}\phi_{R,Q}(f_{1,t})\phi_{R,-Q}(g_{1,t})\Om
	=
	\ad_{R,Q}(f_1^+)\ad(g_1^+)\Om
	\,,\\
	f_1^+\times_{\rm in}^R g_1^+
	&=
	\lim_{t\to-\infty}\phi_{R,-Q}(f_{1,t})\phi_{R,Q}(g_{1,t})\Om
	=
	\ad_{R,-Q}(f_1^+)\ad(g_1^+)\Om
	\,.
\end{align*}
These two-particle vectors have the explicit form
\begin{align*}
	(f_1^+\times_{\rm out/in}^R g_1^+)(p_1,p_2)
	&=
	\left(\ad_{R,\pm Q}(f_1^+)g_1^+\right)_2(p_1,p_2)
	\\
	&=
	\frac{1}{\sqrt{2}}
	\Big(R(\pm p_1\cdot Qp_2) f_1^+(p_1)g_1^+(p_2)
	+
	R(\pm p_2\cdot Qp_1)\,f_1^+(p_2)g_1^+(p_1)
	\Big)
	\,.
\end{align*}
To compute S-matrix elements, let $f_1,g_1,h_1,k_1\in\Ss_1$ with $\Gamma(f_1)-\Gamma(g_1)\subset W_0$, $\Gamma(h_1)-\Gamma(k_1)\subset W_0$. Taking into account these momentum space supports yields the scalar products
\begin{align}\label{eq:SMatrix}
	\langle f_1^+\times_{\rm out}^R g_1^+,\,h_1^+\times_{\rm in}^R k_1^+\rangle
	&=
	\int d\mu(p_1)\,d\mu(p_2)\,R(-p_1\cdot Qp_2)^2\,
	\overline{\fti_1(p_1)}\overline{\gti_1(p_2)}
	\hti_1(p_1)\kti_1(p_2)
	\,.
\end{align}
This formula shows that the S-Matrix elements of the discussed model depends on the deformation. In particular, the scattering in the undeformed theory, corresponding to $R(a)=1$, and the deformed one is different, and the deformed theory is not equivalent to the undeformed one.

Equation \eqref{eq:SMatrix} also clarifies the role of the function $R$ on which our deformation is based: The elastic two-particle S-Matrix kernels of the undeformed and deformed theory differ by its square $R(-p_1\cdot Qp_2)^2$. Since $R$ is a phase factor, the effects in collision processes are relatively small, and can only be measured in special setups such as time delay experiments. These features are similar to the properties of the S-matrices found in the warped convolution deformation \cite{GrosseLechner:2008,BuchholzSummers:2008}.

In view of the dependence of the S-matrix on $Q$, which is only invariant under the boosts preserving $W_0$, but not the full Lorentz group in $d>1+1$ dimensions, we also observe that the two-particle S-matrix obtained here is not fully Lorentz invariant in $d>1+1$. As a consequence, it follows that the model theory constructed here can not contain many observables localized in bounded spacetime regions $\OO$. When passing to von Neumann algebras of observables localized in $\OO$, one finds that at least the Reeh-Schlieder property is violated.

In $d=1+1$ dimensions, however, the identity component of the Lorentz group consists just of the one-dimensional boost group, and hence the above S-matrix is fully Lorentz invariant in this case. We will discuss the two-dimensional situation in Section \ref{section:IntegrableModels}.

\section{Modular structure}\label{section:modular}

In this section we explain how to pass from the unbounded field operators $\phi_{R,Q}(f)$, $f\in\BU$, to associated von Neumann algebras, and study their modular structure. The first step is to control commutators of bounded functions of fields.

\begin{proposition}\label{proposition:PhiIsEssentiallySelfadjoint}
	Let $R$ be a deformation function, and $Q$ an admissible matrix.
	\begin{propositionlist}
		\item Let $f_1=f_1^*\in\Ss_1$. Then $\phi_{R,Q}(f)$ is essentially self-adjoint.
		\item Let $f_1=f_1^*\in\Ss_1(W_0)$ and $g_1=g_1^*\in\Ss_1(W_0')$. Then the self-adjoint closures $\overline{\phi_{R,Q}(f_1)}$ and $\overline{\phi_{R,-Q}(g_1)}$ commute, i.e.,
		\begin{align}\label{eq:CommutatorOfExponentiatedFields}
			\left[e^{it\overline{\phi_{R,Q}(f_1)}},\;e^{is\overline{\phi_{R,-Q}(g_1)}}\right]=0\,,\qquad t,s\in\Rl\,.
		\end{align}
	\end{propositionlist}
\end{proposition}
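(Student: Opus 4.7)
The strategy rests on a Nelson-style analytic-vector argument, powered by a single observation: the deformation function satisfies $|R(a)|=1$ (immediate from $R(-a)=\overline{R(a)}=R(a)^{-1}$ in Definition \ref{definition:DeformationFunction}), so each twist $T_R(x)$ is unitary. Combined with the explicit formula \eqref{eq:a_Q} for $a_{R,Q}(\varphi)$, a pointwise estimate and Cauchy--Schwarz under the momentum integral yield the usual Fock-space bound
\begin{align*}
\|a^\#_{R,Q}(\varphi)\Psi_n\|\;\leq\;\sqrt{n+1}\,\|\varphi\|_{\Hil_1}\,\|\Psi_n\|
\end{align*}
on $n$-particle vectors $\Psi_n$, formally identical to the undeformed bound. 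This estimate is the engine for both parts.

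For \textit{i)}, the reality condition $f_1=f_1^*$ translates in momentum space to $f_1^+=\overline{f_1^-}$, so Proposition \ref{proposition:DeformedSingleFieldOperators} exhibits $\phi_{R,Q}(f_1)=\ad_{R,Q}(f_1^+)+a_{R,Q}(f_1^+)$ as a symmetric operator; Proposition \ref{proposition:PhiQBasicProperties} further ensures that $\DD$ is invariant under it. Iterating the bound above produces
\begin{align*}
\|\phi_{R,Q}(f_1)^k\Psi_n\|\;\leq\;(2\|f_1^+\|_{\Hil_1})^k\sqrt{(n+1)(n+2)\cdots(n+k)}\,\|\Psi_n\|,
\end{align*}
and Stirling shows that $\sum_{k}(t^k/k!)\|\phi_{R,Q}(f_1)^k\Psi_n\|$ converges for every $t\in\Rl$. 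Hence every vector of finite particle rank --- a dense subspace of $\DD$ --- is an entire analytic vector, and Nelson's analytic-vector theorem yields essential self-adjointness on $\DD$.

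For \textit{ii)}, set $A:=\phi_{R,Q}(f_1)$ and $B:=\phi_{R,-Q}(g_1)$. Both are symmetric on the common invariant domain $\DD$, and Proposition \ref{proposition:PhiQBasicProperties} supplies the weak commutation relation $[A,B]\Psi=0$ for $\Psi\in\DD$. To upgrade this to the strong statement \eqref{eq:CommutatorOfExponentiatedFields}, I invoke Nelson's commutator theorem: it suffices to prove that $A^2+B^2$ is essentially self-adjoint on $\DD$. Using $[A,B]=0$ on $\DD$ to expand $(A^2+B^2)^k$ binomially and applying the $n$-particle bound $2k$ times in each summand, I expect an estimate of the form
\begin{align*}
\|(A^2+B^2)^k\Psi_n\|\;\leq\;\bigl(C(n+2k)\bigr)^k\,\|\Psi_n\|
\end{align*}
with $C=C(f_1,g_1)$. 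The series $\sum_k(t^k/k!)(C(n+2k))^k$ then has positive radius of convergence (of order $1/(Ce)$ by Stirling), so finite-particle vectors are analytic for $A^2+B^2$; Nelson's theorem provides essential self-adjointness, and the commutator theorem then gives strong commutativity of $\overline{A}$ and $\overline{B}$, which is exactly \eqref{eq:CommutatorOfExponentiatedFields}.

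The only point requiring care is the combinatorial bookkeeping in the bound for $(A^2+B^2)^k$: one must verify that the cancellations between the various orderings of creation and annihilation parts do not destroy the polynomial-in-$k$ growth of the particle number estimate. Once this is secured, both parts reduce to standard Nelson machinery. An alternative route via the Borchers--Zimmermann analyticity result would yield the same conclusion, but the Nelson approach is most self-contained in the present Fock-space setting.
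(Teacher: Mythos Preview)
Your part \textit{i)} is essentially identical to the paper's proof: both exploit $|R(a)|=1$ to obtain the undeformed Fock bound $\|\phi_{R,Q}(f_1)|_{\Hil_n}\|\leq\sqrt{n+1}(\|f_1^+\|+\|f_1^-\|)$, deduce that finite-particle vectors are entire analytic, and invoke Nelson's analytic-vector theorem.

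For part \textit{ii)} you take a genuinely different route. The paper argues directly: the same bound shows that $e^{is\overline{\phi_{R,-Q}(g_1)}}\Psi$ is again an entire analytic vector for $\overline{\phi_{R,Q}(f_1)}$ whenever $\Psi\in\DD$, so the commutator $[e^{itA},e^{isB}]\Psi$ can be expanded as the absolutely convergent double power series $\sum_{n,n'}\frac{(it)^n(is)^{n'}}{n!\,n'!}[A^n,B^{n'}]\Psi$, and each term vanishes by the weak commutativity from Proposition~\ref{proposition:PhiQBasicProperties}~\refitem{item:PhiQWedgeLocality}. You instead reduce the question to Nelson's commutator theorem and establish its hypothesis---essential self-adjointness of $A^2+B^2$---by a second analytic-vector argument. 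Both routes rest on the same basic estimate; yours packages the passage from weak to strong commutativity into a black-box theorem, while the paper's is a hands-on power-series computation. Your worry about ``cancellations'' in the $(A^2+B^2)^k$ bound is unfounded: one only needs upper bounds, so the triangle inequality and the crude estimate $\|A^{2j}B^{2(k-j)}\Psi_n\|\leq c^{2k}\sqrt{(n+1)\cdots(n+2k)}\,\|\Psi_n\|$ (obtained by tracking only the maximal particle number at each step) already give $\|(A^2+B^2)^k\Psi_n\|\leq(c_f^2+c_g^2)^k(n+2k)^k\|\Psi_n\|$, which is precisely your claimed form.
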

\begin{proof}
	{\em i)} We will first show that any $\Psi\in\DD$ is an entire analytic vector for the field operators $\phi_{R,Q}(f_1)$, $f_1\in\Ss_1$. For $\varphi\in\Hil_1$, the annihilation operator $a_{R,Q}(\varphi)$ can be estimated with the help of \eqref{eq:a_Q} and $|R(t)|=1$ as
	\begin{align*}
		\left|(a_{R,Q}(\varphi)\Psi_n)(p_1,...,p_{n-1})\right|
		&\leq
		\sqrt{n}\left|\int d\mu(q)\,\varphi(q)\,\prod_{k=1}^n R(Qq\cdot p_k)\,\Psi_n(q,p_1,...,p_{n-1})\right|
		\\
		&\leq
		\sqrt{n}\int d\mu(q)\,|\varphi(q)|\,|\Psi_n(q,p_1,...,p_{n-1})|
		\,.
	\end{align*}
	By standard $L^2$-estimates, this implies $\|a_{R,Q}(\varphi)|_{\Hil_n}\|\leq\sqrt{n}\|\varphi\|$, and taking adjoints, also $\|\ad_{R,Q}(\varphi)|_{\Hil_n}\|\leq\sqrt{n+1}\|\varphi\|$ follows. Thus we have the basic bound
	\begin{align}\label{eq:PhiNBound}
		\|\phi_{R,Q}(f_1)|_{\Hil_n}\|
		\leq
		\sqrt{n+1}\,(\|f_1^+\|_{\Hil_1}+\|f_1^-\|_{\Hil_1})\,.
	\end{align}
	With this bound one can easily show that any $\Psi\in\DD$ is an entire analytic vector for $\phi_{R,Q}(f_1)$ (see, for example, the proof of Theorem X.41 in \cite{ReedSimon:1975}). As $\DD\subset\Hil$ is dense, application of Nelson's analytic vector theorem \cite[Thm.~X.39]{ReedSimon:1975} shows that $\phi_{R,Q}(f_1)$, $f_1^*=f_1$, is essentially self-adjoint. Its self-adjoint closure will be denoted $\overline{\phi_{R,Q}(f_1)}$.

	{\em ii)} Using the bound \eqref{eq:PhiNBound} again, one also shows that $e^{is\overline{\phi_{R,-Q}(g_1)}}\Psi$, $g_1=g_1^*\in\Ss_1$, $s\in\Rl$, $\Psi\in\DD$, is an entire analytic vector for $\overline{\phi_{R,Q}(f_1)}$, as in free field theory. Hence on $\Psi\in\DD$, the commutator \eqref{eq:CommutatorOfExponentiatedFields} can be computed as the power series
	\begin{align*}
		\left[ e^{it\overline{\phi_{R,Q}(f_1)}},\,e^{is\overline{\phi_{R,-Q}(g_1)}}\right]\Psi
		=
		\sum_{n,n'=0}^\infty \frac{i^{n+n'}t^ns^{n'}}{n!n'!}\,
		\left[\phi_{R,Q}(f_1)^n,\,\phi_{R,-Q}(g_1)^{n'}\right]\Psi\,.
	\end{align*}
	As $\phi_{R,Q}(f_1)$ and $\phi_{R,-Q}(g_1)$ commute on $\phi(\BU)\Om$ (Proposition \ref{proposition:PhiQBasicProperties} \refitem{item:PhiQWedgeLocality}), the proof is finished.
\end{proof}

We now introduce the von Neumann algebras generated by the self-adjoint field operators,
\begin{align*}
	\M_{R,Q} &:=  \left\{e^{i\overline{\phi_{R,Q}(f_1)}}\,:\,f_1=f_1^*\in\Ss_1(W_0)\right\}''\,,
	\\
	\Mhat_{R,Q} &:= \left\{e^{i\overline{\phi_{R,-Q}(g_1)}}\,:\,g_1=g_1^*\in\Ss_1(W_0')\right\}''\,.
\end{align*}
In view of Proposition \ref{proposition:PhiIsEssentiallySelfadjoint} {\em ii)}, these algebras commute, $\Mhat_{R,Q}\subset{\M_{R,Q}}'$. By standard arguments making use of the Reeh-Schlieder property established in Proposition \ref{proposition:PhiQBasicProperties} \refitem{item:PhiQReehSchlieder}, it also follows that the vacuum vector is cyclic for $\M_{R,Q}$ and $\Mhat_{R,Q}$. As these algebras commute, $\Om$ is separating as well. Thus Tomita Takesaki modular theory applies to the pair $(\M_{R,Q},\Om)$, and provides us with modular unitaries $\Delta_{R,Q}^{it}$ and a modular involution $J_{R,Q}$. In the following theorem, we show that these data are stable under the deformation, i.e. do not depend on $R$ and $Q$ within the specified limitations. For the special case $R(a)=e^{ia}$, this fact was already shown in \cite{BuchholzLechnerSummers:2010}.

\begin{theorem}\label{theorem:StabilityOfModularData}
	Let $R$ be a deformation function and $Q$ an admissible matrix.
	\begin{propositionlist}
		\item The modular data $J_{R,Q}, \Delta_{R,Q}$ of $\M_{R,Q},\Om$ are independent of $R$ and $Q$.
		\item The Bisognano-Wichmann property holds,
		\begin{align}\label{eq:BiWi}
			\Delta_{R,Q}^{it} = U(0,\La_1(2\pi t))\,,\qquad J_{R,Q}=U(0,j)\,,
		\end{align}
		with $\La_1(t):(x^0,...,x^{d-1})\mapsto(\cosh(t)x^0+\sinh(t)x^1,\sinh(t)x^0+\cosh(t)x^1,x^2,...,x^{d-1})$ denoting the boosts in $x^1$-direction.
		\item $\Mhat_{R,Q}={\M_{R,Q}}'$.
	\end{propositionlist}
\end{theorem}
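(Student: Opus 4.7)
The plan is to identify the modular data $(J_{R,Q}, \Delta_{R,Q}^{it})$ with the geometric operators $J_0 := U(0,j)$ and $\Delta_0^{it} := U(0, \La_1(2\pi t))$, which manifestly do not depend on $R$ or $Q$. Once this identification is achieved, claim (ii) is precisely the statement, claim (i) follows immediately, and for (iii) I first observe that the covariance formula \eqref{eq:PhiJCovariance}, together with $j Q j^{-1} = Q$ (from \eqref{eq:QLambdaInvariance2}, since $j \in \LG_+^\downarrow$ and $jW_0 = W_0'$) and $\alpha_j \Ss_1(W_0) = \Ss_1(W_0')$, yields $J_0 \M_{R,Q} J_0 = \Mhat_{R,Q}$. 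Tomita's theorem then gives $\M_{R,Q}' = J_{R,Q} \M_{R,Q} J_{R,Q} = \Mhat_{R,Q}$, establishing Haag duality.

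The one-particle identification is easy: by Proposition \ref{proposition:PhiQBasicProperties} \refitem{item:PhiQRepresentation} one has $\phi_{R,Q}(f)\Om = \phi(f)\Om$ for $f \in \Ss_1$, so the single-particle standard subspace of $(\M_{R,Q}, \Om)$ in $\Hil_1$ coincides with that of the undeformed free-field wedge algebra. The classical Bisognano-Wichmann theorem for the free scalar field, provable from the spectrum condition and wedge covariance, identifies the modular data on this subspace as $J_0|_{\Hil_1}$ and $\Delta_0^{it}|_{\Hil_1}$. Moreover, $\Delta_0^{it}$ implements an automorphism of $\M_{R,Q}$ (since $\La_1(t)W_0=W_0$ and $\La_1(t) Q \La_1(t)^{-1} = Q$ by \eqref{eq:QLambdaInvariance}), and both $J_0$ and $\Delta_0^{it}$ fix $\Om$.

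To promote this identification from $\Hil_1$ to all of $\Hil$, I would invoke a Borchers-type argument: since $U(x,1)$ for $x \in \overline{W_0}$ restricts to a positive-energy semigroup of endomorphisms of $\M_{R,Q}$ (by Proposition \ref{proposition:PhiQBasicProperties} \refitem{item:PhiQCovariance}), the pair $(\Delta_{R,Q}^{it}, J_{R,Q})$ satisfies the standard commutation relations with the translations, with the boost $\La_1(2\pi t)$ and reflection $j$ appearing intrinsically. The geometric pair $(\Delta_0^{it}, J_0)$ satisfies the same relations by Poincar\'e covariance. Hence $V_t := \Delta_{R,Q}^{it}\Delta_0^{-it}$ and $V := J_{R,Q} J_0$ commute with all translations, fix $\Om$, and restrict to the identity on $\Hil_1$ by the previous step. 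The main obstacle I anticipate is precisely this final extension: ruling out non-trivial action of $V_t, V$ on the multi-particle sectors. Following the warped-convolution argument of \cite{BuchholzLechnerSummers:2010}, I expect this to rest on the Fock-space structure of $\Hil$, the cyclicity of $\Om$ for $\M_{R,Q}$, and the multiplicative action of the twist $T_R(x)$ in momentum space, which together allow the one-particle identification to propagate to arbitrary particle number.
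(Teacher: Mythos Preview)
Your approach differs substantially from the paper's, and the obstacle you flag---propagating the identification from $\Hil_1$ to the full Fock space---is a genuine gap that your sketch does not close. A unitary commuting with all translations, fixing $\Om$, and restricting to the identity on $\Hil_1$ need not be the identity on $\Hil$: for instance, any operator acting as a particle-number--dependent phase $e^{i\theta_n}$ on each $n$-particle sector (with $\theta_0=\theta_1=0$) satisfies all of these constraints. Borchers' theorem only fixes how $\Delta_{R,Q}^{it}$ and $J_{R,Q}$ intertwine the translations; it does not determine them uniquely. So the ingredients you list (translation-commutation, cyclicity, multiplicative form of $T_R$) are not by themselves enough to force $V_t=1$ and $V=1$.

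The paper sidesteps this difficulty entirely by comparing the Tomita \emph{operators} $S_{R,Q}$ and $S$ directly, rather than their polar parts. The point you are underusing is that $\phi_{R,Q}(f)\Om=\phi(f)\Om$ holds for \emph{all} $f\in\BU$, not only $f\in\Ss_1$ (Proposition \ref{proposition:PhiQBasicProperties}\refitem{item:PhiQRepresentation}). After showing that $\overline{\phi_{R,Q}(f)}$ is affiliated with $\M_{R,Q}$ for every $f\in\BU(W_0)$ (via commutation with the generators of $\Mhat_{R,Q}$ on analytic vectors), a polar-decomposition argument gives $S_{R,Q}\phi_{R,Q}(f)\Om=\phi_{R,Q}(f^*)\Om$. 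Combined with the identity above, this yields $S_{R,Q}\phi(f)\Om=\phi(f^*)\Om=S\phi(f)\Om$ on $\phi(\BU(W_0))\Om$. This subspace is a core for the undeformed $S$ because the undeformed modular group already acts as the $W_0$-preserving boosts, so $S_{R,Q}\supset S$. The symmetric argument on the commutant side gives $S_{R,Q}^*\supset S^*$, hence $S_{R,Q}\subset S$, and thus $S_{R,Q}=S$; uniqueness of the polar decomposition then identifies the modular data. No separate multi-particle extension is needed, because the core $\phi(\BU(W_0))\Om$ already contains vectors of all particle numbers. Your treatment of part \textit{iii)} via $J_0\M_{R,Q}J_0=\Mhat_{R,Q}$ and Tomita's theorem matches the paper's.
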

\begin{proof}
	We first show that given $f\in\BU(W_0)$, the closed operator $F:=\overline{\phi_{R,Q}(f)}$ is affiliated with $\M_{R,Q}$. To this end, let $\Psi\in\dom F$, $\Psi_0\in\DD$, and consider a real test function $g_1'\in\Ss_1(W_0')$. As $F^*$ changes the particle number only be a finite amount, both $\Psi_0$ and $F^*\Psi_0$ are entire analytic vectors for $G':=\overline{\phi_{R,-Q}(g_1')}$. Taking also into account that $F^*$ and $(G')^p$ commute on $\DD$ for any $p\in\Nl_0$ (Proposition \ref{proposition:PhiQBasicProperties} \refitem{item:PhiQWedgeLocality}), we find
	\begin{align*}
		\langle\Psi_0,\,e^{iG'}F\Psi\rangle
		&=
		\langle e^{-iG'}\Psi_0,\,F\Psi\rangle
		=
		\sum_{p=0}^\infty \frac{(-i)^p}{p!}\langle F^*(G')^p\Psi_0,\,\Psi\rangle
		=
		\sum_{p=0}^\infty \frac{(-i)^p}{p!}\langle (G')^pF^*\Psi_0,\,\Psi\rangle
		\\
		&=
		\langle e^{-iG'}F^*\Psi_0,\,\Psi\rangle
		=
		\langle \Psi_0,\,Fe^{iG'}\Psi\rangle
		\,.
	\end{align*}
	As $\DD\subset\Hil$ is dense, this implies $e^{iG'}F\Psi=Fe^{iG'}\Psi$. Clearly, this identity then also holds when $e^{iG'}$ is replaced by any operator in the ${}^*$-algebra $\A$ generated (algebraically) by the $e^{i\overline{\phi_{R,-Q}(g_1')}}$, $g_1'\in\Ss_1(W_0')$ real. But any $A'\in\M_{R,Q}'$ is a weak limit of a sequence $A_n'$ in $\A$, and $A_n'F\Psi=FA_n'\Psi$ is stable under weak limits. Thus we arrive at $A'F\Psi=FA'\Psi$ for all $\Psi\in\dom F$, i.e., $F$ is affiliated with $\M_{R,Q}$.

	Proceeding to the polar decomposition $F=V|F|$ and the spectral projections $E_n$ of $|F|$ onto spectrum in the interval $[0,n]$, we have $V,E_n|F|\in\M_{R,Q}$ for all $n\in\Nl$. Now let $S_{R,Q}$ denote the Tomita operator of $(\M_{R,Q},\Om)$. As $S_{R,Q}VE_n|F|\Om=|F|E_nV^*\Om$, the strong convergence $E_n\to1$ as $n\to\infty$ and the closedness of $S_{R,Q}$ imply that $F\Om$ lies in the domain of $S_{R,Q}$, and $S_{R,Q}F\Om=F^*\Om$.

	As all these considerations apply in particular to the special case $R=1$, we have now gathered sufficient information for establishing {\em i)}. Let $S$ denote the Tomita operator of the undeformed algebra $\M:=\M_{1,Q}$ w.r.t. $\Om$, and let $f\in\BU(W_0)$ as above. Making use of  \eqref{eq:PhiQOnOm}, we find
	\begin{align*}
		S_{R,Q}\phi(f)\Om
		=
		S_{R,Q}\phi_{R,Q}(f)\Om
		=
		\phi_{R,Q}(f)^*\Om
		=
		\phi_{R,Q}(f^*)\Om
		=
		\phi(f^*)\Om
		=
		\phi(f)^*\Om
		=
		S\phi(f)\Om
		\,,
	\end{align*}
	i.e., $S_{R,Q}$ and $S$ coincide on the subspace $\phi(\BU(W_0))\Om$. But this domain is a core for $S=J\Delta^{1/2}$ because it is dense and the modular group $\Delta^{it}$ acts as the Lorentz boosts $\La_1(2\pi t)$ which leave $W_0$ invariant \cite{BisognanoWichmann:1975}. As $S$ and $S_{R,Q}$ are closed operators, this shows that $S_{R,Q}$ is an extension of $S$, i.e., $S_{R,Q} \supset S$.

	We now consider the commutants $\M_{R,Q}'$, $\M'$. By modular theory, their Tomita operators w.r.t. $\Om$ are the adjoints $S_{R,Q}^*$, $S^*$. In complete analogy to above, one can show that for $f'\in\BU(W_0')$, the operator $\overline{\phi_{R,-Q}(f')}$ is affiliated with $\M_{R,Q}'$, and
	\begin{align*}
		S_{R,Q}^*\phi(f')\Om
		&=
		S_{R,Q}^*\phi_{R,-Q}(f')\Om
		=
		\phi_{R,-Q}(f')^*\Om
		=
		\phi(f')^*\Om
		=
		S^*\phi(f')\Om
		.
	\end{align*}
	Since $\phi(W_0')\Om$ is a core for $S^*$, this shows $S_{R,Q}^*\supset S^*$, or, equivalently, $S_{R,Q}=S_{R,Q}^{**}\subset S^{**}=S$. Together with the previously established extension $S_{R,Q}\supset S$, this yields $S_{R,Q}=S$. The identities $\Delta_{R,Q}=\Delta$ and $J_{R,Q}=J$ then follow from the uniqueness of the polar decomposition $S_{R,Q}=J_{R,Q}\Delta_{R,Q}^{1/2}$.

	As the Bisognano-Wichmann property \eqref{eq:BiWi} is known to hold for the free field theory \cite{BisognanoWichmann:1975}, {\em ii)} follows immediately from {\em i)}. The transformation law \eqref{eq:PhiJCovariance} of the field implies $U(0,j)\M_{R,Q}U(0,j)=\Mhat_{R,Q}$ by extension from analytic vectors. By Tomita's theorem, this yields
	\begin{align*}
		\Mhat_{R,Q}=U(0,j)\M_{R,Q}U(0,j)
		=
		J_{R,Q}\M_{R,Q}J_{R,Q}
		=
		{\M_{R,Q}}'\,,
	\end{align*}
	which proves {\em iii)}.
\end{proof}

Von Neumann algebras with modular data identical to the geometric ones found in free field theory have been studied as a possible tool in the construction of quantum field theories before \cite{Wollenberg:1992, LeitzMartiniWollenberg:2000}. It is therefore interesting to note that the deformation construction presented here establishes a new infinite family of solutions to this inverse problem in modular theory.
\\
\\
For a formulation of our models in the framework of algebraic quantum field theory, we now consider the von Neumann algebras
\begin{align}\label{eq:ARW}
	\A_R(W) := U(x,\La)\M_{R,Q}U(x,\La)^{-1}\,,
\end{align}
where $W$ is a wedge and $(x,\La)\in\PG_+$ is any Poincar\'e transformation satisfying $\La W_0+x=W$. (We have suppressed the dependence of the left hand side on $Q$ here because $Q$ is transformed by $\La$.) The transformation behaviour of the field $\phi_{R,Q}(f)$ implies that \eqref{eq:ARW} is well-defined, i.e. independent of the choice of $(x,\La)$. Furthermore, we have, $W,\Wti\in\W$,
\begin{align*}
	\A_R(W)&\subset\A_R(\Wti)\quad\text{ for } W\subset \Wti\,,\\
	\A_R(W)&=\A_R(W')'\,,\\
	U(x,\La)\A_R(W)U(x,\La)^{-1}
	&=
	\A_R(\La W+x)\,,
\end{align*}
where in the last line, $(x,\La)\in\PG_+$ is arbitrary. In view of the unitarity of $U$, it is also clear that $\Om$ is cyclic and separating for each $\A_R(W)$, $W\in\W$. We summarize these findings in the following proposition.

\begin{proposition}
	Let $R$ be a deformation function and $Q$ an admissible matrix. Then the map $\A_R:\W\ni W\longmapsto\A_R(W)\subset\B(\Hil)$ \eqref{eq:ARW} is an isotonous, Haag-dual net of von Neumann algebras which transforms covariantly under the adjoint action of $U$, and the vacuum vector $\Om$ is cyclic and separating for each $\A_R(W)$, $W\in\W$.
\end{proposition}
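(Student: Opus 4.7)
The plan is to verify, in order: (a) well-definedness of $\A_R(W)$ under the ambiguous choice of $(x,\La)$, (b) isotony and $U$-covariance, (c) cyclicity and the separating property of $\Om$, and (d) Haag duality $\A_R(W)=\A_R(W')'$. All four claims assemble from Proposition \ref{proposition:PhiQBasicProperties} and Theorem \ref{theorem:StabilityOfModularData}; the present proposition is essentially a consolidation, and the one genuine input that cannot be obtained purely from the field-theoretic structure is the identification of commutants in Theorem \ref{theorem:StabilityOfModularData}.

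For well-definedness, I would observe that by the covariance relations \eqref{eq:PhiQCovariance}/\eqref{eq:PhiJCovariance}, the algebra $U(x,\La)\M_{R,Q}U(x,\La)^{-1}$ is generated by fields $\phi_{R,Q_W}(f)$, $f\in\BU(W)$, where $Q_W:=\pm\La Q\La^{-1}$ with sign $+$ in the orthochronous and $-$ in the anti-orthochronous case. Two choices $(x,\La),(x',\La')$ with $\La W_0+x=\La' W_0+x'=W$ differ by a stabilizer element $(y,\La_0)$ of $W_0$, and by the Thomas--Wichmann observation recalled in Section \ref{section:generaldeformations} one has $\La_0 W_0=W_0$. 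A short case analysis using \eqref{eq:QLambdaInvariance} then shows that the sign conventions conspire to make $Q_W$ independent of the choice: this gives well-definedness and identifies $\A_R(W)$ intrinsically as the von Neumann algebra generated by $\{\phi_{R,Q_W}(f):f\in\BU(W)\}$. Isotony $W\subset\Wti\Rightarrow\A_R(W)\subset\A_R(\Wti)$ now follows from the inclusion $\BU(W)\subset\BU(\Wti)$ together with $Q_W=Q_{\Wti}$, the latter obtained from exactly the same cancellation. Covariance under $U$ is immediate from the definition and the group law for $U$.

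The vacuum properties fall out of Reeh--Schlieder together with the affiliation argument that already appears in the proof of Theorem \ref{theorem:StabilityOfModularData}. Proposition \ref{proposition:PhiQBasicProperties}\refitem{item:PhiQReehSchlieder} gives density of $\phi_{R,Q}(\BU(W_0))\Om$ in $\Hil$, and passing to spectral projections of the self-adjoint closures $\overline{\phi_{R,Q}(f_1)}$, which lie in $\M_{R,Q}$ by affiliation, promotes this to cyclicity of $\Om$ for $\M_{R,Q}$. The same argument applied at $W_0'$ gives $\Om$ cyclic for $\Mhat_{R,Q}\subset\M_{R,Q}'$, hence separating for $\M_{R,Q}$. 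Since $U(x,\La)\Om=\Om$ and $U$ is (anti-)unitary, both properties transfer to every $\A_R(W)$.

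Haag duality is the single step that requires genuine input beyond the above. The reflection covariance \eqref{eq:PhiJCovariance} together with $U(0,j)^{-1}=U(0,j)$ sends the generators of $\M_{R,Q}$ onto those of $\Mhat_{R,Q}$, so $\A_R(W_0')=\Mhat_{R,Q}$; Theorem \ref{theorem:StabilityOfModularData}(iii) then identifies this with $\M_{R,Q}'=\A_R(W_0)'$. For a general wedge $W=\La W_0+x$, conjugation by the (anti-)unitary $U(x,\La)$ commutes with taking commutants and sends $W_0'$ to $W'$, whence $\A_R(W')=U(x,\La)\A_R(W_0')U(x,\La)^{-1}=\A_R(W)'$. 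The real obstacle in the whole proposition — upgrading the wedge-locality inclusion $\Mhat_{R,Q}\subset\M_{R,Q}'$ to an equality — is precisely what Theorem \ref{theorem:StabilityOfModularData} supplies via modular theory; without that, no amount of bookkeeping would yield Haag duality.
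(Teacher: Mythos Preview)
Your proposal is correct and follows the paper's approach: the paper does not give a separate proof of this proposition but presents it explicitly as a summary of the paragraph immediately preceding it, which invokes precisely the ingredients you use (the transformation behaviour of $\phi_{R,Q}$ for well-definedness, Theorem~\ref{theorem:StabilityOfModularData}\,{\em iii)} for Haag duality, Reeh--Schlieder plus commutation for cyclicity and separation of $\Om$, and unitarity of $U$ to transport these to arbitrary wedges). Your write-up simply fills in the details the paper leaves implicit; in particular, your observation that Haag duality is the one step genuinely requiring the modular-theoretic input of Theorem~\ref{theorem:StabilityOfModularData} is exactly right.
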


As is well known, it is possible to extend a net with the above properties to arbitrary regions in Minkowski space by taking suitable intersections of the algebras $\A_R(W)$. This extension always preserves isotony, locality, and covariance. However, the algebras associated with bounded regions might be small or even trivial.

In the case at hand, the results found in the computation of the two-particle scattering states in the previous section imply that $\Om$ is not cyclic for algebras associated with bounded regions in dimension $d>1+1$. In two space-time dimensions, the situation is however different. This case will be discussed in the next section.

\section{Integrable models as deformations of free field theories}\label{section:IntegrableModels}

Up to this point, the dimension $d\geq1+1$ of spacetime did not play any role in our constructions. Now we will consider the special case $d=1+1$ of a two-dimensional Minkowski space. The matrix $Q$ appearing in the deformation two point function then has the form \eqref{eq:CovariantQ}
\begin{align}\label{eq:2x2Q}
	Q
	&=
	\la\,
	\left(
		\begin{array}{cc}
			0&1\\
			1&0\\
		\end{array}
	\right)
	\,,\qquad \la\in\Rl\,.
\end{align}
In two dimensions, it is convenient to parametrize the upper mass shell of mass $m>0$ by the rapidity $\te\in\Rl$ according to $p(\te):=m(\cosh\te,\,\sinh\te)$.
Inserting this parametrization into the deformation two point function \eqref{eq:Rho2FromRAndQ} yields
\begin{align}
	\rho_2(p(\te_1),p(\te_2))
	&=
	R(-p(\te_1)\cdot Qp(\te_2))
	=
	R(\la m^2\sinh(\te_1-\te_2))
	\,,\qquad
	\te_1,\te_2\in\Rl\,,
\end{align}
and we denote the square of this function by
\begin{align}\label{eq:S}
	S_\la:\Rl\to\Cl\,,\qquad
	S_\la(\te)&:= R(\la m^2\sinh\te)^2\,.
\end{align}
As mentioned earlier, $R$ is analytic on the upper half plane because of the half-sided support of its Fourier transform. As the hyperbolic sine is an entire function mapping the strip $S(0,\pi):=\{\zeta\in\Cl\,:\,0<\im\,\zeta<\pi\}$ onto the upper half plane, this implies that $S_\la$, $\la\geq0$, extends to an analytic function on $S(0,\pi)$, with distributional boundary values at $\Rl$ and $\Rl+i\pi$. From the properties \eqref{eq:R-Relations} of $R$ and $\sinh$, it is obvious that
\begin{align}\label{eq:S-Relations}
	\overline{S_\la(\te)}
	=
	S_\la(\te)^{-1}
	=
	S_\la(-\te)
	=
	S_\la(\te+i\pi)
	\,,\qquad
	\la,\te\in\Rl\,.
\end{align}
These relations are well known from the analysis of completely integrable quantum field theories with factorizing S-matrices on two-dimensional Minkowski space \cite{AbdallaAbdallaRothe:1991}, where they express the unitarity, hermitian analyticity, and crossing symmetry \cite{BabujianFoersterKarowski:2006} of a two-particle S-matrix of such a model. Here these properties show up as a consequence of our deformation construction.

Not only the typical relations of a factorizing S-matrix appear here, but also the characteristic algebraic structure known as the Zamolodchikov-Faddeev algebra \cite{ZamolodchikovZamolodchikov:1979,Faddeev:1984}: For the rapidity space creation/annihilation operators $z_\la(\te):=a_{R,Q}(p(\te))$, $\zd_\la(\te):=\ad_{R,Q}(p(\te))$, the relations \eqref{eq:RTwistedCCR} (with both $Q$ and $Q'$ replaced by \eqref{eq:2x2Q}) read
\begin{align*}
	z_\la(\te_1)z_\la(\te_2)
	&=
	S_\la(\te_1-\te_2)\,z_\la(\te_2)z_\la(\te_1)
	\\
	\zd_\la(\te_1) \zd_\la(\te_2)
	&=
	S_\la(\te_1-\te_2)\,\zd_\la(\te_2)\zd_\la(\te_1)
	\\
	z_\la(\te_1)\zd_\la(\te_2)
	&=
	S_\la(\te_2-\te_1)\zd_\la(\te_2) z_\la(\te_1)
	+
	\delta(\te_1-\te_2)\cdot 1
	\,.
\end{align*}
This is precisely the Zamolodchikov-Faddeev algebra. In  the context of factorizing S-matrices, it is mostly used as an auxiliary structure to organize $n$-particle scattering states (see, for example, \cite{CastroAlvaredo:2001}). However, it is also possible to take it as a starting point for the construction of model theories.

This latter point of view has been taken by Schroer, who suggested to use the fields $\phi_\la(x):=\int d\te\,(e^{ip(\te)\cdot x}\zd_\la(\te)+e^{-ip(\te)\cdot x}z_\la(\te))$ as wedge-local polarization-free generators for constructing quantum field theories \cite{Schroer:1997-1}. Although this construction was originally formulated independently of deformation ideas, the same fields also appear in the present setting, and coincide with the deformed fields $\phi_{R,Q}$ from the previous section. In the two-dimensional context, their properties as listed in Proposition \ref{proposition:PhiQBasicProperties} were known already in case the scattering function $S$ satisfies \eqref{eq:S-Relations} and is analytic and bounded on the strip $S(0,\pi)$ \cite{Lechner:2003}.

Full-fledged quantum field theories based on these deformed fields have been constructed in the framework of algebraic quantum field theory \cite{Haag:1996}: After passing from the wedge-local fields to corresponding nets of von Neumann algebras, operator-algebraic techniques become available for the analysis of the local observable content of these models \cite{BuchholzLechner:2004}. We recall from \cite{Lechner:2006,Lechner:2008} that if $S$ is {\em regular} in the sense that it has a bounded analytic extension to the strip $\{\zeta\in\Cl\,:\,-\eps<\im\,\zeta<\pi+\eps\}$ for some $\eps>0$, then the quantum field theory generated by $\phi_\la$ contains observables localized in double cones, at least for the radius of the double cone above some minimal size. In fact, there exist so many such local observables that they generate dense subspaces from the vacuum, as it is typical in quantum field theory (Reeh-Schlieder property). Also all other standard properties of quantum field theory are satisfied by these models, and the factorizing S-matrix with scattering function $S$ can be recovered from their $n$-particle collision states \cite{Lechner:2008}. We note this relation between multiplicative deformations and integrable models as the following theorem.

\begin{theorem}
	On two-dimensional Minkowski space, every integrable quantum field theory with scattering function $S$ of the form \eqref{eq:S} arises from a free field theory by a (multiplicative) deformation. If $S$ is regular, then the deformed theory is local in the sense that the vacuum is cyclic for all observable algebras associated with double cones above a minimal size \cite[Thm. 5.6]{Lechner:2008}.
\end{theorem}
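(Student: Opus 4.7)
The plan is to verify, in two steps, that the wedge-local model produced by the deformation construction of Sections \ref{section:FockSpace}--\ref{section:modular} coincides with the integrable quantum field theory determined by the scattering function $S_\la$, and then to import the locality analysis from prior work.

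First, I would read the theorem as saying: given any scattering function of the form \eqref{eq:S}, the associated integrable model coincides with the net $\A_R$ built in \eqref{eq:ARW}. Since the one-particle space, vacuum, and (anti-)unitary Poincaré representation used in the algebraic construction of factorizing-S-matrix models (cf.\ \cite{Lechner:2003,Lechner:2008}) are precisely those of the free scalar field of mass $m$, and since the deformation of Sections \ref{section:FockSpace}--\ref{section:modular} starts from exactly this free-field GNS data, the two constructions live on the same Fock space with the same $U$. It then suffices to match the generating wedge-local fields. Going to rapidity coordinates, the operators $z_\la(\te),z_\la^\dagger(\te)$ introduced just above the theorem satisfy the Zamolodchikov--Faddeev exchange relations with two-particle function $S_\la$; these are the defining relations for Schroer's polarization-free generators of the integrable model. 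Consequently $\phi_{R,Q}$ agrees (as an operator-valued distribution) with the wedge-local field used in \cite{Lechner:2003,Lechner:2008}, and by \eqref{eq:ARW} the whole net $\A_R$ coincides with the known integrable-model net.

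Second, for the claim about cyclicity on double cones I would simply reduce to \cite[Thm.~5.6]{Lechner:2008}. Having identified $\A_R$ with the integrable-model net, the regularity hypothesis on $S_\la$ is precisely the input of that theorem, which establishes the modular nuclearity condition for the inclusion $\A_R(W)\subset\A_R(W+x)$ when $x\in W_0$ is sufficiently large. Modular nuclearity then implies, through Buchholz--Lechner's split-property argument, that intersections of the form $\A_R(W_1)\cap\A_R(W_2)'$ for $W_2\subset W_1$ far enough apart are nontrivial, and in fact the vacuum is cyclic for them. Since these intersections are exactly the algebras attached to double cones of radius above the minimal size, the assertion follows.

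The identification step is essentially bookkeeping: translating Poincaré covariance \eqref{eq:PhiQCovariance}, wedge-locality (Proposition \ref{proposition:PhiQBasicProperties} \emph{iv)}), and the exchange relations \eqref{eq:RTwistedCCR} into the rapidity language used in the integrable-model literature, then noting that the reflection $-Q\leftrightarrow Q$ corresponds to $W_0\leftrightarrow W_0'$ as in \eqref{eq:ReflectionJOfRQProduct}. The genuinely hard input is the nontriviality of local observables for regular $S$; I do not attempt to reprove it here, but rather defer to \cite{Lechner:2008}, where the argument rests on rather delicate trace-norm bounds for the modular operator of $\A_R(W)$ restricted to $n$-particle subspaces. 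All other ingredients are either immediate from the preceding sections or a direct comparison of algebraic structures.
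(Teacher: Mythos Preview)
Your proposal is correct and follows exactly the route taken in the paper: the theorem is stated as a summary of the preceding discussion, which identifies the deformed fields $\phi_{R,Q}$ with Schroer's wedge-local generators via the Zamolodchikov--Faddeev relations in rapidity variables, and then invokes \cite[Thm.~5.6]{Lechner:2008} for the cyclicity on double cones under the regularity assumption. There is no separate formal proof in the paper beyond this identification and citation, so your outline matches it essentially one-to-one.
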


Although the structure of integrable quantum field theories is quite simple, the important message for the deformation technique presented here is that this method is capable of deforming covariant local {\em free} quantum field theories to covariant local {\em interacting} quantum field theories. For the deformed models to contain sufficiently many local observables, we only have to select the deformation function $R$ in such a way that $S$ \eqref{eq:S} is regular. For example, this is the case for the finite Blaschke products
\begin{align}
	R(a)
	=
	\prod_{k=1}^N
	\frac{z_k-a}{z_k+a}
	\,,
\end{align}
where the zeros $z_1,...,z_N$ lie in the upper half plane and occur in pairs $z_k,-\overline{z_k}$ \eqref{eq:RExampleFunctions}.

\section{Conclusions}\label{section:Conclusions}

In this paper we have established a family of deformations of quantum field theories, leading to new models with non-trivial interaction in any number of space-time dimensions $d$. This result supports the general deformation approach, and shows that it is possible to use deformation methods for obtaining interacting local field theories from models without interaction. As interacting quantum field theories in physical spacetime must necessarily involve particle production processes \cite{Aks:1965}, and particle production was ruled out here because of the relatively simple form of the multiplicative deformations, the obtained models are not yet physically realistic. In two space-time dimensions, they have the structure of integrable models, and there are indications that the family of integrable models which can be realized in this manner is actually much larger\footnote{S.~Alazzawi, C.~Schützenhofer, work in progress.}. For models on higher-dimensional Minkowski space, however, one needs to allow for particle production processes already on the level of the deformation maps, and replace the multiplicative deformations by more general integral operators \eqref{eq:RhoNMKernels}. Apart from these modifications, it seems to be possible to use the same approach as presented here to realize also interactions with momentum transfer and particle production by deformation methods.

From a structural point of view, it is desirable to uncouple the deformations from the specific form of the Borchers-Uhlmann tensor algebra. This has been achieved in the case of the warped convolutions \cite{BuchholzSummers:2008}, which are formulated in such a way that they are applicable to any vacuum quantum field theory \cite{BuchholzLechnerSummers:2010}. Such an operator-algebraic reformulation of the deformations studied here is currently under investigation.

Regarding the operator-algebraic structure, we have shown that the modular data of the von Neumann algebras associated with wedge-local deformed quantum fields represented in compatible states are identical to those in the undeformed theory. This is essentially a consequence of the compatibility of the deformations with the ${}^*$-involution and unit element of $\BU$, and can therefore be expected to be a generic feature of deformations of quantum field theories. This feature connects our deformation approach to another approach to the construction of quantum field theories, based on the inverse problem in modular theory \cite{LeitzMartiniWollenberg:2000}. Furthermore, also the root of the S-matrix plays a role in both, our present setting, where it appears in the deformation two-point function, and in the context of inverse problems in modular theory, where it is used to identify modular conjugations  \cite{Wollenberg:1992}. These interesting connections require further investigation, which will be presented elsewhere.

\subsection*{Acknowledgements}

I enjoyed helpful discussions in the Vienna deformation group, in particular with Sabina Alazzawi, Jan Schlemmer, Jakob Yngvason, and Sergio Yuhjtman. Many thanks go also to Stefan Waldmann for informing me about the rigidity of tensor algebras.




\begin{thebibliography}{DLM11}
\providecommand{\href}[1]{\texttt{#1}{link}}

\bibitem[AAR91]{AbdallaAbdallaRothe:1991}
E.~Abdalla, C.~Abdalla, and K.D. Rothe.
\newblock {\em {Non-perturbative methods in 2-dimensional quantum field
  theory}}.
\newblock World Scientific, 1991.

\bibitem[{\AA}ks65]{Aks:1965}
S.~{\AA}ks.
\newblock {Proof that scattering implies production in quantum field theory}.
\newblock \href{http://dx.doi.org/10.1063/1.1704305}{{\em J. Math. Phys.} {\bf
  6} (1965)  516--532}.

\bibitem[Ara99]{Araki:1999}
H.~Araki.
\newblock {\em {Mathematical Theory of Quantum Fields}}.
\newblock {Int. Series of Monographs on Physics}. Oxford University Press,
  Oxford, 1999.

\bibitem[BBS01]{BorchersBuchholzSchroer:2001}
H.-J. Borchers, D.~Buchholz, and B.~Schroer.
\newblock {Polarization-free generators and the {S}-matrix}.
\newblock \href{http://dx.doi.org/10.1007/s002200100411}{{\em Commun. Math.
  Phys.} {\bf 219} (2001)  125--140},
  \href{http://arxiv.org/abs/hep-th/0003243}{\tt{[open access]}}.

\bibitem[BFK06]{BabujianFoersterKarowski:2006}
H.~M. Babujian, Angela Foerster, and Michael Karowski.
\newblock {The Form Factor Program: a Review and New Results - the Nested SU(N)
  Off-Shell Bethe Ansatz}.
\newblock
  \href{http://dx.doi.org/http://dx.doi.org/10\%2E3842/SIGMA\%2E2006\%2E082}{{%
\em SIGMA} {\bf 2} (2006)  082},
  \href{http://arxiv.org/abs/hep-th/0609130}{\tt{[open access]}}.

\bibitem[BGL02]{BrunettiGuidoLongo:2002}
R.~Brunetti, D.~Guido, and R.~Longo.
\newblock {Modular localization and Wigner particles}.
\newblock \href{http://dx.doi.org/10.1142/S0129055X02001387}{{\em Rev. Math.
  Phys.} {\bf 14} (2002)  759--786},
  \href{http://arxiv.org/abs/math-ph/0203021}{\tt{[open access]}}.

\bibitem[BL04]{BuchholzLechner:2004}
D.~Buchholz and G.~Lechner.
\newblock {Modular nuclearity and localization}.
\newblock \href{http://dx.doi.org/10.1007/s00023-004-0190-8}{{\em Annales Henri
  Poincar{{{{{{{\'e}}}}}}}} {\bf 5} (2004)  1065--1080},
  \href{http://arxiv.org/abs/math-ph/0402072}{\tt{[open access]}}.

\bibitem[BLS10]{BuchholzLechnerSummers:2010}
D.~Buchholz, G.~Lechner, and S.~J. Summers.
\newblock {Warped Convolutions, Rieffel Deformations and the Construction of
  Quantum Field Theories}.
\newblock \href{http://dx.doi.org/10.1007/s00220-010-1137-1}{{\em Commun. Math.
  Phys.} (2010)  }, \href{http://arxiv.org/abs/1005.2656}{\tt{[open access]}}.

\bibitem[Bor62]{Borchers:1962}
H.-J. Borchers.
\newblock {On Structure of the Algebra of Field Operators}.
\newblock \href{http://dx.doi.org/10.1007/BF02745645}{{\em Nuovo Cimento} {\bf
  24} (1962)  }.

\bibitem[Bor92]{Borchers:1992}
H.-J. Borchers.
\newblock {The CPT theorem in two-dimensional theories of local observables}.
\newblock \href{http://dx.doi.org/10.1007/BF02099011}{{\em Commun. Math. Phys.}
  {\bf 143} (1992)  315--332}.

\bibitem[BS07]{BuchholzSummers:2007}
D.~Buchholz and S.~J. Summers.
\newblock {String- and brane-localized fields in a strongly nonlocal model}.
\newblock \href{http://dx.doi.org/10.1088/1751-8113/40/9/019}{{\em J. Phys.}
  {\bf A40} (2007)  2147--2163},
  \href{http://arxiv.org/abs/math-ph/0512060}{\tt{[open access]}}.

\bibitem[BS08]{BuchholzSummers:2008}
D.~Buchholz and S.~J. Summers.
\newblock {Warped Convolutions: A Novel Tool in the Construction of Quantum
  Field Theories}.
\newblock In E.~Seiler and K.~Sibold, editors, {\em {Quantum Field Theory and
  Beyond: Essays in Honor of Wolfhart Zimmermann}}, pages 107--121. World
  Scientific, 2008.
\newblock \href{http://arxiv.org/abs/0806.0349}{\tt{[open access]}}.

\bibitem[Buc90]{Buchholz:1990-1}
D.~Buchholz.
\newblock {On quantum fields that generate local algebras}.
\newblock \href{http://dx.doi.org/10.1063/1.528680}{{\em J. Math. Phys.} {\bf
  31} (1990)  1839--1846}.

\bibitem[BW75]{BisognanoWichmann:1975}
J.~J Bisognano and E.~H. Wichmann.
\newblock {On the Duality Condition for a Hermitian Scalar Field}.
\newblock {\em J. Math. Phys.} {\bf 16} (1975)  985--1007.

\bibitem[BW92]{BaumgrtelWollenberg:1992}
H.~Baumg{\"a}rtel and M.~Wollenberg.
\newblock {\em {Causal Nets of Operator Algebras}}.
\newblock Akademie Verlag, 1992.

\bibitem[BY90]{BorchersYngvason:1990}
H.-J. Borchers and J.~Yngvason.
\newblock {Positivity of Wightman functionals and the existence of local nets}.
\newblock \href{http://dx.doi.org/10.1007/BF02104505}{{\em Commun. Math. Phys.}
  {\bf 127} (1990)  607},
  \href{http://projecteuclid.org/euclid.cmp/1104180223}{\tt{[open access]}}.

\bibitem[BZ63]{BorchersZimmermann:1963}
H.-J. Borchers and W.~Zimmermann.
\newblock {On the Self-Adjointness of Field Operators}.
\newblock \href{http://dx.doi.org/10.1007/BF02821677}{{\em Nuovo Cimento} {\bf
  31} (1963)  1047--1059}.

\bibitem[CA01]{CastroAlvaredo:2001}
O.~Castro-Alvaredo.
\newblock {\em {Bootstrap Methods in 1+1-Dimensional Quantum Field Theories:
  the Homogeneous Sine-Gordon Models }}.
\newblock PhD thesis, Santiago de Compostela, 2001.
\newblock
  \href{http://www.staff.city.ac.uk/o.castro-alvaredo/heptese.pdf}{\tt{[open
  access]}}.

\bibitem[DLM11]{DappiaggiLechnerMorfaMorales:2010}
C.~Dappiaggi, G.~Lechner, and E.~Morfa-Morales.
\newblock {Deformations of quantum field theories on spacetimes with Killing
  fields}.
\newblock {\em Commun. Math. Phys.} (2011)  ,
  \href{http://arxiv.org/abs/1006.3548}{\tt{[open access]}}.

\bibitem[DSW86]{DriesslerSummersWichmann:1986}
W.~Driessler, S.~J. Summers, and E.~H. Wichmann.
\newblock {On the Connection between Quantum Fields and von Neumann Algebras of
  Local Operators}.
\newblock \href{http://dx.doi.org/10.1007/BF01212341}{{\em Commun. Math. Phys.}
  {\bf 105} (1986)  49--84}.

\bibitem[DT10]{DybalskiTanimoto:2010}
W.~Dybalski and Y.~Tanimoto.
\newblock
\newblock {Asymptotic completeness in a class of massless relativistic quantum
  field theories }{\em Preprint} (June, 2010)  ,
  \href{http://arxiv.org/abs/1006.5430}{\tt{[open access]}}.

\bibitem[Fad84]{Faddeev:1984}
L.~D. Faddeev.
\newblock {\em {Quantum completely integrable models in field theory}},
  volume~1 of {\em {Mathematical Physics Reviews}}, pages 107--155.
\newblock 1984.
\newblock In Novikov, S.p. ( Ed.): Mathematical Physics Reviews, Vol. 1,
  107-155.

\bibitem[Ger64]{Gerstenhaber:1964}
M.~Gerstenhaber.
\newblock {On the deformation of rings and algebras}.
\newblock {\em Ann. Math.} {\bf 79} (1964)  59--103,
  \href{http://www.jstor.org/stable/1970484}{\tt{[open access]}}.

\bibitem[GL07]{GrosseLechner:2007}
H.~Grosse and G.~Lechner.
\newblock {Wedge-Local Quantum Fields and Noncommutative {M}inkowski Space}.
\newblock \href{http://dx.doi.org/10.1088/1126-6708}{{\em JHEP} {\bf 11} (2007)
   012}, \href{http://arxiv.org/abs/0706.3992}{\tt{[open access]}}.

\bibitem[GL08]{GrosseLechner:2008}
H.~Grosse and G.~Lechner.
\newblock {Noncommutative Deformations of Wightman Quantum Field Theories}.
\newblock \href{http://dx.doi.org/10.1088/1126-6708}{{\em JHEP} {\bf 09} (2008)
   131}, \href{http://arxiv.org/abs/0808.3459}{\tt{[open access]}}.

\bibitem[Haa96]{Haag:1996}
R.~Haag.
\newblock {\em {Local Quantum Physics - Fields, Particles, Algebras}}.
\newblock Springer, 2 edition, 1996.

\bibitem[Hep65]{Hepp:1965}
K.~Hepp.
\newblock {On the connection between {W}ightman and {LSZ} quantum field
  theory}.
\newblock \href{http://dx.doi.org/10.1007/BF01646494}{{\em Commun. Math. Phys.}
  {\bf 1} (1965)  95--111}.

\bibitem[Jos65]{Jost:1965}
R.~Jost.
\newblock {\em {The General Theory of Quantized Fields}}.
\newblock 1965.

\bibitem[Lec03]{Lechner:2003}
G.~Lechner.
\newblock {Polarization-free quantum fields and interaction}.
\newblock \href{http://dx.doi.org/10.1023/A:1025772304804}{{\em Lett. Math.
  Phys.} {\bf 64} (2003)  137--154},
  \href{http://arxiv.org/abs/hep-th/0303062}{\tt{[open access]}}.

\bibitem[Lec05]{Lechner:2005}
G.~Lechner.
\newblock {On the existence of local observables in theories with a factorizing
  S-matrix}.
\newblock \href{http://dx.doi.org/10.1088/0305-4470/38/13/015}{{\em J. Phys.}
  {\bf A38} (2005)  3045--3056},
  \href{http://arxiv.org/abs/math-ph/0405062}{\tt{[open access]}}.

\bibitem[Lec06]{Lechner:2006}
G.~Lechner.
\newblock {\em {On the construction of quantum field theories with factorizing
  S-matrices}}.
\newblock PhD thesis, University of G{\"o}ttingen, 2006.
\newblock \href{http://arxiv.org/abs/math-ph/0611050}{\tt{[open access]}}.

\bibitem[Lec08]{Lechner:2008}
G.~Lechner.
\newblock {Construction of Quantum Field Theories with Factorizing
  {S}-Matrices}.
\newblock \href{http://dx.doi.org/10.1007/s00220-007-0381-5}{{\em Commun. Math.
  Phys.} {\bf 277} (2008)  821--860},
  \href{http://arxiv.org/abs/math-ph/0601022}{\tt{[open access]}}.

\bibitem[LMW00]{LeitzMartiniWollenberg:2000}
M.~Leitz-Martini and M.~Wollenberg.
\newblock {Notes on Modular Conjugations of von Neumann Factors}.
\newblock {\em Z. Anal. Anw.} (2000) no.~19, 13--22,
  \href{http://www.emis.de/journals/ZAA/1901/2.html}{\tt{[open access]}}.

\bibitem[LR04]{LongoRehren:2004}
R.~Longo and K.-H. Rehren.
\newblock {Local fields in boundary conformal {QFT}}.
\newblock \href{http://dx.doi.org/10.1142/S0129055X04002163}{{\em Rev. Math.
  Phys.} {\bf 16} (2004)  909},
  \href{http://arxiv.org/abs/math-ph/0405067}{\tt{[open access]}}.

\bibitem[LW10]{LongoWitten:2010}
R.~Longo and E.~Witten.
\newblock
\newblock {An Algebraic Construction of Boundary Quantum Field Theory }{\em
  Preprint} (April, 2010)  , \href{http://arxiv.org/abs/1004.0616}{\tt{[open
  access]}}.

\bibitem[MSY06]{MundSchroerYngvason:2006}
J.~Mund, B.~Schroer, and J.~Yngvason.
\newblock {String-localized quantum fields and modular localization}.
\newblock \href{http://dx.doi.org/10.1007/s00220-006-0067-4}{{\em Commun. Math.
  Phys.} {\bf 268} (2006)  621--672},
  \href{http://arxiv.org/abs/math-ph/0511042}{\tt{[open access]}}.

\bibitem[Mun10]{Mund:2010}
J.~Mund.
\newblock
\newblock {An Algebraic Jost-Schroer Theorem for Massive Theories }{\em
  Preprint} (December, 2010)  , \href{http://arxiv.org/abs/1012.1454}{\tt{[open
  access]}}.

\bibitem[Rie92]{Rieffel:1992}
M.~A. Rieffel.
\newblock {\em {Deformation Quantization for Actions of $R^d$}}, volume 106 of
  {\em {Memoirs of the Amerian Mathematical Society}}.
\newblock American Mathematical Society, Providence, Rhode Island, 1992.

\bibitem[Rie93]{Rieffel:1993}
M.~A. Rieffel.
\newblock {Compact quantum groups associated with toral subgroups}.
\newblock {\em Cont. Math.} {\bf 145} (1993)  ,
  \href{http://www.math.berkeley.edu/~rieffel/papers/compact.pdf}{\tt{[open
  access]}}.

\bibitem[RS75]{ReedSimon:1975}
M.~Reed and B.~Simon.
\newblock {\em {Methods of Modern Mathematical Physics II - Fourier Analysis}}.
\newblock Academic Press, 1975.

\bibitem[Sch97]{Schroer:1997-1}
B.~Schroer.
\newblock {Modular localization and the bootstrap-formfactor program}.
\newblock {\em Nucl. Phys.} {\bf B499} (1997)  547--568,
  \href{http://arxiv.org/abs/hep-th/9702145v1}{\tt{[open access]}}.

\bibitem[SW64]{StreaterWightman:1964}
R.~F. Streater and A.~Wightman.
\newblock {\em {PCT, Spin and Statistics, and All That}}.
\newblock Benjamin-Cummings, Reading, MA, 1964.

\bibitem[SW00]{SchroerWiesbrock:2000-1}
B.~Schroer and H.~W. Wiesbrock.
\newblock {Modular constructions of quantum field theories with interactions}.
\newblock {\em Rev. Math. Phys.} {\bf 12} (2000)  301--326,
  \href{http://arxiv.org/abs/hep-th/9812251}{\tt{[open access]}}.

\bibitem[Tre67]{Treves:1967}
F.~Treves.
\newblock {\em {Topological vector spaces, distributions, and kernels}}.
\newblock Academic Press, 1967.

\bibitem[TW97]{ThomasWichmann:1997}
L.~J. Thomas and E.~H. Wichmann.
\newblock {On the causal structure of {M}inkowski space-time}.
\newblock \href{http://dx.doi.org/10.1063/1.531954}{{\em J. Math. Phys.} {\bf
  38} (1997)  5044--5086}.

\bibitem[Uhl62]{Uhlmann:1962}
A.~Uhlmann.
\newblock {{\"U}ber die Definition der Quantenfelder nach Wightman und Haag}.
\newblock {\em Wissenschaftliche Zeitschrift der Karl-Marx-Universit{\"a}t
  Leipzig} {\bf 2} (1962)  213--217.

\bibitem[Wol92]{Wollenberg:1992}
M.~Wollenberg.
\newblock {Notes on Pertubations of Causal Nets of Operator Algebras}.
\newblock {\em SFB 288 Preprint, N2. 36} (1992)  unpublished.

\bibitem[Yng81]{Yngvason:1981}
J.~Yngvason.
\newblock {Translationally invariant states and the spectrum ideal in the
  algebra of test functions for quantum fields}.
\newblock \href{http://dx.doi.org/10.1007/BF01209075}{{\em Commun. Math. Phys.}
  {\bf 81} (1981)  401},
  \href{http://projecteuclid.org/euclid.cmp/1103920325}{\tt{[open access]}}.

\bibitem[Yng84]{Yngvason:1984}
J.~Yngvason.
\newblock {On the Locality Ideal in the Algebra of Test Functions for Quantum
  Fields}.
\newblock \href{http://dx.doi.org/10.2977/prims/1195180882}{{\em Publ. RIMS,
  Kyoto University} (1984) no.~20, 1063--1081}.

\bibitem[ZZ79]{ZamolodchikovZamolodchikov:1979}
A.~B. Zamolodchikov and A.~B. Zamolodchikov.
\newblock {Factorized {S}-matrices in two dimensions as the exact solutions of
  certain relativistic quantum field models}.
\newblock \href{http://dx.doi.org/10.1016/0003-4916(79)90391-9}{{\em Annals
  Phys.} {\bf 120} (1979)  253--291}.

\end{thebibliography}
\end{document}